\newtheorem{theorem}{Theorem}[section]
\newtheorem{corollary}[theorem]{Corollary}
\newtheorem{lemma}[theorem]{Lemma}
\newtheorem{claim}[theorem]{Claim}
\newtheorem{assume}[theorem]{Assumption}
\newtheorem{remark}[theorem]{Remark}
\newtheorem{definition}[theorem]{Definition}
\newtheorem{proposition}[theorem]{Proposition}
\renewcommand{\vec}{\mathbf }
\renewcommand{\cite}{\citet}
\def\@fnsymbol#1{\ensuremath{\ifcase#1\or *\or **\or
   \mathsection\or \mathparagraph\or \|\or **\or \else\@ctrerr\fi}}
\definecolor{cadmiumgreen}{rgb}{0.2, 0.62, 0.54}
\title{Incentive Compatibility in the Auto-bidding World}
\def\@fnsymbol#1{\ensuremath{\ifcase#1\or \dagger\or *\or
   \mathsection\or \mathparagraph\or \|\or **\or \dagger\dagger
   \or \ddagger\ddagger \else\@ctrerr\fi}}
\author{Yeganeh Alimohammadi\thanks{Stanford University, \texttt{yeganeh@stanford.edu}}, Aranyak Mehta\thanks{Google, \texttt{\{aranyak,perlroth\}@google.com}}\; and  Andres Perlroth\footnotemark[2]}
\begin{document}

\begin{titlepage}

\maketitle
\begin{abstract}

Auto-bidding has recently become a popular feature in ad auctions. This feature enables advertisers to simply provide high-level constraints and goals to an automated agent, which optimizes their auction bids on their behalf. These auto-bidding intermediaries interact in a decentralized manner in the underlying auctions, leading to new interesting practical and theoretical questions on auction design, for example, in understanding the bidding equilibrium properties between auto-bidder intermediaries for different auctions. In this paper, we examine the effect of different auctions on the incentives of advertisers to report their constraints to the auto-bidder intermediaries. More precisely, we study whether canonical auctions such as first price auction (FPA) and second price auction (SPA) are {\em auto-bidding incentive compatible (AIC)}: whether an advertiser can gain by misreporting their constraints to the autobidder.

We consider value-maximizing advertisers in two important settings: when they have a budget constraint and when they have a target cost-per-acquisition constraint. The main result of our work is that for both settings, FPA and SPA are not AIC. This contrasts with FPA being AIC when auto-bidders are constrained to bid using a (sub-optimal) uniform bidding policy. 
We further extend our main result and show that any (possibly randomized) auction that is truthful (in the classic profit-maximizing sense), scale-invariant and symmetric is not AIC. Finally, to complement our findings, we provide sufficient market conditions for FPA and SPA to become AIC for two advertisers. These conditions require advertisers' valuations to be well-aligned. This suggests that when the competition is intense for all queries, advertisers have less incentive to misreport their constraints. 

From a methodological standpoint, we develop a novel continuous model of queries. This model provides tractability to study equilibrium with auto-bidders, which contrasts with the standard discrete query model, which is known to be hard. Through the analysis of this model, we uncover a surprising result: in auto-bidding with two advertisers, FPA and SPA are auction equivalent.
\end{abstract}
\end{titlepage}


\section{Introduction}

Auto-bidding has become a popular tool in modern online ad auctions, allowing advertisers to set up automated bidding strategies to optimize their goals subject to a set of constraints. By using algorithms to adjust the bid for each query, auto-bidding offers a more efficient and effective alternative to the traditional fine-grained bidding approach, which requires manual monitoring and adjustment of the bids.

There are three main components in the auto-bidding paradigm: 1) the advertisers who provide high-level constraints to the auto-bidders,  2) the auto-bidder agents who bid -- in a decentralized manner -- on behalf of each advertiser to maximize the advertiser's value subject to their constraints, and 3) the query-level auctions where queries are sold (see Figure~\ref{fig:autobidding}).

Current research has made significant progress in studying the interactions of the second and third components in the auto-bidding paradigm, particularly in understanding equilibrium properties (e.g., welfare and revenue) between the auto-bidders intermediaries for different auction rules \citep{Aggarwal19, BalseiroRobustAuctionDesign, deng2021towards, mehta2021, liaw2022}. There is also work on mechanism design for this setting in more generality, i.e., between the advertisers and the auctioneer directly abstracting out the second component \citep{balseiro2021landscape, BalseiroOptimalClipping,GolrezaiLP21}.

Our work, instead, examines the relation between {value-maximizing}
advertisers, who maximize the value they obtain subject to a payment constraint, 
and the other two components of the auto-bidding paradigm. 
More precisely, we study the impact of different auction rules on the incentives of advertisers to report their constraints to the auto-bidder intermediaries. We specifically ask whether canonical auctions such as first price auction (FPA), second price auction (SPA), and general truthful auctions are {\em auto-bidding incentive compatible} (AIC) - in other words, can advertisers gain by misreporting their constraints to the auto-bidder?

We consider value-maximizing advertisers in two important settings: when they have a budget constraint and when they have a target cost-per-acquisition (tCPA) constraint\footnote{The former is an upper bound on the total spend, and the latter is an upper bound on the average spend per acquisition (sale). Our results clearly hold for more general auto-bidding features, such as target return on ad-spend (tRoAS) where the constraint is an upper bound on the average spend per value generated. }. The main result of our work is that for both settings, FPA and SPA are not AIC. This contrasts with FPA being AIC when auto-bidders are constrained to bid using a (sub-optimal) uniform bidding policy.
 We further generalize this surprising result and show that any (possibly randomized) truthful auction having a scale invariance and symmetry property is also not AIC.  We complement our result by providing sufficient market conditions for FPA and SPA to become AIC for two advertisers.
These conditions require advertisers' valuations to be well-aligned. This suggests that advertisers have less incentive to misreport their constraints when the competition is intense for all queries.

In our model, each advertiser strategically reports a constraint (either a tCPA or a budget) to an auto-bidder agent, which bids optimally on their behalf in each of the queries. Key in our model, we consider a two-stage game where first, advertisers submit constraints to the auto-bidders and, in the subgame, auto-bidders reach a bidding equilibrium across all query auctions. 
Thus, the whole bidding subgame equilibrium can change when an advertiser deviates and reports a different constraint to its auto-bidder.\footnote{This two-stage model captures the idea that auto-bidding systems rapidly react to any change in the auction. Hence, if there is any change in the bidding landscape, auto-bidders quickly converge to a new equilibrium.} In this context, an auction rule is called auto-bidding incentive compatible (AIC) if, for all equilibria, it is optimal for the advertiser to report their constraint to the auto-bidder.

\tikzstyle{b} = [rectangle, draw, fill=blue!20, node distance=.2cm and 2.2cm, text width=3.8em, text centered, rounded corners, minimum height=4em, thick]
\tikzstyle{bb} = [rectangle, draw, fill=blue!20, node distance=.8cm, text width=3.5em, text centered, rounded corners, minimum height=13em, thick]
\tikzstyle{c} = [rectangle, draw, inner sep=0.5cm, dashed]
\tikzstyle{l} = [draw, -latex',thick]
\begin{figure}
\centering
\begin{tikzpicture}[auto]
    \node [b] (ad2) {\footnotesize Advertiser};
       \node [b,below=of ad2](ad3){\footnotesize Advertiser};
       \node [b,above=of ad2](ad1){\footnotesize Advertiser};
       \node [b,right=of ad2](ab2){\footnotesize Auto-bidder Agent};
       \node [b,right=of ad3](ab3){\footnotesize Auto-bidder Agent};
       \node [b,right=of ad1](ab1){\footnotesize Auto-bidder Agent};
       \node [bb, right=of ab2](auction){\footnotesize Auction per query};

    \path [l] (ad1) -- (ab1);
    \path [l] (ad2) -- (ab2);
    \path [l] (ad3) -- (ab3);
        \draw (ad1) -- (ab1) node  [midway, above, sloped,
        align=left, ] (TextNode) {\footnotesize  constraints};
            \draw (ad3) -- (ab3) node  [midway, above, sloped,
            align=left, ](TextNode) {\footnotesize constraints};
    \draw (ad2) -- (ab2) node [midway, above, sloped,
    align=left, ](t) {\footnotesize constraints};
\path [l] (ab1) -- (ab1-|auction.west);
    \path [l] (ab2) -- (ab2-|auction.west);
    \path [l] (ab3) -- (ab3-|auction.west);
        \draw (ab1) -- (ab1-|auction.west) node [midway, above, sloped] (TextNode) {\footnotesize bids};    
        \draw (ab2) -- (ab2-|auction.west) node [midway, above, sloped] (TextNode) {\footnotesize bids};    
        \draw (ab3) -- (ab3-|auction.west) node [midway, above, sloped] (TextNode) {\footnotesize bids};
\end{tikzpicture}
\caption{The Auto-bidding Process: Advertisers submit constraints and receive query allocations with specified costs as output. Inside the auto-bidding feature, each advertiser has an agent that optimizes bidding profile within each advertiser's constraints.} \label{fig:autobidding}
\end{figure}
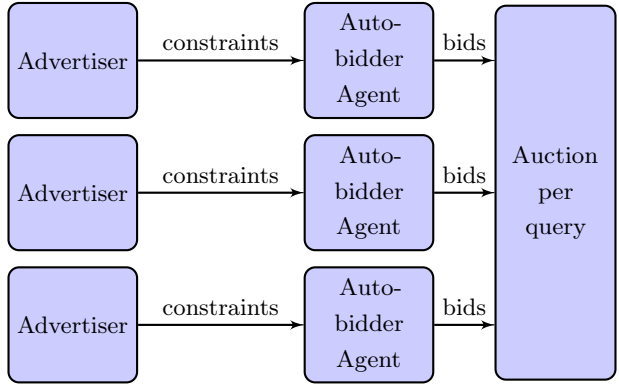

\subsection{Main Results}

 We begin our results by presenting a stylized example in Section~\ref{sec: warmup} that demonstrates how auto-bidding with SPA is not AIC (Theorem~\ref{thm: second price Warm up}). Our example consists of a simple instance with three queries and two advertisers. This example highlights a scenario where an advertiser can benefit from lowering their reported budget or tCPA-constraint.

We then introduce a continuous query model that departs from the standard auto-bidding model by considering each query to be of infinitesimal size. This model provides tractability in solving equilibrium for general auction rules like FPA, which is key to studying such auctions' auto-bidding incentive compatibility properties.  
Further, this continuous-query model succinctly captures 
real-world scenarios where the value of a single query is negligible compared to the pool of all queries that are sold.

Under the continuous-query model, we study the case where queries are sold using FPA and show that in the auto-bidding paradigm, FPA is not AIC (Section~\ref{sec: first price genral}). We first characterize the optimal bidding strategy for each auto-bidder agent, which, surprisingly, has a tractable form.\footnote{ Notice that in the discrete-query model, there is no simple characterization for the auto-bidder best response in an FPA.} We then leverage this tractable form to pin down an equilibrium for the case of two auto-bidders when both auto-bidders face a budget or tCPA constraint. In this equilibrium, the seller divides queries between the two advertisers based on the ratio of their values for each advertiser. Specifically, advertiser 1 receives queries for which the ratio of its value to the other advertiser's value is higher than a certain threshold. From this point, determining the equilibrium reduces to finding a threshold that tightens advertisers' constraints (see Lemma~\ref{lm: equilibrium first price} for more detail).
We then show that for instances where the threshold lacks monotonicity with the auto-bidders constraints, advertisers have the incentive to misreport the constraint to the auto-bidder (Theorem~\ref{thm: first price general}). Conversely, when the thresholds are monotone, advertisers report constraints truthfully. We show conditions on the advertisers' valuations for the two-advertisers setting to guarantee this monotonicity (Theorem~\ref{thm: sufficient condition-first price}). This condition requires a strong positive correlation of the advertisers' valuations across the queries. As a practical insight, our results suggest that for settings where the competition on all queries is intense, advertisers' incentives to misreport are weak.

We then explore the case where, in FPA, auto-bidders are constrained to bid using a {\em uniform bidding strategy:} the bid on each query is a constant times the advertiser's value.\footnote{ Uniform bidding strategy is also known in the literature as pacing bidding \citet{conitzer2022pacing, chen2021complexity,conitzer2022multiplicative,gaitonde2022budget}. 
} Uniform bidding is only optimal when auctions are truthful \citep{Aggarwal19}. Even though for FPA, these strategies are suboptimal, {they have gained recent attention in the literature due to their tractability \citet{conitzer2022pacing, conitzer2022multiplicative,chen2021complexity,gaitonde2022budget}.}
 We show that in such a scenario, FPA with uniform bidding turns out to be AIC (Theorem~\ref{thm:first price pacing}). {However, while this proves AIC in our model, the suboptimality of uniform bidding for FPA can give rise to incentives to deviate in other ways outside our model, e.g., by splitting the advertising campaigns into multiple campaigns with different constraints. These considerations are important when implementing this rule in practice.} 

The second part of the paper pivots to the case where auctions are truthful, i.e., auctions in which it is optimal for a profit-maximizing agent to bid their value. We first study the canonical SPA and show that, in our continuous-query model, SPA and FPA are auction equivalent. The allocation and payments among the set of {\em reasonable} equilibria hold equal (Theorem~\ref{thm: equivalent first and second}).\footnote{We show the auction equivalence among uniform bidding equilibria for SPA and threshold-type equilibrium for FPA.} As a Corollary, the results we obtain for FPA apply to SPA as well: SPA is not AIC, and we derive sufficient conditions on advertisers' valuations so that SPA is AIC for two advertisers. We then consider a general class of randomized truthful auctions. We show that if the allocation rule satisfies these natural conditions: \footnote{ These conditions have been widely studied in the literature due to their practical use \citep{mehta2021, liaw2022,allouah2020prior}.} (i) scaled invariant (if all bids are multiplied by the same factor then the allocation doesn't change), and (ii) is symmetric (bidders are treated equally); then the auction rule is not AIC. The main results are summarized in Table~\ref{tab:result}.
\begin{table}[!h]
    \centering
    \begin{tabular}{|c|c|}\hline
      Per Query Auction   & AIC \\\hline\hline
         Second-Price Auction& Not AIC\\   Truthful Auctions & Not AIC\\   First-Price Auction & Not AIC\\ First-Price Auction with Uniform Bidding & AIC\tablefootnote{As previously discussed, implementing FPA with the suboptimal uniform bidding policy can create other distortion on advertisers' incentives (e.g., splitting their campaign into multiple campaigns with different constraints). }\\ \hline
    \end{tabular}
    \caption{Main Results}
    \label{tab:result}
\end{table}

\subsection{Related Work}\label{sec: litreview}

The study of auto-bidding in ad auctions has gained significant attention in recent years. One of the first papers to study this topic is \citet{Aggarwal19}, which presents a mathematical formulation for the auto-bidders problem given fixed constraints reported by advertisers. They show that uniform bidding is an optimal strategy if and only if auctions are truthful (in the profit-maximizing sense). They further started an important line of work to measure, using a Price of Anarchy (PoA) approach, the welfare implications when auto-bidders are bidding in equilibrium for different auctions. Current results state that for SPA, the PoA is $2$ \citet{Aggarwal19} and also for FPA \citet{liaw2022}\footnote{The authors show that for a general class of deterministic auctions $PoA\geq 2$.}, and, interestingly, it can be improved if the auction uses a randomized allocation rule \citet{mehta2021, liaw2022}. In a similar venue, \citet{deng2021towards, balseiro2021robust} studies models where the auction has access to extra information and show how reserves and boosts can be used to improve welfare and efficiency guarantees.

A second line of work studies how to design revenue-maximizing auctions when bidders are value-maximizing agents and may have private information about their value or their constraints \citep{GolrezaiLP21, balseiro2021landscape,balseiro2021robust}.
{In all these settings, the 
mechanism designer is not constrained to the presence of the {auto-bidding intermediaries} (Component 2 in Figure~\ref{fig:autobidding}). Our study has added structure by having advertisers submit their constraints first, followed by a decentralized subgame to achieve a bidding equilibrium before allocating the queries and determining payments.}
Thus, prior, their mechanism setting can achieve broader outcomes than our auto-bidding constraint paradigm. Interestingly, for the one query case, the authors show that FPA with a uniform bidding policy is optimal \citet{balseiro2021landscape}. Our results complement theirs and show that such a mechanism is implementable in the auto-bidding constraint paradigm and is AIC.  

Closer to our auto-bidding paradigm, a recent line of work has started to study the incentive of advertisers when bidding via an auto-bidder intermediary. 
\citet{kolumbus2022auctions,kolumbus2022how} study regret-minimizing advertisers who can strategically report their private values of queries to the auto-bidder. Similar to our findings, they show that SPA is not incentive compatible in the sense that advertisers have the incentive to misreport their true value to get a lower regret over repeated auctions. In contrast, they prove FPA with two advertisers and a single query is incentive compatible. Our work complements their findings by considering a more general setting that allows for multiple queries and involves advertisers reporting high-level constraints such as budget and tCPA rather than the value of the query itself. In a different study that emerged after the initial version of this paper, \cite{feng2023strategic} delve into the dynamics of the game played among advertisers in an auto-bidding setting with FPA. They bound the price of anarchy of the game between advertisers, where they strategically declare their budget constraints, to a factor of 4.
In a different study,
\citet{mehtaperlroth22} show that a profit-maximizing agent may benefit by reporting a target-based bidding strategy to the auto-bidder when the agent has concerns that the auctioneer may change (ex-post) the auction rules. Also, in empirical work, \citet{li2022auto} develop a new methodology to numerically approximate 
auto-bidding equilibrium and show numerical examples where advertisers may benefit by reporting their constraints on SPA. Our work complements their findings by showing that SPA is not AIC under a theoretical framework.

Our work also connects with the literature about auctions with budgeted constraint bidders. 
In particular, our results are closely related to \citet{conitzer2022pacing} who study FPA with uniform bidding (a.k.a. pacing bidding).  They introduce the concept of the first-price auction pacing equilibrium (FPPE) for budget-constrained advertisers, which is the same as the equilibrium in our auto-bidding subgame. 
They show that in FPPE the revenue and welfare are monotone increasing as a function of the advertisers' budgets.  In our work, we show that in FPPE,  advertisers' \textit{values} are monotone as a function of their reported budget.
In addition, they differentiate between first and second-price by showing that FPPE is computable, unlike SPPE, where maximizing revenue has previously been known to be NP-hard \citet{conitzer2022multiplicative}, and that the general problem of approximating the SPPE is PPAD-complete \citet{chen2021complexity}. In contrast, we show that both SPA and FPA are tractable in the continuous model. 
Interestingly, this dichotomy between FPA and SPA (both with uniform bidding) is also reflected in our work -- the former is AIC, while the latter is not.

Uniform bidding has been explored in a separate body of research on repeated auctions without the presence of auto-bidding. \citet{balseiro2019learning} investigate strategies to minimize regret in simultaneous first-price auctions with learning. \citet{gaitonde2022budget} take this concept further by extending the approach to a wider range of auction settings. Furthermore, \citet{golrezaei2021bidding} examines how to effectively price and bid for advertising campaigns when advertisers have both ROI and budget constraints.

\section{Warm Up: Second Price Auction is not AIC!}\label{sec: warmup}

To understand the implications of the auto-bidding model, we start with an example of auto-bidding with the second-price auction. 
Through this example, we will demonstrate the process of determining the equilibrium in an auto-bidding scenario and emphasize a case where the advertiser prefers to misreport their budget leading to the following theorem.

\begin{theorem}\label{thm: second price Warm up}For the budget setting (when all advertisers are budgeted-constrained) and for the tCPA-setting (when all advertisers are tCPA-constrained), we have that SPA is not AIC. That is, there are some instances where an advertiser benefits by misreporting its constraint.
\end{theorem}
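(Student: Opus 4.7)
The plan is to prove the theorem by exhibiting, for each of the two settings, an explicit instance with two advertisers and three queries in which the truthful-reporting equilibrium admits a strict profitable unilateral deviation for advertiser~$1$. In each case the argument proceeds in four steps: (i) identify a consistent auto-bidding equilibrium of the subgame when both advertisers report truthfully; (ii) specify a strategic report $\widetilde{T}_1\ne T_1$ (respectively $\widetilde{B}_1\ne B_1$) for advertiser~$1$ while keeping advertiser~$2$'s report fixed; (iii) identify an equilibrium of the post-deviation subgame; and (iv) compare advertiser~$1$'s values in the two equilibria and verify that the true constraint continues to hold.

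The key tool is that under SPA each auto-bidder's optimal response is uniform bidding~\citep{Aggarwal19}: the auto-bidder picks a scalar multiplier $\alpha_i$ so as to maximize total value subject to the reported constraint, producing a threshold allocation ``advertiser~$1$ wins query~$j$ iff $v_{1j}/v_{2j}>\alpha_2/\alpha_1$'' with second-price payment $\alpha_{-i}v_{-i,j}$ on each won query. Misreporting feeds a different constraint into the auto-bidder's optimization, shifting $\alpha_1$ and, through the subgame fixed point, the opponent's $\alpha_2$. This propagation is not monotone in the reported constraint: a small change in $\widetilde{T}_1$ or $\widetilde{B}_1$ can discontinuously flip the winner of a marginal query and thereby reshape the opponent's best response, which is precisely the effect I would exploit. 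I would calibrate the values $(v_{1j},v_{2j})$ and the constraints so that a carefully chosen deviation from truthful reporting moves the equilibrium across such a discontinuity and leaves advertiser~$1$ with a strictly more valuable bundle while still respecting its true aggregate constraint. The same template serves both the tCPA case (where the constraint has the form $\sum_{j\text{ won}}\alpha_{-i}v_{-i,j}\le T_i\sum_{j\text{ won}}v_{ij}$) and the budget case (where it has the form $\sum_{j\text{ won}}\alpha_{-i}v_{-i,j}\le B_i$); only the numerical calibration differs.

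The principal obstacle is handling equilibrium multiplicity in the subgame, since the best-response correspondence of an SPA auto-bidder has flat regions---the auto-bidder is indifferent over a range of multipliers whenever no marginal query is worth winning within its constraint---so several equilibria generally coexist. To handle this I would either select the instance so that both constraints bind tightly and the equilibrium multipliers are uniquely pinned down up to tie-breaking, or invoke a natural refinement such as selecting the equilibrium with maximal multipliers subject to constraint satisfaction; the claim of strict improvement should then be robust to either choice. A secondary technical point is verifying feasibility of the true constraint at the post-deviation equilibrium, which is automatic whenever the deviation shades the reported constraint downward and requires an explicit numeric check otherwise. Choosing values so that the pairwise ratios $v_{1j}/v_{2j}$ are all distinct removes tie-breaking ambiguity and yields clean closed-form expressions for the equilibrium multipliers, making the final value comparison a short numerical verification.
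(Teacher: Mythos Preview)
Your proposal is correct and follows essentially the same approach as the paper: construct explicit small instances, invoke uniform-bidding optimality for SPA, exhibit the truthful-report and post-deviation subgame equilibria, and compare advertiser~$1$'s values after checking that the true constraint still holds. The paper's execution differs only in details---it uses four queries (not three) in the budget case, and instead of appealing to a refinement it handles multiplicity by proving the post-deviation equilibrium is \emph{unique} via an exhaustive case analysis over the possible allocations.
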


\begin{proof}{Proof.}
Consider two budget-constrained advertisers and four queries $Q=\{q_1,q_2,q_3,q_4\}$, where the expected value of winning query $q$ for advertiser $a$ is denoted by $v_a(q)$, and it is publicly known (as in Table \ref{tab:budget-second price}).  The actual budgets of advertisers are $B_1=20$ and $B_2=40$.
At first, assume each advertiser reports their budget to the auto-bidder. Then the auto-bidder agents, one for each advertiser, submit the bidding profiles (to maximize their advertisers' value subject to the budget constraint).  
The next step is a second-price auction per query, where the queries are allocated to the highest bidder.

 \begin{table}[!htp]
    \centering
    \begin{tabular}{c|c|c|c|c}
         & $q_1$& $q_2$& $q_3$& $q_4$ \\\hline
        Advertiser 1 &  2.1 & 40 & 30 &20\\
        Advertiser 2 & 1 & 20 & 25 &100\\
    \end{tabular}
    \caption{SPA with two budget constraint advertisers is not AIC: The value of each query for each advertiser.}
    \label{tab:budget-second price}
\end{table}

Finding the equilibrium bidding strategies for the auto-bidder agents is challenging, as the auto-bidder agents have to find the best-response bids vs. the response of other auto-bidder agents, and each auto-bidder agent's bidding profile changes the cost of queries for the rest of the agents. We use the result of \cite{Aggarwal19}, which establishes that uniform bidding is an almost-optimal strategy in any truthful auto-bidding auction, including SPA. In Appendix~\ref{appendix: tcpa second price warm up},  we show that their result implies that uniform bidding is the optimal strategy when the query values are given as in Table~\ref{tab:budget-second price}. So, without loss of generality, we can assume all auto-bidding agents adopt uniform bidding. Consequently, each auto-bidding agent optimizes over one variable, a bidding multiplier $\mu_a$, and then bids on query $q$ with the scaled value $\mu_av_a(q)$.

We show that with the given budgets $B_1=20$ and $B_2=49$, an equilibrium exists such that advertiser $1$  wins  $q_1$ and $q_2$. To see this, let $\mu_1=.7$ and $\mu_2=.91$. We need to check: (i) Allocation: advertiser 1 bids higher on $q_1$ and $q_2$ and less on $q_3$ and $q_4$ compared to advertiser $2$, (ii) 
Budget constraints are satisfied, and (iii) Bidding profiles are the best response: The auto-bidder agents do not have the incentive to increase their multiplier to get more queries. These three conditions can be checked as follows:
 \begin{enumerate}
    \item \emph{Allocation inequalities:} For each query, the advertiser with the highest bid wins.
\[ \frac{v_1(q_1)}{v_2(q_1)}>\frac{v_1(q_2)}{v_2(q_2)}> \frac{\mu_2}{\mu_1}> \frac{v_1(q_3)}{v_2(q_3)}> \frac{v_1(q_4)}{v_2(q_4)}.\]
        \item
       \emph{ Budget constraints:} Since the auction is second-price the cost of query $q$ for advertiser 1 is $\mu_2v_2(q)$ and for advertiser 2 is $\mu_1v_1(q)$. So, we must have the following inequalities to hold so that the budget constraints are satisfied:
    \[20=B_1\geq \mu_2 (v_2(q_1)+v_2(q_2))=19.11\qquad \text{(Advertiser 1)},\]
    \[49=B_2\geq \mu_1\left(v_1(q_3)+v_1(q_4)\right)=35\qquad \text{(Advertiser 2)}.\]
\item \emph{Best response:} Does the advertiser's agent have the incentive to raise their multiplier to get more queries? If not, they shouldn't afford the next cheapest query.
    \[20=B_1<  \mu_2\left(v_2(q_1)+v_2(q_2)+v_2(q_3)\right)=41.86\qquad \text{(Advertiser 1)},\]
    \[49=B_2< \mu_1\left(v_1(q_2)+v_1(q_3)+v_1(q_4)\right)=63\qquad \text{(Advertiser 2)}.\]
    \end{enumerate}
    Since all three conditions are satisfied, thus, this profile is an equilibrium for the auto-bidders bidding game, where advertiser 1 wins  $q_1$ and $q_4$, and advertiser 2 wins $q_2$ and $q_3$.

Now, consider the scenario in which advertiser 1 wants to strategically report their budget to the auto-bidder.  Intuitively, the budget constraint for the auto-bidder agent should be harder to satisfy, and hence the advertiser should not win more queries. But, contrary to this intuition, when advertiser $1$ reports a lower budget $B_1'=10$, we show that there exists a unique auto-bidding equilibrium such that advertiser 1 wins $q_1$, $q_2$ and $q_4$ (more queries than the worst case equilibrium where advertiser 1 reports their actual budget $B_1=20$). Here, the uniqueness is with respect to the allocation of queries. 
Similar to above,  we can check that $\mu_1'=1$, and $\mu_2'=\frac{B_1}{v_2(q_1)+v_2(q_2)+v_2(q_3)}=\frac{10}{46}$ results in an equilibrium (we prove the uniqueness in Appendix~\ref{appendix: tcpa second price warm up}):
\begin{enumerate}
    \item Allocation: advertiser 1 wins $q_1$, $q_2$ and $q_4$ since it has a higher bid on them,
\[ \frac{v_1(q_1)}{v_2(q_1)}> \frac{v_1(q_2)}{v_2(q_2)}>\frac{v_1(q_3)}{v_2(q_3)}> \frac{\mu'_2}{\mu'_1}>  \frac{v_1(q_4)}{v_2(q_4)}.\]
    \item Budget constraints:
    \[10=B_1\geq  \mu_2'\left(v_2(q_1) +v_2(q_2)+v_2(q_3)\right)=10, \qquad\text{and }\]
    \[49=B_2\geq \mu_1'v_1(q_4)=20\]
\item Best response: 
    \[10=B_1< \mu_2'\left(v_2(q_1)+v_2(q_2)+v_2(q_3)+v_2(q_4)\right)\approx 31.7,\qquad\text{and}\]
    \[49=B_2< \mu_1'\left(v_1(q_3)+v_1(q_4)\right)=50\]
\end{enumerate} In Appendix~\ref{appendix: tcpa second price warm up}, we will show this equilibrium is unique and also give a similar non-AIC example for the case of tCPA-constrained advertisers.
\end{proof}

Before studying other canonical auctions, in the next section, we develop a tractable model of continuous query. Under this model,, the characterization of the auto-bidders bidding equilibria when the auction is not SPA is tractable. This tractability is critical for studying auto-bidding incentive compatibility.

\section{Model}\label{sec: model}

The baseline model consists of a set of $A$ advertisers competing for $q\in Q$ single-slot queries owned by an auctioneer. We consider a continuous-query model where $Q=[0,1]$. Let $x_a(q)$ be the probability of winning query $q$ for advertiser $a$. Then the expected value and payment of winning query $q$  at price $p_a(q)$ are $x_a(q) v_a(q)dq$ and $p_a(q)dq$.\footnote{All functions $v_a, x_a, p_a$ are integrable with respect to the Lebesgue measure $dq$. 
}$^,$ \footnote{The set $Q=[0,1]$ is chosen to simplify the exposition. Our results apply to a general metric measurable space $(Q,\mathcal{A}, \lambda)$ with atomless measure $\lambda$.} Intuitively, this continuous-query model is a continuous approximation for instances where the size of each query relative to the whole set is small. 

The auctioneer sells each query $q$ using a query-level auction which induces the allocation and payments $(x_a(q), p_a(q))_{a\in A}$ as a function of the bids $(b_a)_{a \in A}$. In this paper, we focus on the  First Price Auction (FPA), Second Price Auction (SPA), and, more generally, any Truthful Auction (see Section~\ref{sec: truthful} for details).

 \subsection*{Auto-bidder agent:}
 Advertisers do not participate directly in the auctions; rather, they report high-level goal constraints to an auto-bidder agent who bids on their behalf in each query. Thus, Advertiser $a$ reports a budget constraint $B_a$ or a target cost-per-acquisition constraint (tCPA) $T_a$ to the auto-bidder. Then, the auto-bidder taking as fixed other advertiser's bid submits bids $b_a(q)$ to induce $x_a(q),p_a(q)$ that solves
\begin{align}\label{eq: auto-bidder agent-descrete}
    \max& \int_0^1 x_a(q) v_a(q) dq\\
    \text{s.t.}& \int_0^1 x_a(q)p_a(q)dq\leq B_a+T_a \int_0^1 x_a(q) v_a(q) dq.\label{eq: constraint}
\end{align}

The optimal bidding policy does not have a simple characterization for a general auction. However, when the auction is truthful (like SPA) the optimal bid takes a simple form in the continuous model. \citep{Aggarwal19}.
\begin{remark}[Uniform Bidding]
 If the per-query auction is truthful, then uniform bidding is the optimal policy for the autobidder. Thus, $b_a(q) = \mu \cdot v_a(q)$ for some $\mu >0$. We formally prove this in Claim~\ref{claim: uniform bididng}.
\end{remark}
\begin{remark}[Information on Value Functions]
  In our model, auto-bidders possess knowledge of their agents' valuation functions for the queries, which mainly implies that each auto-bidder has a good sense of the bid landscape they face. In practice, auto-bidders have interacted in multiple auction rounds where they experiment and learn. Thus, after enough rounds of experimentation, each auto-bidder has a good sense of their bid landscape. Therefore, our model captures how auto-bidders behave once they have learned enough about the market through experimentation.  This assumption is fairly standard in the auto-bidding literature (see e.g., \cite{Aggarwal19,feng2023strategic}).
\end{remark}

\subsection*{Advertisers}

Following the current paradigm in auto-bidding, we consider that advertisers are value-maximizers and can be of two types:  a budget-advertiser or tCPA-advertiser. The payoffs for these advertisers are as follows.
\begin{itemize}
\item For a budget-advertiser with budget $B_a$, the payoff is
$$ u_a = \begin{cases}
\int_{0}^1 x_a(q) v_a(q)dq &\mbox{ if } \int_{0}^1 p_a(q)dq\leq B_a \\
-\infty & \mbox{ if not}.
\end{cases}
$$
\item For a tCPA-advertiser with target $T_a$, the payoff is
$$ u_a = \begin{cases}
\int_{0}^1 x_a(q) v_a(q)dq &\mbox{ if } \int_{0}^1 p_a(q)dq\leq T_a \cdot \int_{0}^1 x_a(q) v_a(q)dq \\
-\infty & \mbox{ if not}.
\end{cases}
$$
\end{itemize}

\subsection*{Game, Equilibrium and Auto-bidding Incentive Compatibility (AIC)}

The timing of the game is as follows. First, each advertiser depending on their type, submits a budget or target constraint to an auto-bidder agent. Then, each auto-bidder solves Problem~\ref{eq: auto-bidder agent-descrete} for the respective advertiser. Finally, the per-query auctions run, and allocations and payments accrue. 

We consider a complete information setting and use subgame perfect equilibrium (SPE) as a solution concept. Let $V_a(B'_a; B_a)$ the expected payoff in the subgame for a budget-advertiser with budget $B_a$ that reports $B'_a$ to the auto-bidder (likewise, we define $V_a(T'_a;T_a)$ for the tCPA-advertiser).

\begin{definition}[Auto-bidding Incentive Compatibility (AIC)]
An auction rule is Auto-bidding Incentive Compatible (AIC) if for every SPE we have that $V_a(B_a;B_a) \geq V_a(B'_a; B_a)$ and $V_a(T_a;T_a) \geq V_a(T'_a; T_a)$ for every $B_a,B'_a,T_a, T'_a$.
\end{definition}

Similar to the classic notion of incentive compatibility, an auction rule satisfying AIC makes the advertisers' decision simpler: they need to report their target to the auto-bidder. However, notice that the auto-bidder plays a subgame after the advertiser's reports. Thus, when Advertiser $a$ deviates and submits a different constraint, the subgame outcome may starkly change not only on the bids of Advertiser $a$ but also on other advertisers' bids may change.

\section{First Price Auctions}\label{sec: first price genral}
This section demonstrates that the first price auction is not auto-bidding incentive compatible. 

\begin{theorem}\label{thm: first price general} 
 Suppose there are at least two budget-advertisers or two tCPA-advertisers; then FPA is not AIC. 
 \end{theorem}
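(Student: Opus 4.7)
The plan is to prove the theorem by exhibiting an explicit counter-example, mirroring the warm-up construction of Section~\ref{sec: warmup} but lifted to the continuous-query model so that the threshold-equilibrium characterization of Lemma~\ref{lm: equilibrium first price} applies. For $A > 2$, additional advertisers can be made inert by giving them identically zero valuation (or by restricting them to degenerate reports), so it suffices to construct a counter-example with two advertisers in each of the budget and tCPA settings. The source of the deviation will be non-monotonicity of the equilibrium threshold in the reported constraint, as anticipated in the paragraph preceding the statement.

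First I would recall the structure from Lemma~\ref{lm: equilibrium first price}: in any threshold-type equilibrium of the auto-bidding subgame for two budget-constrained (resp.\ tCPA-constrained) auto-bidders, advertiser $1$ is allocated exactly the set $\{q\in[0,1] : v_1(q)/v_2(q) \geq \tau\}$, where $\tau = \tau(B_1,B_2)$ (resp.\ $\tau(T_1,T_2)$) is pinned down by the constraint-binding conditions. A deviation in Advertiser $1$'s report moves the subgame to a new threshold $\tau'$, and Advertiser $1$'s realized value becomes $\int_{\{q : v_1(q)/v_2(q) \geq \tau'\}} v_1(q)\, dq$, a function only of $\tau'$ and the fixed value densities. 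Thus the theorem reduces to producing value functions for which some $B_1' \neq B_1$ yields a threshold $\tau'$ whose associated $v_1$-mass strictly exceeds the truthful one.

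Next I would construct the concrete instance. Partition $[0,1]$ into four sub-intervals $I_1, I_2, I_3, I_4$ of chosen lengths and place piecewise-constant values $v_1, v_2$ on them so that the ratios $v_1/v_2$ are strictly decreasing across $I_1, I_2, I_3, I_4$, in analogy with the four-query configuration of Table~\ref{tab:budget-second price}. With values constant on each piece, the integral constraint equations from Lemma~\ref{lm: equilibrium first price} reduce to a pair of linear equations in the threshold and the (uniform) price multiplier, which can be solved in closed form. I would then calibrate the interval lengths and the two budgets so that (a) at the truthful report $B_1$, the equilibrium threshold $\tau^*$ places the cut inside $I_3$ in a way that allocates Advertiser $1$ a moderately valuable region but excludes the high-value region $I_1 \cup I_2$ competitively, while (b) at a strictly smaller report $B_1' < B_1$, the binding conditions push Advertiser $2$'s effective bidding down on the contested interval, and the new equilibrium threshold $\tau' > \tau^*$ reassigns $I_2$ (or an analogous high-ratio slab) to Advertiser $1$, giving a strictly larger $\int v_1$. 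The tCPA case is handled by the same construction with the budget constraint $\int x_a p_a\, dq \leq B_a$ replaced by $\int x_a p_a\, dq \leq T_a \int x_a v_a\, dq$, which again yields a piecewise-linear system in $\tau$.

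The main obstacle is designing the sub-interval lengths and value magnitudes so that both subgames are genuine equilibria rather than merely desirable outcomes: one must simultaneously verify the allocation ordering, the binding of the appropriate constraint at each threshold, and the auto-bidder best-response conditions established earlier in this section (so that neither auto-bidder gains by deviating to a non-uniform or non-threshold profile). The tractability of the FPA best response in the continuous model — absent in the discrete model — makes this verification mechanical once the ratio ordering is fixed; the delicate step is the joint calibration that forces the $\tau$-map to be non-monotone across $B_1$ and $B_1'$ while keeping the equilibrium conditions simultaneously satisfiable.
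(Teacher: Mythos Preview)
Your plan has two concrete gaps that would block it as written.

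First, the piecewise-constant construction is incompatible with the very lemma you intend to invoke. Lemma~\ref{lm: equilibrium first price} is stated under the hypothesis that $h(q)=v_1(q)/v_2(q)$ is \emph{strictly} increasing, so that the change of variables $z=h(q)$ and the density $f(z)=v_2(h^{-1}(z))/h'(h^{-1}(z))$ make sense. With $v_1,v_2$ constant on each $I_j$, $h$ is a step function: $h'$ vanishes on every sub-interval, $h^{-1}$ is not a function, and $f$ degenerates to a sum of atoms. The implicit equation you want to solve is then not the one given by the lemma, and you are effectively back in the discrete-query regime that the continuous model was introduced to avoid.

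Second, the allocation geometry in your sketch is inverted. If advertiser~1 receives $\{q:\,h(q)\geq\tau\}$ and the ratios are decreasing across $I_1,\dots,I_4$, then a threshold ``inside $I_3$'' awards advertiser~1 the set $I_1\cup I_2$ together with part of $I_3$; it cannot ``exclude the high-value region $I_1\cup I_2$.'' Likewise, a \emph{larger} threshold $\tau'>\tau^*$ shrinks advertiser~1's winning set rather than enlarging it, so the mechanism you describe for the profitable deviation runs in the wrong direction.

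The paper's proof avoids both issues by working analytically rather than combinatorially. It rewrites the budget equilibrium condition as $g(r)=B_2/B_1$ with $g(r)=\frac{\int_0^r zf(z)\,dz}{\,r\int_r^\infty f(z)\,dz\,}$, observes that non-AIC is exactly non-monotonicity of $g$, writes down an explicit non-monotone target $g$, and then reverse-engineers a smooth $f$ (hence smooth $v_1,v_2$ with strictly increasing $h$) from that target via an ODE. This keeps the hypotheses of Lemma~\ref{lm: equilibrium first price} intact and, via Corollary~\ref{cor: tCPA implies budget}, makes a separate tCPA construction unnecessary.
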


Later in Section~\ref{sec: sufficient conditions}, we show a complementary result by providing sufficient conditions on advertisers' value functions to make FPA AIC for the case of two advertisers. We show that this sufficient condition holds in many natural settings, suggesting that in practice, FPA tends to be AIC.    

Then in Section~\ref{sec: pacing}, we turn our attention to FPA, where auto-bidders are restricted to use uniform bidding across the queries. In this case, we extend to our continuous-query model the
result of \cite{conitzer2022pacing} and show the following result.

\begin{theorem}\label{thm:first price pacing}
FPA restricted to uniform bidding is AIC.
\end{theorem}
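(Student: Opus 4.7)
The plan is to prove Theorem~\ref{thm:first price pacing} by establishing a monotonicity property of FPPE in the continuous-query model and combining it with a short case analysis. A key observation is that, in FPA with uniform bidding, the per-query payment on a won query equals the bid $\mu_a v_a(q)$, so the total spend-to-value ratio in any FPPE is identically $\mu_a$. This identity drives both the tCPA and budget analyses.

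For the tCPA case, the identity collapses the constraint $\int p \le T_a \int xv$ to $\mu_a \le T_a$. Since the winning set $\{q : \mu_a v_a(q) > \mu_b v_b(q)\,\forall b\}$ grows monotonically in $\mu_a$, the auto-bidder's best response is $\mu_a = T_a$, and other auto-bidders' best responses $\mu_b = T_b$ are unaffected by advertiser $a$'s report. A deviation $T'_a > T_a$ induces spend-to-value ratio $T'_a > T_a$, violating the true constraint (payoff $-\infty$); a deviation $T'_a < T_a$ lowers all of $a$'s bids and shrinks their winning set, weakly decreasing the value. Thus truthful reporting is optimal. For the budget case, the same identity forces the budget to bind in any non-trivial FPPE: if it did not, $\mu_a$ could be raised slightly, strictly enlarging the winning set and increasing value. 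Hence spend equals the reported $B'_a$, so $B'_a > B_a$ always violates the true budget. It then suffices to prove the monotonicity lemma: the equilibrium value $V_a(B'_a)$ is non-decreasing in $B'_a$ across every FPPE. Given this, the case $B'_a \le B_a$ is feasible but yields $V_a(B'_a) \le V_a(B_a)$, completing AIC.

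The main obstacle is the budget-case monotonicity lemma. Conitzer et al.\ prove analogous monotonicity of aggregate revenue and welfare for FPPE in the discrete-query model via combinatorial exchange arguments on the bidder-query bipartite graph; I would port these to the continuous setting using measure-theoretic comparative statics on the winning sets $\{q : \mu_a v_a(q) > \max_{b \ne a} \mu_b v_b(q)\}$, arguing that a relaxation of $a$'s budget weakly raises $\mu_a$ in every FPPE and that equilibrium adjustments by $\mu_{-a}$ cannot reverse the direction of change for $a$'s winnings. A secondary subtlety is equilibrium multiplicity: because AIC quantifies over every SPE, the monotonicity must be selection-robust, e.g., by comparing extremal equilibria of the FPPE correspondence. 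In the two-advertiser case the threshold characterization of Lemma~\ref{lm: equilibrium first price} furnishes a direct proof, which should serve as the template for the general $A$-advertiser argument.
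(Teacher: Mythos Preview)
Your tCPA argument is clean and correct, and in fact simpler than the paper's: exploiting that in FPA with uniform bidding the spend on any won query is exactly $\mu_a v_a(q)$, so spend $=\mu_a\cdot$value, collapses the tCPA constraint to $\mu_a\le T_a$ and makes the equilibrium multipliers $\mu_a=T_a$ independent of everyone else's reports. The paper instead routes both the tCPA and budget cases through the same structural lemma (see below), so your direct treatment of tCPA is a genuine shortcut.

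The budget case, however, has a real gap. You correctly isolate the crux as a per-advertiser value-monotonicity lemma, but you propose to obtain it by porting Conitzer et al.'s \emph{aggregate} revenue/welfare monotonicity results. Those results do not give you what you need: they are about totals, not about advertiser $a$'s individual value, and the step ``equilibrium adjustments by $\mu_{-a}$ cannot reverse the direction of change for $a$'s winnings'' is precisely the hard part you leave unproved. The paper takes a different and much shorter route: it invokes (an extension of) another Conitzer et al.\ result, namely that the FPPE is \emph{unique} and that the equilibrium multiplier vector is the \emph{component-wise maximum} over all feasible uniform-bidding profiles (the paper's Lemma~\ref{lm: conitzer}). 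From this, increasing $B_a$ enlarges the feasible set, so \emph{every} $\mu_j$ weakly increases; since for $j\ne a$ the constraint was already tight (or $\mu_j$ was at its cap), $j$ cannot gain queries under uniformly higher competing bids, and hence $a$'s winning set weakly grows. This maximality-plus-uniqueness lemma is the missing key idea in your sketch, and it also dissolves your multiplicity worry entirely: there is exactly one FPPE, so the ``selection-robust'' issue you flag does not arise. A minor additional gap: your claim that the budget always binds, so over-reporting necessarily violates the true constraint, fails when $\mu_a$ is at its cap with slack budget; in that regime over-reporting simply leaves the equilibrium unchanged, which is harmless but should be handled explicitly.
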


\subsection{Proof of Theorem~\ref{thm: first price general}}\label{sec: FPA is not AIC}

We outline the proof of Theorem~\ref{thm: first price general} by dividing it into three key steps. Step 1 characterizes the best response bidding profile for an auto-bidder.
Determining the best response in FPA is known to be hard \cite{filos2021complexity}. However, we establish a connection between first-price and second-price auctions in the continuous query model. Specifically, we demonstrate that in FPA with continuous queries, each advertiser wins queries with payments as they would win if the auction were a second price auction (SPA) with a uniform bidding strategy.
This observation simplifies the problem from optimizing bids for each query to determining a single multiplicative variable for each advertiser.

In Step 2, we leverage the tractability of our continuous-query model and pin down the sub-game bidding equilibrium when there are either two budget-advertisers or two tCPA-advertisers in the game (Lemma~\ref{lm: equilibrium first price}). We derive an equation that characterizes the ratio of the multipliers of each advertiser as a function of the constraints submitted by the advertisers.
This ratio defines the set of queries that each advertiser wins, and as we will see, the value accrued by each advertiser is monotone in this ratio. So, to find a non-AIC example, one has to find scenarios where the equilibrium ratio is not a monotone function of the input constraints, which leads to the next step. 

To conclude, we show in Step 3 an instance where the implicit solution for the ratio is nonmonotonic, demonstrating that auto-bidding in first-price auctions is not AIC.
As part of our proof, we interestingly show that AIC is harder to satisfy when advertisers face budget constraints rather than tCPA constraints
(see Corollary~\ref{cor: tCPA implies budget}).

\subsection*{Step 1: Optimal Best Response}

The following claim shows that, contrary to the discrete-query model, the best response for the auto-bidder in a first-price auction can be characterized as a function of a single multiplier. 

\begin{claim}\label{claim: first price best response}
Taking other auto-bidders as fixed, there exists 
 a multiplier $\mu_a\geq 0$ such that the following bidding strategy is optimal:
\[b_a(q)=\begin{cases}\max_{a'\neq a}\left(b_{a'}(q)\right)& \mu_av_a(q)\geq \max_{a'\neq a}\left(b_{a'}(q)\right)\\
0& \mu_av_a(q)\neq \max_{a'}\left(b_{a'}(q)\right).
\end{cases}\]
The result holds whether the advertiser is budget-constrained or tCPA-constrained\footnote{
In FPA, ties are broken consistently with the equilibrium. This is similar to the pacing equilibrium notion where the tie-breaking rule is endogenous to the equilibrium \cite{conitzer2022pacing}.
 }.  

\end{claim}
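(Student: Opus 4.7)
The plan is to reduce the auto-bidder's FPA problem to a continuous fractional-knapsack problem and extract the threshold $\mu_a$ via a Lagrangian argument. First, holding $(b_{a'})_{a'\ne a}$ fixed and writing $c_a(q) := \max_{a'\ne a} b_{a'}(q)$ for the highest competing bid, I would observe that under FPA the advertiser's payment on a won query equals their own bid, so (up to measure-zero ties) it is never strictly better to bid above $c_a(q)$ on a query that will be won, nor strictly above $0$ on a query that will be lost. Thus the problem collapses to choosing a measurable set $S\subseteq Q$ of winning queries, with value $\int_S v_a(q)\,dq$ and cost $\int_S c_a(q)\,dq$, subject to the original budget or tCPA constraint.

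Next I would solve this selection problem by the continuous analogue of the greedy fractional-knapsack rule. For the budget case, the Lagrangian $\int x(q)\bigl[v_a(q)-\lambda c_a(q)\bigr]\,dq$ is maximized pointwise by $x(q)=\mathbbm{1}\{v_a(q) \ge \lambda c_a(q)\}$, giving a threshold on the ratio $v_a/c_a$; setting $\mu_a = 1/\lambda$ matches the form in the claim. For the tCPA case the constraint rewrites as $\int x(q)\bigl(c_a(q)-T_a v_a(q)\bigr)\,dq\le 0$, which is still linear in $x$, so the same argument with Lagrangian integrand $(1+\lambda T_a)v_a(q)-\lambda c_a(q)$ yields $x(q)=\mathbbm{1}\{\mu_a v_a(q) \ge c_a(q)\}$ for $\mu_a = (1+\lambda T_a)/\lambda$. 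To finish, I would exhibit an appropriate $\lambda$ via a monotonicity/intermediate-value argument: the map $\mu\mapsto\int_{\{\mu v_a\ge c_a\}} c_a(q)\,dq$ is non-decreasing, and by atomlessness of the Lebesgue measure on $Q=[0,1]$ it attains every value between $0$ and the total cost of outbiddable queries; this lets me pick $\mu_a$ so that the advertiser's constraint binds (or is trivially inactive, in which case $\mu_a = +\infty$ and the advertiser wins every query where they can afford to match).

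The step I expect to be most delicate is the tCPA case: even though the rewritten constraint is linear in $x$, the feasible region is a convex cone with apex at $x\equiv 0$, so one must argue that Lagrangian sufficiency is not spoiled at that trivial point. I would handle this with a direct exchange argument showing that any feasible $x$ differing from the threshold solution on a positive-measure set can be perturbed to strictly increase $\int x v_a$ while maintaining feasibility, using the pointwise sign of the integrand $(1+\lambda T_a)v_a - \lambda c_a$ at the chosen $\lambda$. Tie-breaking on the indifference set $\{\mu_a v_a = c_a\}$ (typically a null set, but possibly positive-measure) can be resolved consistently with the equilibrium, mirroring the pacing-equilibrium convention referenced in the footnote to the claim.
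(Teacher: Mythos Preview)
Your proposal is correct and arrives at the same conclusion, but the route differs from the paper's. The paper reduces to the set-selection problem exactly as you do, then argues optimality of the threshold rule by a direct exchange (swap) argument: if a positive-measure set $X$ with lower value-to-cost ratio is taken while a positive-measure set $Y$ with higher ratio is not, one exchanges equal-cost subsets $X'\subseteq X$, $Y'\subseteq Y$ (using atomlessness to find them) to strictly increase value without violating the constraint; the same swap works verbatim for the tCPA constraint. Your approach instead extracts the threshold from Lagrangian duality, identifying $\mu_a$ as (essentially) the reciprocal of the dual multiplier, and then pins down the correct $\mu_a$ via an intermediate-value argument on the monotone cost map. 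What your approach buys is a cleaner structural explanation of \emph{why} the optimal rule is a ratio threshold (it is the pointwise maximizer of the Lagrangian integrand), and it generalizes immediately to mixed budget/tCPA constraints; what the paper's approach buys is that it is entirely elementary and sidesteps any strong-duality or Slater-type justification, which is exactly the delicacy you flag in the tCPA case and end up resolving with the same exchange argument anyway. In short, the paper goes straight to the exchange argument for both cases, while you use duality first and exchange only as a backstop for tCPA.
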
 
\begin{proof}{Proof.}
 We show that in a first-price auction, the optimal bidding strategy is to bid on queries with a value-to-price ratio above a certain threshold.  To prove this, we assume that all advertisers' bidding profiles are given. Since the auction is first-price, advertiser $a$ can win each query $q$ by fixing small enough $\epsilon>0$ and paying $\max_{a'\neq a}(b_{a'}(q))+\epsilon$. So, let $p_a(q)=\max_{a'\neq a}(b_{a'}(q))$, be the price of query $q$.  Since we have assumed that the value functions of all advertisers are integrable (i.e., there are no measure zero sets of queries with a high value), in the optimal strategy, $p_a$ is also integrable since it is suboptimal for any advertiser to bid positive (and hence have a positive cost) on a measure zero set of queries.     

First, consider a budget-constrained advertiser. The main idea is that since the prices are integrable, the advertiser's problem is similar to a continuous knapsack problem. In a continuous knapsack problem, it is well known that the optimal strategy is to choose queries with the highest value-to-cost ratio \cite{goodrich2001algorithm}. Therefore, there must exist a threshold, denoted as $\mu$, such that the optimal strategy is to bid on queries with a value-to-price ratio of at least $\mu$. So if we let $\mu_a=\frac{1}{\mu}$, then advertiser $a$ bids on any query with $\mu_av_a(q)\geq p_a(q)$.

We prove it formally by contradiction.
 Assume to the contrary, that there exist non-zero measure sets $X,Y\subset Q$ such that for all $x\in X$ and $y\in Y$, the fractional value of $x$ is less than the fractional value of $y$, i.e., $\frac{v_a(x)}{p_a(q_x)}< \frac{v_a(y)}{p_a(y)}$, and in the optimal solution advertiser $a$ gets all the queries in $X$ and no query in $Y$. 
 However, we show that by swapping queries in $X$ with queries in $Y$ with the same price, the advertiser can still satisfy its budget constraint while increasing its value.

To prove this, fix $0<\alpha<\min(\int_Xp_a(q)dq,\int_Yp_a(q)dq)$. Since the Lebesgue measure is atomless, there exists subsets $X'\subseteq X$ and $Y'\subseteq Y$ such that $\alpha=\int_{X'}p_a(q)dq=\int_{Y'}p_a(q)dq$. Since the value per cost of queries in $Y$ is higher than queries in $X$, swapping queries of $X'$ with $Y'$ increases the value of the new sets while the cost does not change. Therefore, the initial solution cannot be optimal.

A similar argument holds for tCPA-constrained advertisers. Swapping queries in $X'$ with $Y'$ does not change the cost and increases the upper bound of the tCPA constraint, resulting in a feasible solution with a higher value. Therefore, the optimal bidding strategy for tCPA constraint is also $b_a(q)$ as defined in the statement of the claim.\end{proof}

\subsection*{Step 2: Equilibrium Characterization}

The previous step showed that the optimal bidding strategy is to bid on queries with a value-to-price ratio above a certain threshold. Thus, we need to track one variable per auto-bidder to find the sub-game equilibrium.

In what follows, we focus on the case of finding the variables when there are only two advertisers in the game. This characterization of equilibrium gives an implicit equation for deriving the equilibrium bidding strategy, which makes the problem tractable in our continuous-query model.\footnote{Notice that for the discrete-query model finding equilibrium is PPAD hard \cite{filos2021complexity}}. 

From Claim~\ref{claim: first price best response} we observe that the ratio of bidding multipliers is key to determining the set of queries that each advertiser wins. To map the space of queries to the bidding space, we introduce the function $h(q)=\frac{v_1(q)}{v_2(q)}$. Hence, for high values of $h$, the probability that advertiser 1 wins the query increases. Also, notice that without loss of generality, we can reorder the queries on $[0,1]$ so that $h$ is non-decreasing. 

In what follows, we further assume that $h$ is increasing on $[0,1]$. This implies that $h$ is invertible and also differentiable almost everywhere on $[0,1]$. With these assumptions in place, we can now state the following lemma to connect the sub-game equilibrium to the ratio of advertisers' values.

\begin{lemma}\label{lm: equilibrium first price} [Subgame Equilibrium in FPA]
Given two budget-constrained auto-bidders with budget $B_1$ and $B_2$,  let $\mu_1$ and $\mu_2$ be as defined in Claim~\ref{claim: first price best response} for auto-bidding with FPA. Also assume that  $h(q)=\frac{v_1(q)}{v_2(q)}$ as defined above is strictly monotone. Then $\mu_1=\frac{B_2}{ E[z\mathbbm{1}(z\geq r)]}$ and $\mu_2=\mu_1r$, where $r$ is the solution of the following implicit function,
\begin{equation}\label{eq: budget eq-first price}
    \frac{rE[\mathbbm{1}[z\geq r)]}{E[z\mathbbm{1}(z\leq {r})]}=\frac{B_1}{B_2}.
\end{equation}
Here, $E[\cdot]$ is defined as
$E[P(z)]=\int_0^\infty P(z)f(z)dz,$ where  $f(z)=\frac{v_2(h^{-1}(z))}{h'(h^{-1}(z))}$ wherever $h'$ is defined, and it is zero otherwise. 

Also, for two tCPA auto-bidders with targets $T_1$ and $T_2$, we have  $\mu_1=\frac{T_1 E[\mathbbm{1}(z\leq r)]}{ E[\mathbbm{1}(z\geq r)]}$ and $\mu_2=\mu_1r$, where $r$ is the answer of the following implicit function,
\begin{equation}\label{eq: tcpa eq-first price}
    \frac{rE[\mathbbm{1}(z\geq r)]}{E[z\mathbbm{1}(z\geq {r})]}\frac{E[\mathbbm{1}(z\leq r)]}{E[z\mathbbm{1}(z\leq {r})]}=\frac{T_1}{T_2}.
\end{equation}
\end{lemma}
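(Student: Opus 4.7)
The plan is to exploit Claim~\ref{claim: first price best response} to reduce each auto-bidder's best response to a single multiplier, and then rewrite both advertisers' binding constraints via the change of variables $z = h(q)$. By Claim~\ref{claim: first price best response}, in the two-advertiser subgame each auto-bidder $a$ effectively picks a multiplier $\mu_a \geq 0$, wins exactly the queries with $\mu_a v_a(q) \geq \mu_{a'} v_{a'}(q)$, and pays the runner-up bid $\mu_{a'} v_{a'}(q)$ on those queries. Setting $r := \mu_2/\mu_1$ and using strict monotonicity of $h$, advertiser 1's winning set becomes $\{q : h(q) \geq r\}$ and advertiser 2's becomes $\{q : h(q) < r\}$, so the entire equilibrium is encoded by just two scalars, say $\mu_1$ and $r$.

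I would next argue that at any interior subgame equilibrium both advertisers' constraints bind. In FPA the per-query cost of advertiser $a$ is $\mu_{a'} v_{a'}(q)$, independent of $\mu_a$; hence if advertiser $a$'s constraint has slack, increasing $\mu_a$ adds queries of strictly positive value at the margin while leaving the cost on already-won queries unchanged, a strict improvement. With both constraints tight, I then change variables to $z = h(q)$ with $dq = dz/h'(h^{-1}(z))$ and density $f(z) = v_2(h^{-1}(z))/h'(h^{-1}(z))$. Using $v_1(q) = h(q) v_2(q) = z \cdot v_2(q)$ on the relevant region gives
\begin{align*}
\int_{h(q)\geq r} v_2(q)\,dq &= E[\mathbbm{1}(z\geq r)],\\
\int_{h(q)\geq r} v_1(q)\,dq &= E[z\mathbbm{1}(z\geq r)],\\
\int_{h(q)\leq r} v_1(q)\,dq &= E[z\mathbbm{1}(z\leq r)],\\
\int_{h(q)\leq r} v_2(q)\,dq &= E[\mathbbm{1}(z\leq r)].
\end{align*}

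In the budget case the two binding conditions then become $\mu_2 E[\mathbbm{1}(z\geq r)] = B_1$ (advertiser 1's spend) and $\mu_1 E[z\mathbbm{1}(z\leq r)] = B_2$ (advertiser 2's spend). Dividing isolates $r = \mu_2/\mu_1 = (B_1/B_2)\cdot E[z\mathbbm{1}(z\leq r)]/E[\mathbbm{1}(z\geq r)]$, which rearranges to exactly equation~(\ref{eq: budget eq-first price}); back-substituting yields the stated closed-form expressions for $\mu_1$ and $\mu_2$. The tCPA case is parallel: since spend $\leq T_a\cdot$value must bind, I get $\mu_2 E[\mathbbm{1}(z\geq r)] = T_1 E[z\mathbbm{1}(z\geq r)]$ and $\mu_1 E[z\mathbbm{1}(z\leq r)] = T_2 E[\mathbbm{1}(z\leq r)]$; dividing the first by the second and simplifying gives equation~(\ref{eq: tcpa eq-first price}).

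The main obstacle I anticipate is justifying the tight-constraint step and the existence of $r$ cleanly: one needs to rule out corner equilibria (an advertiser winning the empty set or all of $Q$) and then show the left-hand side of (\ref{eq: budget eq-first price}) (resp.\ (\ref{eq: tcpa eq-first price})) is continuous in $r$ and covers the required range, so that an intermediate-value argument produces a solution. The strict monotonicity of $h$ keeps $f$ well behaved and should make this routine, but boundary behavior under extreme ratios $B_1/B_2$ (or $T_1/T_2$), or when the induced $f$ has thin tails, will require some care.
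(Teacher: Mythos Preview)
Your proposal is correct and follows essentially the same route as the paper: invoke Claim~\ref{claim: first price best response} to reduce to multipliers, write the two advertisers' constraints as equalities, change variables via $z=h(q)$ to express everything through $f(z)$, and divide to obtain the implicit equation for $r=\mu_2/\mu_1$. The only difference is that you explicitly justify why both constraints bind and flag existence of $r$, whereas the paper simply writes the constraints as equalities and defers existence to a later proposition; these are welcome clarifications but not a different argument.
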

\begin{remark}
{
The function $f$ intuitively represents the expected \emph{value} of the queries that advertiser 2 can win 
as well as the \emph{density} of the queries that advertiser 1 can win. Also, the variable $r$ shows the cut-off on how the queries are divided between the two advertisers.  
In the proof, we will see that the advertisers' value at equilibrium is computed with respect to $f$: Advertiser 1's overall value is $\int_r^\infty zf(z)dz$ and advertiser 2's overall value is $\int_0^r f(z)dz$. }
\end{remark}

\begin{proof}{Proof.}
First, consider budget constraint auto-bidders.
Given Claim~\ref{claim: first price best response}, in equilibrium price of query $q$ is $\min(\mu_1v_1(q),\mu_2v_2(q))$. Therefore, the budget constraints become:
\[B_1=\int_0^1  {\mu_2v_2(q)}\mathbbm{1}({\mu_2v_2(q)}\leq \mu_1v_1(q)) dq,\]
\[B_2=\int_0^1 {\mu_1v_1(q)} \mathbbm{1}({\mu_2v_2(q)}\geq \mu_1v_1(q))  dq.\]
With a change of variable from $q$ to $z=h(q)$ and letting $r=\frac{\mu_2}{\mu_1}$, we have:
\[B_1=\int_{r}^\infty{\mu_2v_2(h^{-1}(z))}\frac{dh^{-1}(z)}{dz}dz \]
\[B_2=\int_0^{r}{\mu_1v_1(h^{-1}(z))}\frac{dh^{-1}(z)}{dz}dz. \]
Observe that $v_1(h^{-1}(z))=zv_2(h^{-1}(z))$, then if we let $f(z)=v_2(h^{-1})(h^{-1})'=\frac{v_2(h^{-1}(z))}{h'(h^{-1}(z))}$, the constraints become
\begin{equation}\label{eq: B1 constraint}
    B_1=\mu_2\int_{r}^\infty f(z)dz, 
\end{equation}
\begin{equation}\label{eq: B2 constraint}
    B_2=\mu_1\int_0^{r}z f(z) dz.
\end{equation}
We obtain Equation~\eqref{eq: budget eq-first price} by diving both sides of Equation~\eqref{eq: B1 constraint} by the respective both sides of Equation~\eqref{eq: B2 constraint}.

Now, consider two tCPA-constrained auto-bidders. Similar to the budget-constrained auto-bidders, we can write
\[T_1\int_0^1  {v_1(q)}\mathbbm{1}({\mu_2v_2(q)}\leq \mu_1v_1(q)) dq=\int_0^1  {\mu_2v_2(q)}\mathbbm{1}({\mu_2v_2(q)}\leq \mu_1v_1(q)) dq\]
\[T_2\int_0^1  {v_2(q)}\mathbbm{1}({\mu_2v_2(q)}\geq \mu_1v_1(q)) dq=\int_0^1  {\mu_1v_1(q)}\mathbbm{1}({\mu_2v_2(q)}\geq \mu_1v_1(q)) dq\]
The same way of changing variables leads to the following:
\[
    \frac{T_1}{T_2}\frac{\int_r^{\infty}xf(x)dx}{\int_{0}^r f(x)}=\frac{r\int_r^{\infty}f(x)dx}{\int_{0}^r x f(x)dx}.
\]
This finishes the proof of the lemma.
\end{proof}

The previous theorem immediately implies that any example of valuation functions that is non-AIC for budget-advertisers will also be non-AIC for tCPA-advertisers.
\begin{corollary}\label{cor: tCPA implies budget}
If auto-bidding with FPA and two budget-advertisers is not AIC, then auto-bidding with the same set of queries and two tCPA-advertisers is also not AIC.
\end{corollary}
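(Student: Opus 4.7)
The strategy is to transport the non-AIC witness from the budget world to the tCPA world, using the same value functions, and exploit the fact that by Lemma~\ref{lm: equilibrium first price} both equilibria are fully described by a single scalar cutoff $r=\mu_2/\mu_1$: the allocation and each advertiser's value depend on $r$ alone, not on the type of constraint imposed.

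First, I would extract from the hypothesis a budget non-AIC instance, namely values $v_1,v_2$, true budgets $(B_1,B_2)$, and a profitable misreport $B_1'$ of advertiser $1$. Let $r$ (truthful) and $\tilde r$ (deviation) be the corresponding equilibrium cutoffs. Because advertiser $1$'s equilibrium value $V_1(r):=\int_r^\infty z f(z)\,dz$ is strictly decreasing in $r$, profitability forces $\tilde r<r$; and because the deviation must satisfy the true budget (otherwise the true payoff would be $-\infty$), we get $B_1'\le B_1$, which via \eqref{eq: budget eq-first price} rewrites as $g_B(\tilde r)\le g_B(r)$ with $g_B(r):=rE[\mathbbm{1}(z\ge r)]/E[z\mathbbm{1}(z\le r)]$.

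Next, by comparing \eqref{eq: budget eq-first price} with \eqref{eq: tcpa eq-first price} term by term, I would derive the factorization
\[g_T(r)\;=\;g_B(r)\cdot\frac{V_2(r)}{V_1(r)},\qquad V_2(r):=\int_0^r f(z)\,dz,\]
and then construct the tCPA instance on the same $v_1,v_2$: pick any $T_2>0$, set the true target $T_1:=g_T(r)\,T_2$, and consider the candidate misreport $T_1':=g_T(\tilde r)\,T_2$. Lemma~\ref{lm: equilibrium first price} then guarantees that the truthful and deviated subgame equilibria have cutoffs exactly $r$ and $\tilde r$ respectively, so advertiser $1$'s value under deviation is $V_1(\tilde r)>V_1(r)$, strictly better than under truthful reporting.

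The main remaining obstacle is to check that the misreport $T_1'$ actually respects advertiser $1$'s true tCPA constraint, i.e., $T_1'\le T_1$; without this, the apparent gain would be illusory ($-\infty$ payoff). This is where the factorization pays off: the factor $g_B(\tilde r)/g_B(r)\le 1$ was established in the first step, and the correction factor $[V_2(\tilde r)/V_1(\tilde r)]/[V_2(r)/V_1(r)]<1$ follows because $V_1$ is strictly decreasing in $r$, $V_2$ is strictly increasing in $r$, and $\tilde r<r$. Multiplying the two inequalities gives $T_1'<T_1$, so the misreport is feasible and strictly profitable in the tCPA game, yielding the desired non-AIC witness.
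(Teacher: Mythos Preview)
Your proposal is correct and rests on the same key identity as the paper's proof: writing the tCPA equilibrium function as $g_T(r)=g_B(r)\cdot V_2(r)/V_1(r)$ with the correction factor $V_2(r)/V_1(r)=\int_0^r f\big/\int_r^\infty zf$ strictly increasing in $r$. The paper argues at the level of monotonicity --- budget non-AIC means $g_B$ is increasing at some point, and since the extra factor is itself increasing, the product $g_T$ is also increasing there, hence tCPA is non-AIC. You instead carry a concrete witness through: from the budget instance you extract the two cutoffs $\tilde r<r$ with $g_B(\tilde r)\le g_B(r)$ (the latter coming from $B_1'\le B_1$, which you correctly justify via feasibility of the true budget), and then verify $g_T(\tilde r)<g_T(r)$ directly so that the same pair of cutoffs yields a feasible and strictly profitable tCPA misreport. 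Your route is a touch more explicit and sidesteps any differentiability of $g_B$, but the essential content --- the factorization and the monotonicity of $V_2/V_1$ --- is identical to the paper's.
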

\begin{proof}{Proof.}
Recall that advertiser $1$ wins all queries with $h(q)\geq r$. So, the value accrued by advertiser 1 is decreasing in $r$. So, if an instance of auto-bidding with tCPA-constrained advertisers is not AIC for advertiser 1,  then the corresponding function $\frac{r\int_r^{\infty}f(x)dx}{\int_{0}^r x f(x)dx}\frac{\int_{0}^r f(x)}{\int_r^{\infty}xf(x)dx}$ (same as \eqref{eq: tcpa eq-first price}) must be increasing for some $r'$. 

On the other hand, recall that $\frac{r\int_{r}^\infty f(x)dx}{\int_{0}^r xf(x)dx}$ is the ratio for budget-constrained bidders equilibrium as in \eqref{eq: budget eq-first price}. The additional multiplier in the equilibrium equation of tCPA constraint advertiser in \eqref{eq: tcpa eq-first price} is  $\frac{\int_0^{r}f(x)dx}{\int_{r}^\infty xf(x)}$ which is increasing in $r$. So, if the auto-bidding for budget-constrained bidders is not  AIC and hence the corresponding ratio is increasing for some $r'$, it should be increasing for the tCPA-constrained advertisers as well, which proves the claim. 
\end{proof}

\subsection*{Step 3: Designing a non AIC instance}
The characterization of equilibrium from Step 2 leads us to construct an instance where advertisers have the incentive to misreport their constraints.  The idea behind the proof is that the value accrued by the advertiser $1$ is decreasing in $r$ (as found in Lemma~\ref{lm: equilibrium first price}). Then to find a counter-example, it will be enough to find an instance of valuation functions such that the equilibrium equation \eqref{eq: budget eq-first price} is non-monotone in $r$.
\begin{proof}{Proof of Theorem~\ref{thm: first price general}.}
We construct an instance with two budget-constrained advertisers. By Corollary~\ref{cor: tCPA implies budget}, the same instance would work for tCPA-constrained advertisers.
To prove the theorem, we will find valuation functions $v_1$ and $v_2$ and budgets $B_1$ and $B_2$ such that the value accrued by advertiser 1 decreases when their budget increases.

Define $g(r)=\frac{\int_0^{r}xf(x)dx}{r\int_{r}^\infty f(x)dx}$. 
By Lemma~\ref{lm: equilibrium first price}, one can find 
 the equilibrium by solving the equation $g(r)=\frac{B_2}{B_1}$. Recall that advertiser $1$ wins all queries with $\frac{v_1(q)}{v_2(q)}\geq r$. So, the total value of queries accrued by advertiser 1 is decreasing in $r$. Hence, it is enough to construct a non-AIC example to find a function $f$ such that $g$ is non-monotone in $r$.

 A possible such non-monotone function $g$ is
 \begin{align}\label{eq: g}
     g(r)=\frac{(r-1)^3+3}{cr}-1,
 \end{align}
 where $c$ is chosen such that $\min_{r\geq 0} g(r)=0$, i.e., $c=\min\frac{(r-1)^3+3}{r}\approx1.95105$. To see why $g$ is non-monotone, observe that $g(r)$ is  decreasing for $r\leq 1.8$, because $g'(r)=\frac{2r^3-3r^2-2}{cr^2}$ is negative for $r\leq 1.8$, and then increasing for $r\geq 1.81$. 
 
 We claim the function $f$ defined as in,
 \begin{equation}\label{eq: f}
     f(r)=3c(r-1)^2\frac{e^{\int_0^r\frac{c}{(r-1)^3+3}dx}}{((r-1)^3+3)^2},
 \end{equation}
 would result in the function $g$ in \eqref{eq: g}. To see why this claim is enough to finish the proof, note that there are many ways to choose the value functions of advertisers to derive t$f$ as in \eqref{eq: f}.
One possible  way is to define $v_1,v_2:[0,1]\to\mathbb R$ as $v_2(q)=f(\tan(q))/(\tan(q)^2+1)$ and $v_1(q)=\tan(q)v_2(q)$ (see Fig.~\ref{fig:example non AIC}).
\begin{figure}%
    \centering
    \subfloat[\centering Valuation function of two advertisers.]{{\includegraphics[width=6cm]{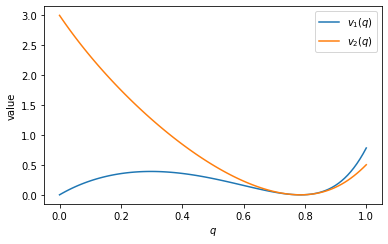} }}%
    \qquad
    \subfloat[\centering Finding the equilibrium using \eqref{eq: budget eq-first price}.]{{\includegraphics[width=6cm]{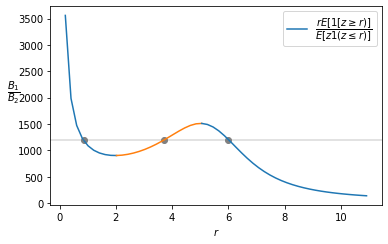} }}%
    \caption{An example of two advertisers such that FPA is not AIC (proof of Theorem~\ref{thm: first price general}). When $\frac{B_1}{B_2}=1200$, there are three values for $r$ (see the right panel) that lead to equilibrium, and one (orange) leads to non-AIC equilibrium.}%
    \label{fig:example non AIC}%
\end{figure}

So it remains to prove that choosing $f$ as in \eqref{eq: f} would result in $g$ as defined in \eqref{eq: g}. To derive $f$ from $g$, first we simplify $g$ using
 integration by part,
\begin{align*}
    g(r)&=\frac{\int_0^{r}xf(x)dx}{r\int_{r}^\infty f(x)dx}\\
    &=\frac{r\int_0^{r}f(x)dx-\int_0^r\int_0^xf(y)dydx}{r\int_{r}^\infty f(x)dx}\\
    &=\frac{r\int_0^{\infty}f(x)dx-\int_0^r\int_0^xf(y)dydx}{r\int_{r}^\infty f(x)dx}-1,
\end{align*}
Assuming that $\int_0^\infty f(x)$ is finite, the above equations lead to the following
\begin{equation}\label{eq: rgr+r}
    rg(r)+r=\frac{\int_0^r\int_x^\infty f(y)dydx}{\int_r^\infty f(x)dx}.
\end{equation}
Therefore, by integrating the inverse of both  sides,
\[\log(\int_0^r\int_x^\infty f(y)dy dx)=C+\int_0^r\frac{1}{xg(x)+x}dx,\]
and by raising the exponent
\[
    \int_0^r\int_x^\infty f(y)dydx=Ke^{\int_0^r\frac{1}{xg(x)+x}dx}.
\]
 for some constants $C$ and  $K>0$. Then by differentiating both sides with respect to $x$,
\[\int_r^\infty f(x)dx=\frac{K}{rg(r)+r}e^{\int_0^r\frac{1}{xg(x)+x}dx}.\]
Note that for any choice of $K\geq 0$, dividing the last two equations will result in \eqref{eq: rgr+r}. So, without loss of generality, we can assume $K=1$. 
By differentiating again,  we can derive $f$ as a function of $g$:
\[f(r)=\frac{(g'(r)r+g(r))}{(rg(r)+r)^2}e^{\int_0^r\frac{1}{xg(x)+x}dx}.\]
We need $g'(r)r+g(r)\geq 0$ to ensure that 
 $f(r)\geq 0$ for all $r$. This holds for $g$ as in \eqref{eq: g}. Finally, by substituting $g$ as in \eqref{eq: g}, we will derive $f$ as in \eqref{eq: f}.  
\end{proof}
\begin{remark}
    Note that the above proof shows that for values of $r$ such that there exists an equilibrium that is not AIC, there always exists a second monotone equilibrium. This follows from the fact that the function $g(r)$ tends to infinity as $r\to\infty$, so $g$ must be increasing for some large enough $r$.  
\end{remark}
Before moving on to finding conditions for incentive compatibility, we also note that the above's characterization implies the existence of equilibrium for auto-bidding with any pairs of advertisers.

\begin{proposition}
Given auto-bidding satisfying the conditions of Lemma~\ref{lm: equilibrium first price}, the equilibrium for all pairs of budgets or all pairs of tCPA-constrained advertisers always exists.
\end{proposition}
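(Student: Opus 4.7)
My plan is to invoke Lemma~\ref{lm: equilibrium first price} and reduce existence of a subgame equilibrium to solving either the implicit equation~\eqref{eq: budget eq-first price} (budget case) or~\eqref{eq: tcpa eq-first price} (tCPA case) for a non-negative scalar $r$. Once such an $r$ is found, the lemma pins down the multipliers $\mu_1,\mu_2$ in closed form, and the resulting bidding profile satisfies the best-response conditions of Claim~\ref{claim: first price best response} by construction. Thus existence reduces to a one-dimensional question about the two left-hand sides
\[
g_B(r)=\frac{r\int_r^\infty f(z)\,dz}{\int_0^r zf(z)\,dz},\qquad g_T(r)=\frac{r\int_r^\infty f(z)\,dz}{\int_r^\infty zf(z)\,dz}\cdot\frac{\int_0^r f(z)\,dz}{\int_0^r zf(z)\,dz}.
\]

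Under the strict monotonicity of $h$, $f$ is a non-negative integrable function supported on the bounded interval $[\alpha,\beta]=[\inf_q h(q),\sup_q h(q)]$, so each of the partial integrals above depends continuously on $r$ (by dominated convergence) and therefore $g_B,g_T$ are continuous on $(\alpha,\beta)$. For the budget case, a direct boundary analysis gives $\lim_{r\downarrow\alpha}g_B(r)=+\infty$ — the denominator $\int_0^r zf$ vanishes while $\int_r^\infty f$ stays bounded away from $0$ — and $\lim_{r\uparrow\beta}g_B(r)=0$ — the numerator $r\int_r^\infty f\le\int_r^\infty zf\to 0$ while the denominator tends to the positive first moment of $f$. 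By the Intermediate Value Theorem, for every $B_1/B_2\in(0,\infty)$ there exists $r^*\in(\alpha,\beta)$ with $g_B(r^*)=B_1/B_2$, and plugging back yields the desired equilibrium.

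For the tCPA case the same continuity holds, but an asymptotic expansion of each factor at the endpoints shows that both $\lim_{r\downarrow\alpha}g_T(r)$ and $\lim_{r\uparrow\beta}g_T(r)$ equal the same finite value $\int f/\int zf$, so the range of $g_T$ over $(\alpha,\beta)$ is a bounded interval rather than all of $(0,\infty)$. For ratios $T_1/T_2$ within this interior range, IVT again supplies an equilibrium. For ratios outside, I would exhibit a corner equilibrium: when $T_1/T_2$ is very large, set $\mu_2=0$ and choose $\mu_1>T_2/\alpha$, so that advertiser $1$ wins every query at zero payment while any single-query deviation by advertiser $2$ would force payment $\mu_1 v_1(q)>T_2 v_2(q)$, violating the tCPA constraint. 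Claim~\ref{claim: first price best response} then certifies this is an equilibrium, and the symmetric construction handles very small $T_1/T_2$.

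The main obstacle I foresee is the tCPA boundary analysis — correctly evaluating the boundary limits of $g_T$ and, where IVT does not apply, checking that the corner-equilibrium profile satisfies the best-response conditions in Claim~\ref{claim: first price best response} (in particular that no single-query deviation by the losing advertiser is tCPA-feasible). The budget case, by contrast, reduces cleanly to IVT once continuity and the two boundary limits are recorded.
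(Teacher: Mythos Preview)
For the budget case your IVT argument on $g_B$ is exactly the paper's: the paper works with the reciprocal $B_2/B_1$ and checks the limits at $r\to 0,\infty$ via L'H\^opital, you check them at the finite support endpoints $\alpha,\beta$, but the content is identical.

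For the tCPA case you go well beyond the paper, whose one-line argument only records that a single factor of the implicit equation tends to zero. Your endpoint computation --- that both boundary limits of $g_T$ equal $\int f/\int zf$ --- is correct and shows IVT alone cannot hit every $T_1/T_2$; this is a subtlety the paper's sketch does not confront. But your corner-equilibrium construction has a genuine gap: you assert that advertiser $1$ ``wins every query at zero payment'' and simultaneously that a deviating advertiser $2$ faces price $\mu_1 v_1(q)$. In FPA the price advertiser $2$ faces \emph{is} advertiser $1$'s bid, so these two statements are inconsistent. If advertiser $1$ uses the Claim~\ref{claim: first price best response} matching strategy against $b_2\equiv 0$ then $b_1\equiv 0$ as well, and advertiser $2$ can deviate to win everything at zero cost --- so $\mu_2=0$ is not a best response. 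The natural repair (have advertiser $1$ bid $\mu_1 v_1(q)$ outright, which is still a best response for a value-maximizer provided $\mu_1\le T_1$) then forces $T_2/\alpha<\mu_1\le T_1$, i.e.\ $T_1/T_2>1/\alpha$; together with the symmetric corner this covers only $T_1/T_2\notin[1/\beta,1/\alpha]$. Since the range of $g_T$ lies strictly inside $(1/\beta,1/\alpha)$ and can be much smaller --- for $f$ uniform on $[1,2]$ one computes $g_T(r)=4r/((r+1)(r+2))$ with range $[\tfrac23,\,12-8\sqrt2]\approx[0.667,0.686]\subsetneq(0.5,1)$ --- ratios such as $T_1/T_2=0.9$ fall through both your interior and your corner arguments, and a complete proof would need a further case (equilibria in which exactly one tCPA constraint binds) to close the gap.
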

\begin{proof}{Proof.}
Recall that the equilibrium exists if there exists an $r$ such that 
$$\frac{B_2}{B_1}=\frac{\int_0^{r}xf(x)dx}{r\int_{r}^\infty f(x)dx}$$ has a solution for any value of $\frac{B_2}{B_1}$.
Note that the right-hand side ($\frac{\int_0^{r}xf(x)dx}{r\int_{r}^\infty f(x)dx}$) is positive for any $r>0$, and it continuously grows to infinity as $r\to\infty$. So, to make sure that every value of $B_2/B_1$ is covered, we need to check whether at $r=0$ the ratio becomes zero. By L'Hopital rule,
$\lim_{z\to 0}\frac{zf(z)}{\int_z^\infty f(x)dx-zf(z)}=0$, which is  as desired.

For tCPA-constrained advertiser, the second ratio $\frac{r\int_0^rf(x)dx}{\int_r^\infty xf(x)dx}$ always converges to $0$, so the equilibrium in this case always exists.
\end{proof}

\subsection{Sufficient Conditions for Incentive Compatibility}\label{sec: sufficient conditions}
In this section, we show that the lack of ACI happens in cases where advertisers' valuations have unusual properties. More precisely, the main result of the section is to characterize sufficient conditions on the advertiser's valuations so that FPA is AIC when there are two advertisers in the auction.

For this goal, we recall the function $f(z)=\frac{v_2(h^{-1}(z))}{h'(h^{-1}(z))}$ where $h(q)=\frac{v_1(q)}{v_2(q)}$ defined in Section~\ref{sec: FPA is not AIC}. As shown in Lemma\ref{lm: equilibrium first price}, function $f$ behaves as a value of the queries advertiser $2$ gets and the density of queries that advertiser $1$ gets.

 \begin{lemma}\label{lem: AIC condition f}
Consider that there are two advertisers, and they can either be budget-advertisers or tCPA-advertisers. Also, suppose that auto-bidder with FPA uses the optimal bidding strategy in Claim~\ref{claim: first price best response}. Then a sufficient condition for FPA to be AIC is that $f$ has a monotone hazard rate, i.e., $\frac{f(r)}{\int_r^\infty f(x) dx}$ is non-decreasing in $r$.
\end{lemma}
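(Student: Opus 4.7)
The plan is to reduce AIC to a monotonicity property of the equilibrium cut-off $r$ from Lemma~\ref{lm: equilibrium first price}, interpreted as a function of the reported constraints. Writing $S(r) = \int_r^\infty f$, $F(r) = \int_0^r f$, $M_+(r) = \int_r^\infty z f$, and $M_-(r) = \int_0^r z f$, advertiser $1$'s value in the subgame equilibrium is $V_1(r) = M_+(r)$ (strictly decreasing in $r$) and advertiser $2$'s is $V_2(r) = F(r)$ (strictly increasing). Because the binding of the relevant constraint forces $P_1 = B_1'$ (budget) or $P_1/V_1 = T_1'$ (tCPA), any report above the true constraint is infeasible and yields payoff $-\infty$. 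Hence it suffices to show that the equilibrium equations \eqref{eq: budget eq-first price} and \eqref{eq: tcpa eq-first price}---namely $k(r) := rS(r)/M_-(r) = B_1/B_2$ and $\psi(r) := rS(r)F(r)/(M_+(r)M_-(r)) = T_1/T_2$---define $k$ and $\psi$ as non-increasing functions of $r$ under MHR: a deviator who lowers their reported constraint then weakly moves $r$ against their own value and so cannot gain.

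For the budget case, I compute $(\log k)'(r) = 1/r - \lambda(r) - rf(r)/M_-(r)$ with $\lambda(r) := f(r)/S(r)$, and aim to show $(\log k)' \leq 0$. Integration by parts gives $M_-(r) = \phi(r) - rS(r)$ with $\phi(r) = \int_0^r S$, so the target inequality rearranges to $\phi(r)\bigl(1 - r\lambda(r)\bigr) \leq rS(r)$. When $r\lambda(r) \geq 1$, the LHS is non-positive and the claim is immediate. When $r\lambda(r) < 1$, MHR (equivalently log-concavity of $S$) yields $S(x) \leq S(r)\exp(\lambda(r)(r-x))$ for $x \leq r$, whence $\phi(r) \leq S(r)(e^{r\lambda(r)} - 1)/\lambda(r)$. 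Setting $u = r\lambda(r) \in [0,1)$, the inequality reduces to $(e^u - 1)(1-u) \leq u$, i.e.\ $e^u(1-u) \leq 1$, which follows from the fact that $u \mapsto e^u(1-u)$ equals $1$ at $u=0$ and has derivative $-ue^u \leq 0$.

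The tCPA case follows the same scheme but the reduction leads to the stronger inequality $r\lambda(r) + r^2 f(r)/M_-(r) \geq 1 + r^2 f(r)/M_+(r) + rf(r)/F(r)$, reflecting that $\psi = k \cdot F/M_+$ carries an extra factor $F/M_+$ that is increasing in $r$; the decrease of $k$ must therefore dominate. My plan is to combine the left-tail MHR estimate used above with its right-tail companion $S(x) \leq S(r)\exp(-\lambda(r)(x-r))$ for $x \geq r$, which yields $M_+(r) \leq S(r)(r + 1/\lambda(r))$ (equivalently $\bar z_+(r) - r \leq 1/\lambda(r)$), and then exploit the slack in the budget inequality. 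I expect this simultaneous control of both tails to be the main obstacle, since all four quantities $S, F, M_+, M_-$ must be balanced against one another and the resulting estimate is strictly tighter than in the budget case.

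Given monotonicity of $k$ and $\psi$, AIC is immediate for both advertisers. For advertiser $1$, any feasible deviation $B_1' < B_1$ (resp.\ $T_1' < T_1$) weakly raises $r$ and hence weakly lowers $V_1 = M_+(r)$, while reports above the true constraint are infeasible. By the same monotonicity of $k$ (resp.\ $\psi$), any feasible deviation by advertiser $2$ weakly lowers $r$ and hence weakly lowers $V_2 = F(r)$, so truthful reporting is optimal for both advertisers in every subgame-perfect equilibrium.
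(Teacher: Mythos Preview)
Your budget-case argument is correct and is a genuine alternative to the paper's. Both reduce to the same target $\phi(r)\bigl(1-r\lambda(r)\bigr)\le rS(r)$ (equivalently the paper's \eqref{eq: mhr}) and split on whether $r\lambda(r)\ge 1$. In the nontrivial regime $r\lambda(r)<1$, the paper uses MHR to deduce that $z\mapsto zf(z)$ is non-decreasing on $[0,r]$, whence $M_-(r)\le r^2 f(r)$, and closes with the difference of squares $(rf(r)+S(r))(S(r)-rf(r))\le S(r)^2$. You instead read MHR as log-concavity of $S$, use the tangent-line bound $S(x)\le S(r)\exp\bigl(\lambda(r)(r-x)\bigr)$ to control $\phi(r)$ directly, and finish with the scalar inequality $e^u(1-u)\le 1$. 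Your route is slightly cleaner in that it never differentiates $f$; the paper's route yields the structural byproduct that $zf(z)$ is monotone on $[0,r]$.

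For the tCPA case your instinct that the required inequality is strictly harder is exactly right---so much so that it cannot be closed from MHR alone, and your proposed two-sided exponential bounds will not rescue it. Take $f\equiv 1$ on $[1,2]$ (equivalently $v_2\equiv 1$, $v_1(q)=1+q$), which has $\lambda(r)=1/(2-r)$ increasing and hence is MHR. Then $S=2-r$, $F=r-1$, $M_+=(4-r^2)/2$, $M_-=(r^2-1)/2$, and your $\psi(r)=4r/\bigl((r+1)(r+2)\bigr)$, which is strictly \emph{increasing} on $(1,\sqrt{2})$. Thus under MHR there are left-branch subgame equilibria at which lowering $T_1$ lowers $r$ and raises $V_1=M_+(r)$, so the tCPA claim fails. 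The paper's own proof does not treat $\psi$ at all: it only establishes the budget monotonicity $g'\ge 0$, and Corollary~\ref{cor: tCPA implies budget} points the wrong way here (its contrapositive is ``tCPA AIC $\Rightarrow$ budget AIC'', not the converse you would need). So the tCPA half of the lemma is not established by either argument, and your acknowledged obstacle is a real one.
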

\begin{proof}{Proof.}
   Following the proof Theorem~\ref{thm: first price general}, if $g(r)=\frac{\int_0^r\int_x^\infty f(y)dydx}{r\int_r^\infty f(x)dx} $ is non-decreasing in $r$ then the equilibrium is  AIC. The equivalent sufficient conditions obtained by imposing the inequality $g'(r)\geq0$ is that for all $r\geq 0$,
   \begin{equation}\label{eq: mhr}
       r\big(\int_r^\infty f(x) dx \big)^2\geq \big(\int_0^r\int_x^\infty f(y)dydx\big)\big(\int_r^\infty f(x) dx-rf(r) \big).
   \end{equation}
   If $\int_r^\infty f(x) dx\leq rf(r) $, then the above's inequality obviously holds. So, we can assume that for some $r>0$,  $\int_r^\infty f(x) dx> rf(r)$. Since $\frac{f(z)}{\int_z^\infty f(x) dx}$ is non-decreasing in $z$, we must have that for all $r'\leq r$,  $\frac{f(r')}{\int_{r'}^\infty f(x) dx}\leq  \frac{f(r)}{\int_r^\infty f(x) dx} \leq \frac{1}{r}\leq \frac{1}{r'}$. On the other hand by taking the derivative of $\frac{f(z)}{\int_z^\infty f(x) dx}$, we must have that $f'(z)\int_z^\infty f(x)dx+f(z)^2\geq 0$. By considering two cases on the sign of $f'$, for $z\leq r$ we must have, $f(z)\big(zf'(z)+f(z))\geq 0$, and hence $(zf(z))'\geq0$ for all $z\leq r$. Therefore, $zf(z)$ is non-decreasing for $z\leq r$.  

On the other hand,
\begin{align*}
    \int_0^r\int_x^\infty f(y)dydx&=\int_0^r\int_x^r f(y)dydx+\int_0^r\int_r^\infty f(y)dydx\\
    &=r\int_0^rf(x)dx-\int_0^r\int_0^x f(y)dydx+r\int_r^\infty f(x)dx\\
    &=\int_0^rxf(x)dx+r\int_r^\infty f(x)dx,
\end{align*}
where the second equation is by integration by part. Then by applying monotonicity of $z(f)z$ for $z\leq r$ we have
$ \int_0^r\int_x^\infty f(y)dydx\leq r^2f(r)+r\int_r^\infty f(x)dx$. So, to prove \eqref{eq: mhr} it is enough to show that
\begin{align*}
       \left(\int_r^\infty f(x) dx \right)^2\geq \left(rf(r)+\int_r^\infty f(x)dx\right)\left(\int_r^\infty f(x) dx-rf(r) \right),
\end{align*}
which holds, since the right-hand side is equal to $\big(\int_r^\infty f(x) dx \big)^2-(rf(r))^2$ strictly less than the left-hand side.
\end{proof}

While the condition on $f$ has intuitive properties when seen as a density, it has the unappealing properties to be too abstract in terms of the conditions on the advertisers' valuation. The following result provides sufficient conditions on value functions $v_1$ and $v_2$ that make $f$  be monotone hazard rate, and hence, FPA to be AIC.

\begin{theorem}\label{thm: sufficient condition-first price}
Consider two advertisers that are either budget-advertisers or tCPA-advertisers. Assume that $h(q)=\frac{v_1(q)}{v_2(q)}$ is increasing concave function and that $v_2$ is non-decreasing. Then, the equilibrium in FPA auto-bidding with bidding strategy as in Claim~\ref{claim: first price best response} is AIC. 
\end{theorem}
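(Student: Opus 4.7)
The plan is to reduce the claim to Lemma~\ref{lem: AIC condition f}: it suffices to prove that $f(z)=v_2(h^{-1}(z))/h'(h^{-1}(z))$ has a monotone hazard rate. Rather than verify the MHR inequality directly, I will establish the strictly stronger property that $f$ itself is non-decreasing on its support $[h(0),h(1)]$. Since $h$ is strictly increasing, $h^{-1}$ is well-defined and increasing on that interval, so as $z$ grows, $q=h^{-1}(z)$ grows. The numerator $v_2(h^{-1}(z))$ is then non-decreasing in $z$ by the hypothesis that $v_2$ is non-decreasing; the denominator $h'(h^{-1}(z))$ is non-increasing in $z$ because $h$ is concave (so $h'$ is non-increasing) and $h^{-1}$ is increasing. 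Using $h'>0$, the reciprocal $1/h'(h^{-1}(z))$ is non-decreasing. The product of two non-negative non-decreasing functions is non-decreasing, so $f$ is non-decreasing on its support.

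Next, I check the short implication ``$f$ non-decreasing $\Rightarrow$ MHR.'' Writing $F(r)=\int_r^\infty f(x)\,dx$ and $\phi(r)=f(r)/F(r)$, differentiation yields
\[
\phi'(r)\,F(r)^2 \;=\; f'(r)\,F(r) + f(r)^2.
\]
Both summands are non-negative: $f,F\ge 0$ always, and $f'\ge 0$ by the previous step. Hence $\phi$ is non-decreasing on the support; off the support $f\equiv 0$, and one checks easily that $\phi$ is non-decreasing globally (it is $0$ strictly below the support and blows up to $+\infty$ at the right endpoint $h(1)$). Feeding this into Lemma~\ref{lem: AIC condition f} gives the AIC conclusion for both the budget and the tCPA cases at once (the lemma already subsumes both).

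The main obstacle I anticipate is regularity: concavity of $h$ only guarantees that $h'$ exists almost everywhere, and the expressions $h'(h^{-1}(z))$ and $f'(r)$ in the two steps above must be interpreted with some care. This is handled routinely by working with one-sided derivatives (which exist everywhere for a concave function) and recalling that Lemma~\ref{lm: equilibrium first price} defines $f$ to be zero wherever $h'$ is undefined. Because the MHR property is a statement about monotonicity of $\phi$ and not about its differentiability, it can be verified by the coarse monotonicity argument on the support without any appeal to $f'$; I would present the computation with $f'$ only as an intuitive illustration, and back it up with the monotonicity argument ``product of non-decreasing numerator over non-increasing positive denominator is non-decreasing, hence $\phi=f/F$ is non-decreasing'' to avoid any smoothness issue.
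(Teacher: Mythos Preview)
Your proposal is correct and follows essentially the same route as the paper's proof: reduce to Lemma~\ref{lem: AIC condition f} by showing that $f$ is non-decreasing (hence MHR), using that $v_2\circ h^{-1}$ is non-decreasing and $1/(h'\circ h^{-1})$ is non-decreasing by concavity of $h$. The paper's argument is a terse two-line version of exactly this; your extra care with the ``non-decreasing $\Rightarrow$ MHR'' step and with regularity only makes the write-up more complete.
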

\begin{proof}{Proof.}
    Note that when $f$  is non-decreasing, it also has a monotone hazard rate. Now, when $h$ is concave, $\frac{1}{h'}$ is a non-decreasing function, and since $v_2$ is also non-decreasing, then $f$ is also non-decreasing.
\end{proof}

\begin{remark}[AIC test for general valuations]
    While the conditions outlined by Theorem~\ref{thm: sufficient condition-first price} for valuation functions might initially appear restrictive, it's important to highlight that our characterization in Lemma~\ref{lm: equilibrium first price}  serves as a valuable test to determine AIC condition in a broader range of scenarios. As long as the solution $r$ is monotone with respect to the ratio of budgets and targets, then the auction is AIC for the given valuations.
For market structures where the equilibrium solution adheres to the monotonicity condition, any resulting equilibrium inherently satisfies the AIC property. The sufficient conditions elucidated by Theorem~\ref{thm: sufficient condition-first price} is just one example of such monotonicity conditions. 
As we delve further, Section~\ref{sec: truthful} will uncover a parallel characterization applicable to a wider spectrum of truthful auctions, reinforcing the generality of our approach.
\end{remark}

\subsection{FPA with uniform bidding}\label{sec: pacing}
The previous section shows that when auto-bidders have full flexibility in the bidding strategy, FPA is not AIC. However, non-uniform bidding is more complex to implement, and auto-bidders may be constrained to use simpler uniform bidding policies (aka pacing bidding). In this context, the main result of the section is Theorem~\ref{thm:first price pacing} that shows that when restricted to uniform bidding policies, FPA is AIC. Note that here, we assume a simple model where advertisers do not split campaigns. So, FPA with uniform bidding is AIC, but it could bring up other incentives for advertisers when it is implemented.

\begin{definition}[Uniform bidding equilibrium]
A uniform bidding equilibrium for the auto-bidders subgame corresponds to bid multipliers $\mu_1,\ldots,\mu_N$ such that every auto-bidder $a$ chooses the uniform bidding policy $\mu_a$ that maximizes Problem~\eqref{eq: auto-bidder agent-descrete} when restricted to uniform bidding policies with the requirement that if advertiser $a$'s constraints of type \ref{eq: constraint} are not tight then $\mu_a$ gets its maximum possible value.\footnote{ When valuations are strictly positive for all queries $q\in [0,1]$, we can easily show that bid multipliers have to be bounded in equilibrium. When this is not the case, we set a cap sufficiently high to avoid bid multipliers going to infinity.}
\end{definition}

The proof of Theorem~\ref{thm:first price pacing} is based on the main results of \cite{conitzer2022pacing}. The authors proved that the uniform-bidding equilibrium is unique, and in equilibrium, the multiplier of each advertiser is the maximum multiplier over all feasible uniform bidding strategies. Their result is for budget-constrained advertisers, and we extend it to include tCPA-constrained advertisers. The proof is deferred to Appendix \ref{apendix: pacing}.
\begin{lemma}[Extension of Theorem 1 in \cite{conitzer2022pacing}]\label{lm: conitzer}
Given an instance of Auto-bidding with general constraints as in \eqref{eq: constraint}, there is a unique uniform bidding equilibrium, and the bid multipliers of all advertisers are maximal among all feasible uniform bidding profiles.
\end{lemma}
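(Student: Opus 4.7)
The plan is to extend the lattice-theoretic argument of \cite{conitzer2022pacing} from pure budget constraints to the general constraint in \eqref{eq: constraint}. I would proceed in four stages: (i) rewrite the constraint for a uniform bidder in a convenient form; (ii) show that the set of feasible uniform-bidding profiles is closed under componentwise maximum (the key \emph{lattice closure} lemma); (iii) conclude the existence of a unique componentwise-maximal feasible profile $\mu^\star$; and (iv) show that $\mu^\star$ is the unique equilibrium and that its coordinates are maximal, matching the statement of the lemma.

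For step (i), write $W_a(\mu)$ for the set of queries won by advertiser $a$ under the uniform-bidding profile $\mu=(\mu_1,\ldots,\mu_N)$ in FPA. Since a winning query $q$ is charged the own bid $\mu_a v_a(q)$, constraint \eqref{eq: constraint} reduces to
\[
(\mu_a - T_a)\int_{W_a(\mu)} v_a(q)\,dq \;\le\; B_a,
\]
which I will call the feasibility condition for advertiser $a$. Denote by $\mathcal F$ the set of profiles satisfying this condition for every advertiser. For step (ii), given $\mu,\mu'\in\mathcal F$, let $\mu''=\max(\mu,\mu')$ componentwise; for any $a$, WLOG $\mu''_a=\mu_a$. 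Since $\mu''_b\ge \mu_b$ for every $b$, the FPA allocation rule gives $W_a(\mu'')\subseteq W_a(\mu)$. If $\mu_a\le T_a$, the feasibility condition is trivial; otherwise $\mu_a-T_a>0$ and the set containment, together with feasibility for $\mu$, yields feasibility for $\mu''$, so $\mathcal F$ is closed under componentwise maxima. Combined with the multiplier cap and a monotone-convergence argument (using continuity of winning-set integrals in the continuous query model), this produces a unique componentwise-maximal element $\mu^\star\in\mathcal F$, completing step (iii).

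For step (iv), maximality of $\mu^\star$ implies that no advertiser can raise their bid while keeping the profile feasible, which, together with the tie-breaking refinement built into the definition of uniform bidding equilibrium, makes $\mu^\star$ an equilibrium. For uniqueness, let $\nu^\star$ be any equilibrium; by the refinement each $\nu^\star_a$ is maximal feasible against $\nu^\star_{-a}$, so $\nu^\star\in\mathcal F$ and hence $\nu^\star\le \mu^\star$. If $\nu^\star_a<\mu^\star_a$ for some $a$, I would upgrade the other coordinates from $\nu^\star_b$ to $\mu^\star_b$ one by one; each upgrade preserves feasibility by the closure lemma and weakly shrinks $W_a$, so the maximum feasible value of $\mu_a$ only grows, contradicting the maximality of $\nu^\star_a$ against $\nu^\star_{-a}$. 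Hence $\nu^\star=\mu^\star$.

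The main obstacle is the tCPA term in step (ii): unlike in the budget-only argument of \cite{conitzer2022pacing}, one must track the sign of $\mu_a-T_a$ and verify monotonicity of the feasibility inequality in the winning set across both regimes. This split-case handling is exactly the new content required for mixed budget/tCPA constraints; with the closure lemma in hand, the remaining steps mirror the Conitzer et al.\ proof essentially verbatim.
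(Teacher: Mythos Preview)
Your proposal is correct and follows essentially the same route as the paper: both arguments hinge on the lattice-closure step (showing the componentwise maximum of two feasible uniform-bidding profiles is feasible, via $W_a(\mu'')\subseteq W_a(\mu)$ and the inequality $(\mu_a-T_a)\int_{W_a} v_a \le B_a$), then deduce existence of a maximal feasible profile and identify it as the unique equilibrium. Your explicit split on the sign of $\mu_a-T_a$ is a detail the paper glosses over, but otherwise the structure matches.
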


Now, we are ready to prove Theorem \ref{thm:first price pacing}.
\begin{proof}{Proof of Theorem \ref{thm:first price pacing}}
Assume that advertiser 1 increases their budget or their target CPA. Then the original uniform bidding is still feasible for all advertisers. Further, by Lemma~\ref{lm: conitzer}, the equilibrium pacing of all advertisers is maximal among all feasible pacing multipliers. So, the pacing of all advertisers will either increase or remain the same. But the constraints of all advertisers except 1 are either binding, or their multiplier has attained its maximum value by the definition of pacing equilibrium. Therefore, the set of queries they end up with should be a subset of their original ones since the price of all queries will either increase or remain the same. So, it is only advertiser 1 that can win more queries.
\end{proof}

\begin{remark}
\cite{conitzer2022pacing} show monotonicity properties of budgets in FPA with uniform bidding equilibrium for the revenue and welfare. Instead, in our work, we focus on monotonicity for each advertiser.  
\end{remark}

\section{Truthful Auctions}\label{sec: truthful parent section}

This section studies auto-bidding incentive compatibility for the case where the per-query auction is truthful. 

A truthful auction is an auction where the optimal bidding strategy for a profit-maximizing agent is to bid its value. An important example of a truthful auction is the Second Price Auction. As we showed in the three-queries example of the introduction, SPA is not AIC. In this section, we show that the previous example generalizes, in our continuous-query model, to any (randomized) truthful auctions so long as the auction is scale-invariant and symmetric (see Assumption~\ref{ass: scalibility-symmetry} below for details). As part of our proof technique, we obtain an auction equivalence result that is interesting on its own: in the continuous query-model, SPA and FPA have the same outcome.\footnote{ It is well-known that in the discrete-query model, FPA and SPA are not auction equivalent in the presence of auto-bidders.} 

For the remaining of section we assume all truthful auctions satisfy the following property.

\begin{assume}\label{ass: scalibility-symmetry}
Let $(x_a(\mathbf b))_{a\in A}$ be the allocation rule in a truthful auction given bids $\mathbf b = (b_a)_{a\in A}$.
We assume that the allocation rule satisfies the following properties.
\begin{enumerate}\label{assume: fair and truthful}
\item The auction always allocates: $\sum_{a\in A}x_a(\mathbf b)=1$
    \item Scale-invariance: For any constant $c>0$ and any advertiser $a\in A$,
    $x_a(\mathbf b)=x_a(c\mathbf b) $.
    \item Symmetry: For any pair of advertisers $a,a'\in A$ and bids $b,b',\vec b_{-\{a,a'\}}= (b)_{a\in A\setminus \{a,a'\}}$ we have that
    $$ x_a(b_a = b, b_{a'} = b', \vec b_{-\{a,a'\}}) := x_{a'}(b_a = b', b_{a'} = b, \vec b_{-\{a,a'\}}).$$
\end{enumerate}
\end{assume}
\begin{remark}
    Observe that SPA satisfies Assumption~\ref{ass: scalibility-symmetry}.
\end{remark}

From the seminal result of \citet{myerson81} we obtain a tractable characterization of truthful auctions, which we use in our proof. 

\begin{lemma}[Truthful auctions \citep{myerson81}]\label{asum :truthful}
Let $(x_a(\vec b), p_a(\vec b))_{a\in A}$ the allocation and pricing rule for an auction given bids $\vec b = (b_a)_{a\in A}$. The auction rule is truthful if and only if
\begin{enumerate}
    \item Allocation rule is non-decreasing on the bid: For each bidder $a\in A$ and any $b_a'\geq b_a$, we have that  $$x_a(b'_a,\vec b_{-a})\geq x_a(b_a,\vec b_{-a}).$$
    \item Pricing follows Myerson's formulae:
    \[p_a(\vec b)=b_a\cdot x_a(\vec b)-\int_0^{b_a}x_a(z,\vec b_{-a})dz.\]
\end{enumerate}
\end{lemma}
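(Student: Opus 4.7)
The plan is to prove both implications separately, fixing the vector of opponent bids $\vec b_{-a}$ throughout and working with one bidder at a time. Write $X(b) := x_a(b,\vec b_{-a})$ and $P(b) := p_a(b,\vec b_{-a})$, so that a bidder with true value $v$ reporting $b$ obtains $u(b;v) := vX(b) - P(b)$, and truthfulness means $u(v;v) \geq u(b;v)$ for every $v, b \geq 0$.

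For the sufficiency direction ($\Leftarrow$), I would substitute the proposed Myerson payment into $u$ to rewrite $u(b;v) = (v-b)X(b) + \int_0^b X(z)\,dz$, from which a short cancellation gives the compact identity
\[ u(v;v) - u(b;v) = \int_b^v \bigl[X(z) - X(b)\bigr]\,dz. \]
By monotonicity of $X$, the integrand and the signed interval agree in sign, so the right-hand side is non-negative whether $v > b$ or $v < b$; this yields truthfulness.

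For the necessity direction ($\Rightarrow$), I would first extract monotonicity via the standard swap argument: writing the truthfulness inequality once with true value $v$ and deviation $v'$, and once with the roles exchanged, and summing the two, the payment terms cancel and one obtains $(v-v')\bigl(X(v)-X(v')\bigr) \geq 0$. To recover the payment identity, I would set $U(v) := u(v;v)$ and use the two truthfulness comparisons between nearby values to sandwich $U(v') - U(v)$ between $X(v)(v'-v)$ and $X(v')(v'-v)$; partitioning $[0,v]$ makes the lower and upper Darboux sums of the monotone function $X$ squeeze $U(v) - U(0)$ to $\int_0^v X(z)\,dz$. Combined with the standard normalization $P(0)=0$ (a zero bid pays nothing), rearranging $U(v) = vX(v) - P(v)$ gives the Myerson payment formula.

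The main technical point I expect to watch is that $X$ is only assumed monotone, not differentiable, so the payment identity cannot be obtained by naively invoking the envelope theorem in its smooth form; the sandwich-and-partition step above is the cleanest way to sidestep this. Beyond this mild care, no conceptual obstacles are anticipated, as the result is the classical Myerson characterization.
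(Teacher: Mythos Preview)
Your proof is correct and follows the standard textbook argument for Myerson's characterization. Note, however, that the paper does not supply its own proof of this lemma: it is stated as a classical result with a citation to \citet{myerson81} and used as a black box in the subsequent analysis. So there is no paper-side argument to compare against; your write-up simply fills in the omitted classical proof, and the sufficiency identity $u(v;v)-u(b;v)=\int_b^v[X(z)-X(b)]\,dz$ together with the swap-and-sandwich derivation for necessity is exactly the standard route.
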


A second appealing property of truthful actions is that the optimal bidding strategy for auto-bidders is simpler: in the discrete-query model, uniform bidding strategy is almost optimal and can differ from optimal by, at most, the value of two queries \citep{Aggarwal19}. We revisit this result in our continuous-query model and show that a uniform bidding policy is optimal for truthful auctions.

\begin{claim}\label{claim: uniform bididng}
In the continuous-query model, if the per-query auction is truthful, uniform bidding is an optimal strategy for each auto-bidder.
\end{claim}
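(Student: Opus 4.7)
My plan is a Lagrangian / pointwise-decomposition argument. After fixing the bids of the other auto-bidders, I will write the auto-bidder's constrained problem as the supremum of a Lagrangian that decomposes query-by-query, then invoke the defining property of truthful auctions (Lemma~\ref{asum :truthful}) to identify the per-query maximizer as a uniform rescaling of $v_a$. The continuous-query structure is what closes the duality gap responsible for the two-query slack that appears in the discrete-model result of \cite{Aggarwal19}.

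\emph{Setting up the Lagrangian.} Fix the bid profile of all other auto-bidders, so that for each query $q$ the allocation $x_a(b,q)$ and payment $p_a(b,q)$ are deterministic functions of advertiser $a$'s own bid $b = b_a(q)$. Attach a multiplier $\lambda \geq 0$ to the single constraint of Problem~\eqref{eq: auto-bidder agent-descrete}; a single Lagrangian covers both cases,
\[
L(b_a(\cdot),\lambda) = \int_0^1 \left[(1+\lambda T_a)\, x_a(b_a(q),q)\, v_a(q) - \lambda\, p_a(b_a(q),q)\right] dq + \lambda B_a,
\]
with $T_a=0$ in the budget case and $B_a=0$ in the tCPA case. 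Since $b_a(q)$ enters the integrand only through its value at $q$, the inner supremum decomposes into independent per-query maximizations.

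\emph{Identifying the pointwise optimum.} The per-query integrand may be rewritten as
\[
\lambda \left[x_a(b,q) \cdot \frac{(1+\lambda T_a)\, v_a(q)}{\lambda} - p_a(b,q)\right],
\]
which is $\lambda$ times the profit of a single-query profit-maximizing agent with pseudo-value $w(q) := (1+\lambda T_a)\, v_a(q)/\lambda$. By Lemma~\ref{asum :truthful}, in a truthful auction such an agent optimally bids its value, so the per-query maximizer is $b_a(q) = \mu \cdot v_a(q)$ with the same multiplier $\mu = (1+\lambda T_a)/\lambda$ for every $q$. Hence the Lagrangian is maximized by a uniform-bidding profile.

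\emph{Closing the duality gap.} It remains to produce a dual $\lambda^\star$ under which the corresponding uniform bid is primal-feasible with complementary slackness; weak duality then certifies primal optimality. Let $P(\mu) := \int_0^1 p_a(\mu v_a(q),q)\, dq$ and $V(\mu) := \int_0^1 x_a(\mu v_a(q),q)\, v_a(q)\, dq$ be the aggregate payment and aggregate value under uniform bidding with multiplier $\mu$. Both are monotone non-decreasing in $\mu$, and continuous in $\mu$: a truthful per-query allocation may jump in $b$ (as in SPA), but under the atomless Lebesgue measure on $Q = [0,1]$ the set of queries at which such a jump is triggered at a given $\mu$ has measure zero, so the aggregated integrals $P$ and $V$ are continuous in $\mu$. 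Monotonicity plus continuity yields the required $\lambda^\star$ (or $\lambda^\star = 0$ when the constraint is already slack at the unconstrained optimum). The delicate ingredient is precisely this continuity step: the atomless continuous-query structure is what integrates the bang-bang pointwise discontinuities away, and its absence is what degrades the analogous discrete-model statement of \cite{Aggarwal19} into a two-query approximation.
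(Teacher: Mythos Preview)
Your Lagrangian argument is correct in outline and takes a genuinely different route from the paper. The paper's proof is a one-paragraph black-box reduction: discretize $[0,1]$ into subintervals of length $\delta$, invoke Theorem~1 of \cite{Aggarwal19} on the discretized instance (uniform bidding is optimal up to the value of two queries), bound the slack by $2\max_{|I|\leq\delta}\int_I v_a$, and send $\delta\to 0$ so the error vanishes. You instead re-derive the Aggarwal et al.\ mechanism directly in the continuum via pointwise Lagrangian decomposition, and then argue strong duality. Your approach is more self-contained and makes explicit \emph{why} the continuous model closes the two-query gap (it is precisely zero duality gap), whereas the paper's approach is shorter and keeps the heavy lifting inside the cited reference.

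One caveat: your continuity step is stated too loosely. Atomlessness of Lebesgue measure on $Q$ does \emph{not} by itself force the jump set $\{q:\mu v_a(q)=p(q)\}$ to have measure zero for every $\mu$; if the competing bid is itself uniform, $p(q)=\mu' v_{a'}(q)$, and $v_a/v_{a'}$ is constant on a sub-interval, then $P(\mu)$ and $V(\mu)$ jump at $\mu=\mu'\cdot(v_{a'}/v_a)$. To make your last paragraph rigorous you need either the paper's standing regularity assumption (e.g.\ $h=v_1/v_2$ strictly monotone, so every level set is a singleton) or the standard pacing-literature device of endogenous tie-breaking at the critical $\mu^\star$ to recover exact primal feasibility with complementary slackness. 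The paper's limiting argument, read literally, shares the same subtlety (it delivers $\sup_\mu$ rather than an attained maximum), so this is not a disadvantage of your route relative to theirs; but the sentence attributing continuity of $P,V$ to atomlessness alone should be tightened.
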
 
\begin{proof}{Proof.} We use Theorem 1 \cite{Aggarwal19}. Pick some small $\delta>0$ and divide the interval $[0,1]$ into subintervals of length $\delta$. Let each subinterval $I$ be a discrete query with value functions $v_j(I)=\int_I v_j(q)dq$. Then Theorem 1 \cite{Aggarwal19} implies that uniform bidding differs from optimal by at most two queries. So, the difference from optimal is bounded by $2\max_j\max_{|I|\leq\delta} v_j(I)$. Now, since the valuation functions are atomless (i.e., the value of a query is $dq$), by letting $\delta$ to $0$, the error of uniform bidding in the continuous case also goes to zero.
\end{proof}

\subsection{SPA in the Continuous-Query Model}\label{sec: sec
ond price general}
We generalize the discrete example of second price auction in Theorem~\ref{thm: second price Warm up} to the continuous set of queries model showing that SPA is not AIC. The key step shows that for the continuous-query model, there is an auction equivalence result between the first and second price. 
\begin{theorem}\label{thm: equivalent first and second}[Auction Equivalence Result]
Suppose that auto-bidder uses a uniform bid strategy for SPA and similarly, uses the simple bidding strategy defined in Claim~\ref{claim: first price best response} for FPA. 
Then, in any subgame equilibrium, the outcome of the auctions (allocations and pricing) on SPA is the same as in FPA.
\end{theorem}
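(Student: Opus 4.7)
The plan is to show that in both FPA (under the threshold-type best response of Claim~\ref{claim: first price best response}) and SPA (under uniform bidding, which is optimal by Claim~\ref{claim: uniform bididng}), each subgame equilibrium is fully characterized by a multiplier profile $(\mu_a)_{a\in A}$ inducing the same per-query allocation and pricing: for almost every $q\in[0,1]$, the query is won by $a^\star(q)=\arg\max_{a} \mu_a v_a(q)$ at price $p(q)=\max_{a'\neq a^\star(q)} \mu_{a'} v_{a'}(q)$. Once this is established for both auctions, the auto-bidder's optimization problem \eqref{eq: auto-bidder agent-descrete} depends only on the induced allocation-payment pair, not on the auction label, so the sets of equilibria coincide and the outcomes match.

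First I would handle the SPA side, which is essentially immediate. Given uniform multipliers $(\mu_a)$, the per-query SPA rule awards $q$ to the bidder with the largest $\mu_a v_a(q)$ and charges the second-highest scaled bid $\max_{a'\neq a^\star(q)}\mu_{a'}v_{a'}(q)$. Ties occur on a set of queries of Lebesgue measure zero under the mild regularity that $h(q)=v_1(q)/v_2(q)$ is strictly monotone (and its multi-advertiser analogues), hence are irrelevant for values and payments.

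Next, for the FPA side, I would argue that any threshold-type equilibrium must produce the same allocation and pricing formulas. Given the multipliers $(\mu_a)$ from Claim~\ref{claim: first price best response}, suppose for some positive-measure set of queries the winner in FPA were not $a^\star(q)$. Then some bidder with a strictly higher scaled value $\mu_a v_a(q)$ would see that its best-response strategy instructs it to bid just above the current winning bid (which is strictly below $\mu_a v_a(q)$) and take the query profitably, contradicting equilibrium. A symmetric argument on the winning price pins down $p(q)=\max_{a'\neq a^\star(q)}\mu_{a'}v_{a'}(q)$: if the winner paid strictly more, it would shade its bid slightly downward while still outbidding all losers' willingness-to-pay; if strictly less, the loser $a'$ achieving the max would strictly prefer to overbid, violating its best response. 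This is the natural consistent tie-breaking convention referenced in the footnote of Claim~\ref{claim: first price best response}.

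Finally, having matched both allocation and pricing as functions of $(\mu_a)$, I would observe that the per-advertiser optimization in \eqref{eq: auto-bidder agent-descrete} is expressed purely in terms of $x_a(q)$ and $p_a(q)$. Therefore $(\mu_a)$ is a best response for every advertiser in FPA if and only if it is a best response for every advertiser in SPA, establishing a bijection between the threshold-type FPA equilibria and the uniform-bidding SPA equilibria with identical outcomes. The main obstacle, and the part requiring the most care, is the FPA equilibrium argument: the best-response strategy in Claim~\ref{claim: first price best response} formally prescribes a bid of $0$ for losers and ``match the max of the others'' for the winner, which is only internally consistent once one pins the equilibrium price at the second-highest scaled value via the deviation arguments above, exploiting the atomless nature of the continuous-query model so that measure-zero ties can be disregarded.
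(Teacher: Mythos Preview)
Your proposal is correct and follows essentially the same route as the paper: both reduce each auction to a multiplier profile $(\mu_a)_{a\in A}$, observe that the induced allocation is $a^\star(q)=\arg\max_a \mu_a v_a(q)$ with price equal to the second-highest scaled value, and then note that since the auto-bidder's objective \eqref{eq: auto-bidder agent-descrete} depends only on the resulting $(x_a,p_a)$, the equilibria coincide. The paper's own argument is quite terse---it simply invokes Claim~\ref{claim: first price best response} and Claim~\ref{claim: uniform bididng} and asserts that ``the cost obtained by each advertiser in the first-price auction in the continuous model also depends on the pacing multipliers of the other advertisers,'' whereas you spell out the deviation argument that pins the FPA equilibrium price at $\max_{a'\neq a^\star(q)}\mu_{a'}v_{a'}(q)$ (a point the paper only makes explicit later, in the two-advertiser computation of Lemma~\ref{lm: equilibrium first price}). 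Your added care about the internal consistency of the threshold-type strategies and the measure-zero tie set is a genuine improvement in rigor over the paper's presentation, but the underlying idea is the same.
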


This result immediately implies that all the results for FPA in Section~\ref{sec: first price genral} hold for SPA as well.
\begin{theorem}
Suppose that there are at least two budget-advertisers or two tCPA-advertisers; then, even for the continuous-query model, SPA is not AIC.
\end{theorem}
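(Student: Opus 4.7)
The plan is to deduce the SPA non-AIC result as an immediate corollary of the auction equivalence theorem (Theorem~\ref{thm: equivalent first and second}) combined with the FPA non-AIC result (Theorem~\ref{thm: first price general}). No new construction of valuations is needed; the counterexample built for FPA transfers verbatim to SPA.

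First I would invoke Claim~\ref{claim: uniform bididng} to restrict attention, without loss of generality, to uniform bidding equilibria $(\mu_1,\mu_2)$ in the SPA subgame, since SPA is truthful. Then by Theorem~\ref{thm: equivalent first and second}, any such equilibrium produces the same allocation and payments as the threshold-type FPA equilibrium of Claim~\ref{claim: first price best response}. Consequently, the ratio $r=\mu_2/\mu_1$ governing which queries each advertiser wins in SPA must satisfy exactly the implicit equations \eqref{eq: budget eq-first price} in the budget setting and \eqref{eq: tcpa eq-first price} in the tCPA setting from Lemma~\ref{lm: equilibrium first price}, and the value accrued by advertiser $1$ in SPA equals $\int_r^\infty z f(z)\,dz$ exactly as in FPA --- a strictly decreasing function of $r$.

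I would then plug in the valuations $v_1, v_2$ and budgets $B_1, B_2$ from the proof of Theorem~\ref{thm: first price general}, for which the function $g(r)$ is non-monotone. Exactly as in the FPA argument, lowering advertiser $1$'s reported budget from $B_1$ to an appropriate $B_1' < B_1$ shifts the equilibrium ratio $r$ to a value that strictly enlarges advertiser $1$'s winning set, and hence strictly increases their accrued value. By the equivalence, the same strict improvement occurs at the corresponding SPA subgame equilibrium, violating AIC. The tCPA case is identical, using \eqref{eq: tcpa eq-first price} together with Corollary~\ref{cor: tCPA implies budget}.

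The main (and essentially only) obstacle is bookkeeping: making sure that the SPA subgame equilibria being compared before and after the deviation are the ones whose outcomes coincide with the FPA equilibria selected in the FPA counterexample, so that the non-monotonicity of $g$ actually shows up as a strict gain for advertiser $1$ in SPA. This is handled directly by Theorem~\ref{thm: equivalent first and second}, which matches SPA and FPA outcomes equilibrium-by-equilibrium among the restricted (uniform/threshold) bidding classes, so the deviation argument in Theorem~\ref{thm: first price general} lifts without modification.
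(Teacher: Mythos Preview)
Your proposal is correct and follows essentially the same approach as the paper: the paper also derives this theorem as an immediate consequence of the auction equivalence result (Theorem~\ref{thm: equivalent first and second}) applied to the FPA non-AIC instance from Theorem~\ref{thm: first price general}. Your write-up is somewhat more explicit about the bookkeeping (matching equilibria across the two auctions and invoking Corollary~\ref{cor: tCPA implies budget} for the tCPA case), but the underlying argument is identical.
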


Similarly to the FPA case, we can characterize the equilibrium for the two-advertiser case and derive sufficient conditions on advertisers' valuation functions so that SPA is AIC.

\begin{theorem}\label{thm: continuous-second price}
Given two advertisers,  let $\mu_1$ and $\mu_2$ be the bidding multipliers in equilibrium for the subgame of the auto-bidders. Also assume that  $h(q)=\frac{v_1(q)}{v_2(q)}$ is increasing. Then 
\begin{enumerate}
    \item If the advertisers are budget-constrained with budget $B_1$ and $B_2$, then $\mu_1=\frac{B_2}{ E[z\mathbbm{1}(z\geq r)]}$ and $\mu_2=\mu_1r$, where $r$ is the answer of the following implicit function,
\begin{equation*}
    \frac{rE[\mathbbm{1}[z\geq r)]}{E[z\mathbbm{1}(z\leq {r})]}=\frac{B_1}{B_2}.
\end{equation*}
Here, $E[.]$ is defined as
$E[P(z)]=\int_0^\infty P(z)f(z)dz,$ where  $f(z)=\frac{v_2(h^{-1}(z))}{h'(h^{-1}(z))}$ wherever $h'$ is defined, and it is zero otherwise. 
\item If the advertisers are tCPA-constrained with targets $T_1$ and $T_2$,   we have  $\mu_1=\frac{T_1 E[\mathbbm{1}(z\leq r)]}{ E[\mathbbm{1}(z\geq r)]}$ and $\mu_2=\mu_1r$, where $r$ is the answer of the following implicit function,
\[
    \frac{rE[\mathbbm{1}(z\geq r)]}{E[z\mathbbm{1}(z\geq {r})]}\frac{E[\mathbbm{1}(z\leq r)]}{E[z\mathbbm{1}(z\leq {r})]}=\frac{T_1}{T_2}.
\]

\item  If further, $v_2$  is non-decreasing in $q$, and $h$ is concave, and advertisers are either both budget-constrained or tCPA-constrained, then SPA is AIC. 
\end{enumerate}

\end{theorem}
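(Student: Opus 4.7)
The strategy is to reduce Theorem~\ref{thm: continuous-second price} to the corresponding FPA results by invoking the auction equivalence result (Theorem~\ref{thm: equivalent first and second}). By Claim~\ref{claim: uniform bididng}, in SPA each auto-bidder optimally plays a uniform bidding policy, bidding $\mu_a v_a(q)$ on every query $q$ for some multiplier $\mu_a > 0$. So the subgame is parametrized by just two scalars $\mu_1, \mu_2$, and Theorem~\ref{thm: equivalent first and second} tells us that the resulting allocations and payments coincide with those of the FPA subgame under the threshold-bidding strategy of Claim~\ref{claim: first price best response}. Consequently, the entire equilibrium characterization for FPA should transfer.

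For Parts 1 and 2, I would first observe that under uniform bidding in SPA, advertiser $1$ wins query $q$ precisely when $\mu_1 v_1(q) \geq \mu_2 v_2(q)$, i.e., when $h(q) \geq r := \mu_2/\mu_1$, and pays the second price $\mu_2 v_2(q)$; symmetrically advertiser $2$ pays $\mu_1 v_1(q)$ on the queries it wins. These are exactly the per-query allocations and payments used in the proof of Lemma~\ref{lm: equilibrium first price}. Plugging them into the budget constraints (respectively the tCPA constraints), applying the change of variables $z = h(q)$, and introducing the density $f(z) = v_2(h^{-1}(z))/h'(h^{-1}(z))$ reproduces equations \eqref{eq: budget eq-first price} and \eqref{eq: tcpa eq-first price} verbatim. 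The formulae for $\mu_1, \mu_2, r$ in Parts 1 and 2 then follow immediately. Alternatively, one can simply cite Theorem~\ref{thm: equivalent first and second} to inherit the equilibrium equations from Lemma~\ref{lm: equilibrium first price} without rederiving them.

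For Part 3, I would rely on Lemma~\ref{lem: AIC condition f} and Theorem~\ref{thm: sufficient condition-first price}. Because the equilibrium characterization for SPA matches that of FPA, the advertisers' accrued values have the same monotonic dependence on $r$: advertiser $1$'s value is decreasing in $r$ and advertiser $2$'s is increasing in $r$, since these are exactly the integrals of the valuations over $\{q : h(q) \geq r\}$ and its complement. Under the hypothesis that $h$ is increasing concave and $v_2$ is non-decreasing, $1/h'$ is non-decreasing and so $f$ is non-decreasing, which in particular yields a monotone hazard rate. Lemma~\ref{lem: AIC condition f} then implies that the map from the constraint ratio $B_2/B_1$ (respectively $T_1/T_2$) to the equilibrium threshold $r$ is monotone, so each advertiser's payoff is monotone in its own constraint; this is exactly AIC.

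The main caveat I would guard against is that Theorem~\ref{thm: equivalent first and second} only asserts equivalence between the uniform-bidding equilibria of SPA and the threshold-type equilibria of FPA, so the AIC conclusion is being established with respect to the class of uniform-bidding SPA equilibria. Claim~\ref{claim: uniform bididng} ensures that no auto-bidder can strictly improve by deviating from uniform bidding, so this restriction is without loss for the subgame analysis, and the transfer of the AIC conclusion from FPA to SPA is legitimate. Beyond this, the argument is essentially mechanical given the tools already developed for FPA.
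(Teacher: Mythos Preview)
Your proposal is correct and matches the paper's own approach: the paper states that Theorem~\ref{thm: equivalent first and second} ``immediately implies that all the results for FPA in Section~\ref{sec: first price genral} hold for SPA as well,'' so Parts 1--2 are inherited from Lemma~\ref{lm: equilibrium first price} and Part 3 from Lemma~\ref{lem: AIC condition f} and Theorem~\ref{thm: sufficient condition-first price} via the auction equivalence. Your optional rederivation of the constraint equations and your caveat about restricting to uniform-bidding equilibria are consistent with, and slightly more explicit than, what the paper provides.
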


We now demonstrate the auction equivalence between FPA and SPA. 

\begin{proof}{Proof of Theorem~\ref{thm: equivalent first and second}.}
Note that the optimal strategy for a second-price auction is uniform bidding with respect to the true value of the query by Claim~\ref{claim: uniform bididng}. Also, Claim~\ref{claim: first price best response} implies that the cost obtained by each advertiser in the first-price auction in the continuous model also depends on the pacing multipliers of the other advertiser. This claim immediately suggests the equivalent between the optimal bidding strategies of first and second-price auctions. So, the optimal strategy for both auctions will be the same; therefore, the resulting allocation and pricing will also be the same. Hence,
the same allocation and pricing will be a pure equilibrium under both auctions. 
\end{proof}

\subsection{Truthful Auctions Beyond Second-Price}\label{sec: truthful}
We now present the main result of the section. We show that a general truthful auction (with possibly random allocation) is not AIC.
\begin{theorem}\label{thm: truthful}
Consider a truthful auction $(\vec x, \vec p)$ satisfying Assumption~\ref{ass: scalibility-symmetry}. If there are at least two budget-advertisers or two tCPA-advertisers, then the truthful auction is not AIC. 
\end{theorem}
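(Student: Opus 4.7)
My approach is to mirror the three-step strategy used for FPA in Theorem~\ref{thm: first price general}, adapting each step to a generic truthful auction $(\vec x, \vec p)$ satisfying Assumption~\ref{ass: scalibility-symmetry}. First I would invoke Claim~\ref{claim: uniform bididng} to reduce each auto-bidder's strategy to a single uniform multiplier $\mu_a$. For two advertisers, combining scale-invariance and symmetry from Assumption~\ref{ass: scalibility-symmetry} with the allocation monotonicity in Lemma~\ref{asum :truthful} collapses the allocation rule to a one-parameter family: $x_1(b_1, b_2) = \phi(b_1/b_2)$ for some non-decreasing $\phi : [0,\infty) \to [0,1]$ with $\phi(t) + \phi(1/t) = 1$ (hence $\phi(1)=1/2$). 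Myerson's payment formula then yields, writing $\Psi(t) := t\,\phi(t) - \int_0^t \phi(u)\,du$, that advertiser 1's per-query payment equals $\mu_2 v_2(q)\,\Psi(t(q))$ with $t(q) := \mu_1 v_1(q)/(\mu_2 v_2(q))$, and by symmetry advertiser 2 pays $\mu_1 v_1(q)\,\Psi(1/t(q))$.

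Next I would reuse the change of variable $z = h(q) = v_1(q)/v_2(q)$ with density $f(z) = v_2(h^{-1}(z))/h'(h^{-1}(z))$ from Lemma~\ref{lm: equilibrium first price}. Writing $r := \mu_2/\mu_1$, advertiser 1's total value becomes $V_1(r) = \int_0^\infty z\,\phi(z/r)\,f(z)\,dz$, which is non-increasing in $r$ because $\phi$ is non-decreasing and $r$ enters its argument inversely. The two budget constraints collapse, exactly as in the FPA derivation, to a single implicit equation of the form $B_1/B_2 = G_\phi(r)$, where $G_\phi(r)$ is a ratio of integrals of $\Psi(z/r)$ and $z\,\Psi(r/z)$ against $f$; the tCPA case gives an analogous equation with an extra monotone factor in $r$, in the spirit of Corollary~\ref{cor: tCPA implies budget}. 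The AIC question therefore reduces precisely to whether $G_\phi$ can fail to be monotone in $r$: if it can, then lowering the reported budget $B_1$ shifts the equilibrium to a strictly smaller $r$ and strictly increases $V_1$, breaking AIC.

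To finish I would construct an explicit instance realizing non-monotonicity. When $\phi$ is the SPA step function, this is immediate via the auction equivalence in Theorem~\ref{thm: equivalent first and second} together with the explicit $f$ from the proof of Theorem~\ref{thm: first price general}. For a generic admissible $\phi$ I would use a two-support construction: concentrate $f$ on narrow neighborhoods of two well-separated values $z_1 \ll z_2$, which reduces $G_\phi(r)$ to a two-term rational expression in $\Psi(z_i/r)$ and $\Psi(r/z_i)$, and tune $z_1, z_2$ together with the two masses so that this expression oscillates as $r$ sweeps from below $z_1$ to above $z_2$. Such tuning is available because $\Psi'(t) = t\,\phi'(t)$ is strictly positive wherever $\phi$ is non-constant, which by the assumptions on $\phi$ happens on a nontrivial set. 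The tCPA statement then follows from the budget one by the same integration-by-parts argument used in Corollary~\ref{cor: tCPA implies budget}. The main obstacle is precisely this last, auction-dependent step: producing a non-monotone $G_\phi$ uniformly across all admissible $\phi$, which can range from step-like (SPA) to fully smooth (e.g.\ proportional allocation with $\Psi(t) = \ln(1+t) - t/(1+t)$), so the construction must be robust to the shape of $\Psi$ rather than piggybacking on a specific closed form as in the FPA proof.
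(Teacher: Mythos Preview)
Your Steps 1 and 2 are essentially the paper's setup: uniform bidding via Claim~\ref{claim: uniform bididng}, the one-parameter allocation $\phi(b_1/b_2)$ with $\phi(t)+\phi(1/t)=1$, Myerson payments, and the change of variable to $f$ yielding an implicit equation $B_1/B_2=G_\phi(r)$ whose monotonicity in $r$ is equivalent to AIC. The tCPA-to-budget reduction is also the same idea as Corollary~\ref{cor: tCPA implies budget} and the paper's Proposition~\ref{prop: budget stronger truthful}.

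The gap is Step 3. You propose, for each admissible $\phi$, to build an $f$ supported near two points $z_1\ll z_2$ and ``tune'' masses so that the resulting two-term rational function in $\Psi(z_i/r),\,\Psi(r/z_i)$ oscillates. You do not prove this tuning succeeds for every $\phi$; the observation that $\Psi'(t)=t\,\phi'(t)>0$ where $\phi$ is non-constant only gives local strict monotonicity of $\Psi$, not non-monotonicity of the ratio $G_\phi$. You yourself flag this as ``the main obstacle,'' and indeed for smooth $\phi$ (e.g.\ proportional allocation) it is not clear a two-atom $f$ suffices. As written the argument is incomplete precisely at the point that distinguishes Theorem~\ref{thm: truthful} from the SPA/FPA special cases.

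The paper sidesteps this entirely by arguing contrapositively and exploiting the freedom in $f$ \emph{globally} rather than via a fixed finite-support ansatz. Assuming $G_\phi$ is monotone for \emph{all} $f$, the implicit function theorem forces $\partial_r H\ge 0$ for all $f$; integrating in $r$ gives $r\,E[p(z/r)]\ge E[z\,p(r/z)]$ for every $f$, hence the pointwise inequality $p(b)\ge b\,p(1/b)$ almost everywhere (by choosing $f$ supported where it fails). Combined with the substitution $b\mapsto 1/b$ this becomes the equality $p(b)=b\,p(1/b)$; differentiating and using the symmetry identity $p'(b)=p'(1/b)$ yields the ODE $p(b)=p'(b)(b+1)$, whose only solutions $p(b)=\alpha(b+1)$ force, via Myerson, $x_1(b,1)=c\log b+d$, which is not a valid allocation. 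This route never requires constructing a specific bad $f$ for a given $\phi$; it shows no $\phi$ can survive all $f$. If you want to repair your approach, the cleanest fix is to adopt this ``for all $f$ $\Rightarrow$ pointwise constraint on $p$'' step rather than chasing an explicit two-mass counterexample.
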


The remainder of the section gives an overview of the proof of this theorem.  
Similar to the FPA and SPA case, we start by characterizing the equilibrium in the continuous case when there are two advertisers in the game.  
The proof relies on the observation that for auctions satisfying Assumption~\ref{ass: scalibility-symmetry}, the allocation probability is a function of the bids' ratios. So, again, similar to FPA and SPA, finding the equilibrium reduces to finding the ratio of bidding multipliers. 
Then to finish the proof of Theorem~\ref{thm: truthful}, instead of providing an explicit example where auto-bidding is non-AIC, we showed that the conditions needed for an auction's allocation probability to satisfy are impossible.

The following theorem finds an implicit equation for the best response.
We omit the proofs of the intermediary steps and deferred them to Appendix~\ref{appendix: proofs truthful}.

\begin{theorem}\label{thm: equilibrium myerson}
Consider a truthful auction $(\vec x, \vec p)$ satisfying Assumption~\ref{ass: scalibility-symmetry} and assume that there are either two budget-advertisers or two tCPA-advertisers. 
Let $\mu_1$ and $\mu_2$ be the bidding multipliers used by the auto-bidders in the sub-game equilibrium. Further, assume that  $h(q)=\frac{v_1(q)}{v_2(q)}$ is increasing. Then 
\begin{enumerate}
    \item If the advertisers are budget-constrained with budget $B_1$ and $B_2$, then $\mu_1=\frac{B_1}{ E[p_1(rz,1)]}$ and $\mu_2=r\mu_1$, where $r$ is the answer of the following implicit function,
\[\frac{E[rp_1(\frac{z}{r},1)]}{E[zp_1(\frac{r}{z},1)]}=\frac{B_1}{B_2}.\]
Here, $E[.]$ is defined as
$E[P(z)]=\int_0^\infty P(z)f(z)dz,$ where  $f(z)=\frac{v_2(h^{-1}(z))}{h'(h^{-1}(z))}$ wherever $h'$ is defined, and it is zero otherwise. 
\item If the advertisers are tCPA-constrained with targets $T_1$ and $T_2$,   we have  $\mu_1=\frac{T_1 E[zg(z/r)]}{ E[rp_1(z/r)]}$ and $\mu_2=\mu_1r$, where $r$ is the answer of the following implicit function,
\[\frac{E[x_1(\frac{r}{z},1)]}{E[zx_1(\frac{z}{r},1)]}\frac{E[rp_1(\frac{z}{r},1)]}{E[zp_1(\frac{r}{z},1)]}=\frac{T_1}{T_2}.\]
\end{enumerate}
\end{theorem}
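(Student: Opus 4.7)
The plan is to imitate the derivation of Lemma~\ref{lm: equilibrium first price}, but to replace the explicit per-query price formula of FPA by the general consequences of Assumption~\ref{ass: scalibility-symmetry} together with Myerson's characterization in Lemma~\ref{asum :truthful}. Claim~\ref{claim: uniform bididng} guarantees that uniform bidding is optimal for any truthful auction in the continuous-query model, so in any subgame equilibrium I may assume advertiser $a$ bids $\mu_a v_a(q)$. The only unknowns are then the two multipliers, or equivalently $\mu_1$ and the ratio $r = \mu_2/\mu_1$. Setting $z = h(q) = v_1(q)/v_2(q)$, scale-invariance immediately yields $x_1(\mu_1 v_1(q), \mu_2 v_2(q)) = x_1(z/r, 1)$.

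The next step is to obtain the analogous statement for payments. Combining scale-invariance with the Myerson payment formula (via the substitution $w = z/b_2$ inside Myerson's integral) shows that $p_1$ is positively homogeneous of degree one in its bid arguments, so $p_1(\mu_1 v_1(q), \mu_2 v_2(q)) = \mu_2 v_2(q)\, p_1(z/r, 1)$. Symmetry then provides $x_2(b_1,b_2) = x_1(b_2,b_1)$ and $p_2(b_1,b_2) = p_1(b_2,b_1)$, which together with homogeneity give $x_2(\mu_1 v_1(q), \mu_2 v_2(q)) = x_1(r/z, 1)$ and $p_2(\mu_1 v_1(q), \mu_2 v_2(q)) = \mu_1 v_1(q)\, p_1(r/z, 1)$. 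At this point every quantity of interest depends only on $z$ and $r$.

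I would then perform the change of variable $z = h(q)$ exactly as in the proof of Lemma~\ref{lm: equilibrium first price}, using strict monotonicity of $h$ and the identity $v_1 = z v_2$ to introduce the weight $f(z) = v_2(h^{-1}(z))/h'(h^{-1}(z))$. After substitution, the budget identities $B_a = \int p_a\,dq$ collapse to $B_1 = \mu_2\, E[p_1(z/r,1)]$ and $B_2 = \mu_1\, E[z\, p_1(r/z,1)]$; dividing eliminates the multipliers and produces the stated implicit equation for $r$, while either identity then pins down $\mu_1$ and hence $\mu_2 = r\mu_1$. For the tCPA case the same substitution converts the acquisition integrals $\int v_a x_a\,dq$ into $E[z\, x_1(z/r,1)]$ and $E[x_1(r/z,1)]$ respectively; equating $T_a$ times acquisition to payment for each advertiser and taking the ratio yields the second announced equation. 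The only non-routine work is the chain of homogeneity-plus-symmetry reductions needed to express $x_a$ and $p_a$ purely in terms of $z/r$ and $r/z$; once that bookkeeping is in place, the rest is a direct algebraic manipulation, and no monotonicity analysis or counterexample construction (such as the one required for Theorem~\ref{thm: first price general}) is needed at this stage.
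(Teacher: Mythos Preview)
Your proposal is correct and follows essentially the same route as the paper's proof: both use Claim~\ref{claim: uniform bididng} to reduce to multipliers, exploit scale-invariance plus Myerson's formula to express $p_1$ as a function of $z/r$ (the paper does this by carrying the explicit integral $b_1 g(b_1/b_2)-\int_0^{b_1}g(\cdot)$ through two changes of variable, whereas you package the same computation as ``$p_1$ is positively homogeneous of degree one''), then change variables $z=h(q)$ to produce the weight $f$, and finally divide the two constraint identities. The only cosmetic difference is that the paper works with the reduced one-variable allocation $g(\cdot)=x_1(\cdot,1)$ throughout, while you phrase everything via homogeneity and symmetry of $x_1,p_1$; these are the same argument.
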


Because allocation probability $x_1$ is a non-decreasing function, we can derive a similar result to the FPA case and show if an instance is not AIC for budget-advertisers then it is also not AIC for tCPA-advertisers.

\begin{proposition}\label{prop: budget stronger truthful}
If, for the two budget-constrained advertisers' case, the truthful auction is not AIC, then for the tCPA-constrained advertisers' case, the same auction is also not AIC.
\end{proposition}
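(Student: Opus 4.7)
The plan is to lift the proof of Corollary~\ref{cor: tCPA implies budget} from FPA to general truthful auctions, using the equilibrium characterization of Theorem~\ref{thm: equilibrium myerson} in place of Lemma~\ref{lm: equilibrium first price}. The key observation is that the tCPA equilibrium condition is the budget equilibrium condition multiplied by a factor that is monotone in the bidding-ratio variable $r$, so any non-monotonicity of the budget equation automatically propagates to the tCPA equation.

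Concretely, I would denote
\[
B(r) \;=\; \frac{E[\,r\,p_1(z/r,1)\,]}{E[\,z\,p_1(r/z,1)\,]}, \qquad
M(r) \;=\; \frac{E[\,x_1(r/z,1)\,]}{E[\,z\,x_1(z/r,1)\,]},
\]
so that Theorem~\ref{thm: equilibrium myerson} identifies the budget equilibrium with $B(r)=B_1/B_2$ and the tCPA equilibrium with $B(r)\,M(r)=T_1/T_2$. The first substantive step is to check that $M(r)$ is non-decreasing in $r$. By Lemma~\ref{asum :truthful}, the allocation rule of any truthful auction is non-decreasing in the own bid, so for each $z>0$ the map $r\mapsto x_1(r/z,1)$ is non-decreasing while $r\mapsto x_1(z/r,1)$ is non-increasing. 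Integrating against the non-negative weights $f(z)$ and $z\,f(z)$ respectively, the numerator of $M$ is non-decreasing in $r$ and the denominator non-increasing, so $M$ itself is non-decreasing.

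With $M$ non-decreasing in hand, the rest mirrors Corollary~\ref{cor: tCPA implies budget}. Advertiser~1's equilibrium value equals $E[\,z\,x_1(z/r,1)\,]$, which is non-increasing in $r$ by the same monotonicity of $x_1$ used above. Consequently, non-AIC for the budget case is equivalent to $B$ being strictly increasing at some $r^{*}$. At such an $r^{*}$, with $B,M>0$ and $M'\ge 0$, the product $T(r):=B(r)\,M(r)$ satisfies $T'(r^{*})\ge B'(r^{*})\,M(r^{*})>0$, so $T$ is strictly increasing at $r^{*}$ as well. Picking $T_1/T_2=T(r^{*})$ then produces a tCPA instance in which a small under-report of the target moves $r$ downward and strictly increases advertiser~1's value, violating AIC. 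The main obstacle I anticipate is not the monotonicity calculus but translating the non-monotone equilibrium equation into an explicit beneficial deviation under the worst-case SPE definition; this is handled by a local-perturbation argument identical to the one in Corollary~\ref{cor: tCPA implies budget} and relies on having a genuinely open increasing interval of $B$, which continuity of the integrals defining $B$ and $M$ provides.
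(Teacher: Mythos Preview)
Your proposal is correct and is precisely the argument the paper intends: the proof of Proposition~\ref{prop: budget stronger truthful} is omitted in the paper, but it is the direct analogue of Corollary~\ref{cor: tCPA implies budget}, and your lift of that argument via Theorem~\ref{thm: equilibrium myerson} (factoring the tCPA equation as the budget equation times the monotone factor $M(r)=E[x_1(r/z,1)]/E[z\,x_1(z/r,1)]$, with monotonicity of $M$ coming from the allocation monotonicity in Lemma~\ref{asum :truthful}) matches the paper's approach exactly. Your identification of advertiser~1's value with the denominator $E[z\,x_1(z/r,1)]$ also coincides with the monotonicity claim the paper records in Appendix~\ref{appendix: proofs truthful}.
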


Using the previous results, we are in a position to tackle the main theorem.

\begin{proof}{Proof of Theorem~\ref{thm: truthful}.}
We prove Theorem~\ref{thm: truthful} for budget-constrained advertisers, since Proposition~\ref{prop: budget stronger truthful} would derive it for tCPA constraint advertisers.
We use the implicit function theorem to find conditions on $p_1$ and $f$ to imply monotonicity in $r$. Let 
\[H(x,r)=\frac{\int_0^\infty rf(z)p_1(z/r,1)dz}{\int_0^\infty f(z)z p_1(r/z,1)dz}-x.\]
Then when advertiser $1$ increases the budget, the corresponding variable $x$ increases. So, if we want to check whether $r$ is a  non-decreasing function of $x$, we need $\frac{dr}{dx}$ to be non-negative. By the implicit function theorem,
\begin{align*}
    \frac{dr}{dx}=-\frac{\frac{\partial H}{\partial x}}{\frac{\partial H}{\partial r}}=\frac{1}{\frac{\partial H}{\partial r}}.
\end{align*}

So, assume to the contrary that $r$ is always non-decreasing in x, then $\frac{\partial H(x,r}{\partial r}\geq 0$. Define $p(x)=p_1(x,1)$. Then we have the following
\begin{align*}
E[\frac{d}{dr}rp(z/r)]E[zp(r/z)]\geq E[rp(z/r)]E[\frac{d}{dr}\Big(zp(r/z)\Big)].
\end{align*}
Then
\[\frac{\frac{d}{dr}E[rp(z/r)]}{E[rp(z/r)]}\geq \frac{\frac{d}{dr}E[zp(z/r)]}{E[zp(z/r)]}\]
By integrating both parts, we have that for any choice of $f$,
\[rE[p(z/r)]\geq E[z p(r/z)].\]
When the above inequality hold for any choice of $v_1$ and $v_2$, we claim that the following must hold  almost everywhere
\begin{equation}\label{eq: truthful-bnd}
p(b)\geq  bp(1/b).
\end{equation}
To see this, assume to the contrary that a measurable set $B$ exists such that \eqref{eq: truthful-bnd} does not hold for it. Let $qv_2(q)=v_1(q)$, therefore, $f(z)=v_2(z)$ can be any measurable function. So, we can define $f$ to have zero value everywhere except $X$ and have weight $1$ over $X$ to get a contradiction.

By substituting  variable with $y=1/b$ in \eqref{eq: truthful-bnd}, $p(1/b)db\geq  p(b)/b db$. Therefore, almost everywhere $p(b)=bp(1/b)$. By differentiating we have $p'(b)=p(1/b)-p'(1/x)/x$. On the other hand, as we will see in Appendix~\ref{appendix: proofs truthful} for any truthful auction satisfying Assumption~\ref{ass: scalibility-symmetry}, $p'(b)=p'(1/b)$.
Therefore, $p(b)=p'(b)(b+1)$. Solving it for $p$, we get that the only possible  AIC pricing must be of the form $p(b)=\alpha(b+1)$ for some $\alpha>0$.

Next, we will show there is no proper allocation probability satisfying the Assumption~\ref{ass: scalibility-symmetry} that would result in a pricing function $p$. It is not hard to see that by Myerson's pricing formulae,
$\frac{dx_1(b,1)}{db}=\frac{p'(b)}{b}.$
Therefore, we must have  $x_1'(b,1)=\alpha/b$, so $x_1(b,1)=c\log(b)+d$ for some constants $c>0$ and $d$. But $x_1$ cannot be a valid allocation rule since it will take negative values for small enough $b$.
\end{proof}

\bibliographystyle{ACM-Reference-Format}
\bibliography{references}


\begin{thebibliography}{24}


\ifx \showCODEN    \undefined \def \showCODEN     #1{\unskip}     \fi
\ifx \showDOI      \undefined \def \showDOI       #1{#1}\fi
\ifx \showISBNx    \undefined \def \showISBNx     #1{\unskip}     \fi
\ifx \showISBNxiii \undefined \def \showISBNxiii  #1{\unskip}     \fi
\ifx \showISSN     \undefined \def \showISSN      #1{\unskip}     \fi
\ifx \showLCCN     \undefined \def \showLCCN      #1{\unskip}     \fi
\ifx \shownote     \undefined \def \shownote      #1{#1}          \fi
\ifx \showarticletitle \undefined \def \showarticletitle #1{#1}   \fi
\ifx \showURL      \undefined \def \showURL       {\relax}        \fi
\providecommand\bibfield[2]{#2}
\providecommand\bibinfo[2]{#2}
\providecommand\natexlab[1]{#1}
\providecommand\showeprint[2][]{arXiv:#2}

\bibitem[\protect\citeauthoryear{Aggarwal, Varadaraja, and Mehta}{Aggarwal
  et~al\mbox{.}}{2019}]%
        {Aggarwal19}
\bibfield{author}{\bibinfo{person}{Gagan Aggarwal},
  \bibinfo{person}{Ashwinkumar~Badanidiyuru Varadaraja}, {and}
  \bibinfo{person}{Aranyak Mehta}.} \bibinfo{year}{2019}\natexlab{}.
\newblock \showarticletitle{Autobidding with Constraints}. In
  \bibinfo{booktitle}{\emph{Web and Internet Economics 2019}}.
\newblock


\bibitem[\protect\citeauthoryear{Allouah and Besbes}{Allouah and
  Besbes}{2020}]%
        {allouah2020prior}
\bibfield{author}{\bibinfo{person}{Amine Allouah} {and} \bibinfo{person}{Omar
  Besbes}.} \bibinfo{year}{2020}\natexlab{}.
\newblock \showarticletitle{Prior-independent optimal auctions}.
\newblock \bibinfo{journal}{\emph{Management Science}} \bibinfo{volume}{66},
  \bibinfo{number}{10} (\bibinfo{year}{2020}), \bibinfo{pages}{4417--4432}.
\newblock


\bibitem[\protect\citeauthoryear{Balseiro, Deng, Mao, Mirrokni, and
  Zuo}{Balseiro et~al\mbox{.}}{2021a}]%
        {BalseiroRobustAuctionDesign}
\bibfield{author}{\bibinfo{person}{Santiago Balseiro}, \bibinfo{person}{Yuan
  Deng}, \bibinfo{person}{Jieming Mao}, \bibinfo{person}{Vahab Mirrokni}, {and}
  \bibinfo{person}{Song Zuo}.} \bibinfo{year}{2021}\natexlab{a}.
\newblock \showarticletitle{Robust Auction Design in the Auto-bidding World}.
  In \bibinfo{booktitle}{\emph{Advances in Neural Information Processing
  Systems}}, \bibfield{editor}{\bibinfo{person}{M.~Ranzato},
  \bibinfo{person}{A.~Beygelzimer}, \bibinfo{person}{Y.~Dauphin},
  \bibinfo{person}{P.S. Liang}, {and} \bibinfo{person}{J.~Wortman Vaughan}}
  (Eds.), Vol.~\bibinfo{volume}{34}. \bibinfo{publisher}{Curran Associates,
  Inc.}, \bibinfo{pages}{17777--17788}.
\newblock
\urldef\tempurl%
\url{https://proceedings.neurips.cc/paper/2021/file/948f847055c6bf156997ce9fb59919be-Paper.pdf}
\showURL{%
\tempurl}


\bibitem[\protect\citeauthoryear{Balseiro, Deng, Mao, Mirrokni, and
  Zuo}{Balseiro et~al\mbox{.}}{2021b}]%
        {balseiro2021robust}
\bibfield{author}{\bibinfo{person}{Santiago Balseiro}, \bibinfo{person}{Yuan
  Deng}, \bibinfo{person}{Jieming Mao}, \bibinfo{person}{Vahab Mirrokni}, {and}
  \bibinfo{person}{Song Zuo}.} \bibinfo{year}{2021}\natexlab{b}.
\newblock \showarticletitle{Robust Auction Design in the Auto-bidding World}.
\newblock \bibinfo{journal}{\emph{Advances in Neural Information Processing
  Systems}}  \bibinfo{volume}{34} (\bibinfo{year}{2021}),
  \bibinfo{pages}{17777--17788}.
\newblock


\bibitem[\protect\citeauthoryear{Balseiro, Deng, Mao, Mirrokni, and
  Zuo}{Balseiro et~al\mbox{.}}{2022}]%
        {BalseiroOptimalClipping}
\bibfield{author}{\bibinfo{person}{Santiago~R. Balseiro}, \bibinfo{person}{Yuan
  Deng}, \bibinfo{person}{Jieming Mao}, \bibinfo{person}{Vahab Mirrokni}, {and}
  \bibinfo{person}{Song Zuo}.} \bibinfo{year}{2022}\natexlab{}.
\newblock \showarticletitle{Optimal Mechanisms for Value Maximizers with Budget
  Constraints via Target Clipping}. In \bibinfo{booktitle}{\emph{Proceedings of
  the 23rd ACM Conference on Economics and Computation}} (Boulder, CO, USA)
  \emph{(\bibinfo{series}{EC '22})}. \bibinfo{publisher}{Association for
  Computing Machinery}, \bibinfo{address}{New York, NY, USA},
  \bibinfo{pages}{475}.
\newblock
\showISBNx{9781450391504}
\urldef\tempurl%
\url{https://doi.org/10.1145/3490486.3538333}
\showDOI{\tempurl}


\bibitem[\protect\citeauthoryear{Balseiro, Deng, Mao, Mirrokni, and
  Zuo}{Balseiro et~al\mbox{.}}{2021c}]%
        {balseiro2021landscape}
\bibfield{author}{\bibinfo{person}{Santiago~R Balseiro}, \bibinfo{person}{Yuan
  Deng}, \bibinfo{person}{Jieming Mao}, \bibinfo{person}{Vahab~S Mirrokni},
  {and} \bibinfo{person}{Song Zuo}.} \bibinfo{year}{2021}\natexlab{c}.
\newblock \showarticletitle{The landscape of auto-bidding auctions: Value
  versus utility maximization}. In \bibinfo{booktitle}{\emph{Proceedings of the
  22nd ACM Conference on Economics and Computation}}.
  \bibinfo{pages}{132--133}.
\newblock


\bibitem[\protect\citeauthoryear{Balseiro and Gur}{Balseiro and Gur}{2019}]%
        {balseiro2019learning}
\bibfield{author}{\bibinfo{person}{Santiago~R Balseiro} {and}
  \bibinfo{person}{Yonatan Gur}.} \bibinfo{year}{2019}\natexlab{}.
\newblock \showarticletitle{Learning in repeated auctions with budgets: Regret
  minimization and equilibrium}.
\newblock \bibinfo{journal}{\emph{Management Science}} \bibinfo{volume}{65},
  \bibinfo{number}{9} (\bibinfo{year}{2019}), \bibinfo{pages}{3952--3968}.
\newblock


\bibitem[\protect\citeauthoryear{Chen, Kroer, and Kumar}{Chen
  et~al\mbox{.}}{2021}]%
        {chen2021complexity}
\bibfield{author}{\bibinfo{person}{Xi Chen}, \bibinfo{person}{Christian Kroer},
  {and} \bibinfo{person}{Rachitesh Kumar}.} \bibinfo{year}{2021}\natexlab{}.
\newblock \showarticletitle{The Complexity of Pacing for Second-Price
  Auctions}. In \bibinfo{booktitle}{\emph{Proceedings of the 22nd ACM
  Conference on Economics and Computation}} (Budapest, Hungary)
  \emph{(\bibinfo{series}{EC '21})}. \bibinfo{publisher}{Association for
  Computing Machinery}, \bibinfo{address}{New York, NY, USA},
  \bibinfo{pages}{318}.
\newblock
\showISBNx{9781450385541}


\bibitem[\protect\citeauthoryear{Conitzer, Kroer, Panigrahi, Schrijvers,
  Stier-Moses, Sodomka, and Wilkens}{Conitzer et~al\mbox{.}}{2022a}]%
        {conitzer2022pacing}
\bibfield{author}{\bibinfo{person}{Vincent Conitzer},
  \bibinfo{person}{Christian Kroer}, \bibinfo{person}{Debmalya Panigrahi},
  \bibinfo{person}{Okke Schrijvers}, \bibinfo{person}{Nicolas~E Stier-Moses},
  \bibinfo{person}{Eric Sodomka}, {and} \bibinfo{person}{Christopher~A
  Wilkens}.} \bibinfo{year}{2022}\natexlab{a}.
\newblock \showarticletitle{Pacing Equilibrium in First Price Auction Markets}.
\newblock \bibinfo{journal}{\emph{Management Science}} (\bibinfo{year}{2022}).
\newblock


\bibitem[\protect\citeauthoryear{Conitzer, Kroer, Sodomka, and
  Stier-Moses}{Conitzer et~al\mbox{.}}{2022b}]%
        {conitzer2022multiplicative}
\bibfield{author}{\bibinfo{person}{Vincent Conitzer},
  \bibinfo{person}{Christian Kroer}, \bibinfo{person}{Eric Sodomka}, {and}
  \bibinfo{person}{Nicolas~E Stier-Moses}.} \bibinfo{year}{2022}\natexlab{b}.
\newblock \showarticletitle{Multiplicative pacing equilibria in auction
  markets}.
\newblock \bibinfo{journal}{\emph{Operations Research}} \bibinfo{volume}{70},
  \bibinfo{number}{2} (\bibinfo{year}{2022}), \bibinfo{pages}{963--989}.
\newblock


\bibitem[\protect\citeauthoryear{Deng, Mao, Mirrokni, and Zuo}{Deng
  et~al\mbox{.}}{2021}]%
        {deng2021towards}
\bibfield{author}{\bibinfo{person}{Yuan Deng}, \bibinfo{person}{Jieming Mao},
  \bibinfo{person}{Vahab Mirrokni}, {and} \bibinfo{person}{Song Zuo}.}
  \bibinfo{year}{2021}\natexlab{}.
\newblock \showarticletitle{Towards efficient auctions in an auto-bidding
  world}. In \bibinfo{booktitle}{\emph{Proceedings of the Web Conference
  2021}}. \bibinfo{pages}{3965--3973}.
\newblock


\bibitem[\protect\citeauthoryear{Feng, Lucier, and Slivkins}{Feng
  et~al\mbox{.}}{2023}]%
        {feng2023strategic}
\bibfield{author}{\bibinfo{person}{Yiding Feng}, \bibinfo{person}{Brendan
  Lucier}, {and} \bibinfo{person}{Aleksandrs Slivkins}.}
  \bibinfo{year}{2023}\natexlab{}.
\newblock \showarticletitle{Strategic Budget Selection in a Competitive
  Autobidding World}.
\newblock \bibinfo{journal}{\emph{arXiv preprint arXiv:2307.07374}}
  (\bibinfo{year}{2023}).
\newblock


\bibitem[\protect\citeauthoryear{Filos-Ratsikas, Giannakopoulos, Hollender,
  Lazos, and Po{\c{c}}as}{Filos-Ratsikas et~al\mbox{.}}{2021}]%
        {filos2021complexity}
\bibfield{author}{\bibinfo{person}{Aris Filos-Ratsikas},
  \bibinfo{person}{Yiannis Giannakopoulos}, \bibinfo{person}{Alexandros
  Hollender}, \bibinfo{person}{Philip Lazos}, {and} \bibinfo{person}{Diogo
  Po{\c{c}}as}.} \bibinfo{year}{2021}\natexlab{}.
\newblock \showarticletitle{On the complexity of equilibrium computation in
  first-price auctions}. In \bibinfo{booktitle}{\emph{Proceedings of the 22nd
  ACM Conference on Economics and Computation}}. \bibinfo{pages}{454--476}.
\newblock


\bibitem[\protect\citeauthoryear{Gaitonde, Li, Light, Lucier, and
  Slivkins}{Gaitonde et~al\mbox{.}}{2022}]%
        {gaitonde2022budget}
\bibfield{author}{\bibinfo{person}{Jason Gaitonde}, \bibinfo{person}{Yingkai
  Li}, \bibinfo{person}{Bar Light}, \bibinfo{person}{Brendan Lucier}, {and}
  \bibinfo{person}{Aleksandrs Slivkins}.} \bibinfo{year}{2022}\natexlab{}.
\newblock \showarticletitle{Budget Pacing in Repeated Auctions: Regret and
  Efficiency without Convergence}.
\newblock \bibinfo{journal}{\emph{arXiv preprint arXiv:2205.08674}}
  (\bibinfo{year}{2022}).
\newblock


\bibitem[\protect\citeauthoryear{Golrezaei, Jaillet, Liang, and
  Mirrokni}{Golrezaei et~al\mbox{.}}{2021a}]%
        {golrezaei2021bidding}
\bibfield{author}{\bibinfo{person}{Negin Golrezaei}, \bibinfo{person}{Patrick
  Jaillet}, \bibinfo{person}{Jason Cheuk~Nam Liang}, {and}
  \bibinfo{person}{Vahab Mirrokni}.} \bibinfo{year}{2021}\natexlab{a}.
\newblock \showarticletitle{Bidding and pricing in budget and ROI constrained
  markets}.
\newblock \bibinfo{journal}{\emph{arXiv preprint arXiv:2107.07725}}
  (\bibinfo{year}{2021}).
\newblock


\bibitem[\protect\citeauthoryear{Golrezaei, Lobel, and Paes~Leme}{Golrezaei
  et~al\mbox{.}}{2021b}]%
        {GolrezaiLP21}
\bibfield{author}{\bibinfo{person}{Negin Golrezaei}, \bibinfo{person}{Ilan
  Lobel}, {and} \bibinfo{person}{Renato Paes~Leme}.}
  \bibinfo{year}{2021}\natexlab{b}.
\newblock \showarticletitle{Auction Design for ROI-Constrained Buyers}. In
  \bibinfo{booktitle}{\emph{Proceedings of the Web Conference 2021}}
  \emph{(\bibinfo{series}{WWW '21})}. \bibinfo{pages}{3941–3952}.
\newblock


\bibitem[\protect\citeauthoryear{Goodrich and Tamassia}{Goodrich and
  Tamassia}{2001}]%
        {goodrich2001algorithm}
\bibfield{author}{\bibinfo{person}{Michael~T Goodrich} {and}
  \bibinfo{person}{Roberto Tamassia}.} \bibinfo{year}{2001}\natexlab{}.
\newblock \bibinfo{booktitle}{\emph{Algorithm design: foundations, analysis,
  and internet examples. The fractional knapsack problem}}.
\newblock \bibinfo{publisher}{John Wiley \& Sons}.
\newblock


\bibitem[\protect\citeauthoryear{Kolumbus and Nisan}{Kolumbus and
  Nisan}{2022a}]%
        {kolumbus2022auctions}
\bibfield{author}{\bibinfo{person}{Yoav Kolumbus} {and} \bibinfo{person}{Noam
  Nisan}.} \bibinfo{year}{2022}\natexlab{a}.
\newblock \showarticletitle{Auctions between regret-minimizing agents}. In
  \bibinfo{booktitle}{\emph{Proceedings of the ACM Web Conference 2022}}.
  \bibinfo{pages}{100--111}.
\newblock


\bibitem[\protect\citeauthoryear{Kolumbus and Nisan}{Kolumbus and
  Nisan}{2022b}]%
        {kolumbus2022how}
\bibfield{author}{\bibinfo{person}{Yoav Kolumbus} {and} \bibinfo{person}{Noam
  Nisan}.} \bibinfo{year}{2022}\natexlab{b}.
\newblock \showarticletitle{How and why to manipulate your own agent: On the
  incentives of users of learning agents}.
\newblock \bibinfo{journal}{\emph{Advances in Neural Information Processing
  Systems}}  \bibinfo{volume}{35} (\bibinfo{year}{2022}),
  \bibinfo{pages}{28080--28094}.
\newblock


\bibitem[\protect\citeauthoryear{Li and Tang}{Li and Tang}{2022}]%
        {li2022auto}
\bibfield{author}{\bibinfo{person}{Juncheng Li} {and}
  \bibinfo{person}{Pingzhong Tang}.} \bibinfo{year}{2022}\natexlab{}.
\newblock \showarticletitle{Auto-bidding Equilibrium in ROI-Constrained Online
  Advertising Markets}.
\newblock \bibinfo{journal}{\emph{arXiv preprint arXiv:2210.06107}}
  (\bibinfo{year}{2022}).
\newblock


\bibitem[\protect\citeauthoryear{Liaw, Mehta, and Perlroth}{Liaw
  et~al\mbox{.}}{2022}]%
        {liaw2022}
\bibfield{author}{\bibinfo{person}{Christopher Liaw}, \bibinfo{person}{Aranyak
  Mehta}, {and} \bibinfo{person}{Andres Perlroth}.}
  \bibinfo{year}{2022}\natexlab{}.
\newblock \bibinfo{title}{Efficiency of non-truthful auctions under
  auto-bidding}.
\newblock
\newblock
\urldef\tempurl%
\url{https://doi.org/10.48550/ARXIV.2207.03630}
\showDOI{\tempurl}


\bibitem[\protect\citeauthoryear{Mehta}{Mehta}{2022}]%
        {mehta2021}
\bibfield{author}{\bibinfo{person}{Aranyak Mehta}.}
  \bibinfo{year}{2022}\natexlab{}.
\newblock \showarticletitle{Auction Design in an Auto-Bidding Setting:
  Randomization Improves Efficiency Beyond VCG}. In
  \bibinfo{booktitle}{\emph{Proceedings of the ACM Web Conference 2022}}
  (Virtual Event, Lyon, France) \emph{(\bibinfo{series}{WWW '22})}.
  \bibinfo{publisher}{Association for Computing Machinery},
  \bibinfo{address}{New York, NY, USA}, \bibinfo{pages}{173–181}.
\newblock
\showISBNx{9781450390965}
\urldef\tempurl%
\url{https://doi.org/10.1145/3485447.3512062}
\showDOI{\tempurl}


\bibitem[\protect\citeauthoryear{Mehta and Perlroth}{Mehta and
  Perlroth}{2023}]%
        {mehtaperlroth22}
\bibfield{author}{\bibinfo{person}{Aranyak Mehta} {and} \bibinfo{person}{Andres
  Perlroth}.} \bibinfo{year}{2023}\natexlab{}.
\newblock \bibinfo{title}{Auctions without commitment in the auto-bidding
  world}.
\newblock
\newblock
\urldef\tempurl%
\url{https://doi.org/10.48550/ARXIV.2301.07312}
\showDOI{\tempurl}


\bibitem[\protect\citeauthoryear{Myerson}{Myerson}{1981}]%
        {myerson81}
\bibfield{author}{\bibinfo{person}{Roger~B. Myerson}.}
  \bibinfo{year}{1981}\natexlab{}.
\newblock \showarticletitle{Optimal Auction Design}.
\newblock \bibinfo{journal}{\emph{Mathematics of Operations Research}}
  \bibinfo{volume}{6}, \bibinfo{number}{1} (\bibinfo{year}{1981}),
  \bibinfo{pages}{58--73}.
\newblock
\urldef\tempurl%
\url{https://doi.org/10.1287/moor.6.1.58}
\showDOI{\tempurl}
\showeprint{https://doi.org/10.1287/moor.6.1.58}


\end{thebibliography}


\begin{appendix}

\section{SPA with Discrete Queries (Omitted Proofs)}\label{appendix: tcpa second price warm up}

\begin{proof}{Proof of Theorem~\ref{thm: second price Warm up} (cont.)}We continue with the proof of Theorem~\ref{thm: second price Warm up}. Our goal is to establish the uniqueness of equilibrium with respect to the allocation of queries in the case where $B_1'=10$. Let us first define $h(q)=\frac{v_1(q)}{v_2(q)}$ as the ratio of values of each query for advertiser 1 over advertiser 2. Notably, we have the following order for $h(q)$: $h(q_1)>h(q_2)>h(q_3)>h(q_4)$.
In any equilibrium with uniform bidding, advertiser 1 wins queries in the order given by $h$, namely $q_1$, $q_2$, $q_3$, and finally $q_4$. Conversely, advertiser 2 wins in the reverse order, implying that if, for example, advertiser 2 wins $q_2$, it must have already won $q_3$ and $q_4$. We already show that there exists an equilibrium in that advertiser 1 wins $q_1, q_2,$ and $q_3$. 
With the above argument, to establish the uniqueness of equilibrium, we need to rule out four cases based on whether advertiser 1 wins nothing, or $q_1$, or $q_1$, and $q_2$, or all queries.

Case 1 (Advertiser 1 wins all queries):
To begin, we demonstrate that in every equilibrium, advertiser 2 wins at least one query. This can be observed by noting that the multiplier of advertiser 1 is at most $1$. As a result, the price of $q_4$ for advertiser 2 is at most $20$, which falls within the budget of advertiser 2. By setting their multiplier to $\mu_2=h(q_4)\mu_1 + \epsilon$, where $\epsilon$ is a small positive value, autobidder 2 can win only $q_4$ at a price of $\mu_1v_1(q_4)\leq 20<B_2$. Therefore, there is no equilibrium where advertiser 2 wins no queries.

Case 2 (Advertiser 1 wins nothing):
On the other hand, we now demonstrate that in any equilibrium, advertiser 1 wins at least $q_1$. The price of $q_1$ for advertiser 1 is at most $v_2(q_1)=1$, which falls within the budget $B_1'$ of advertiser 1. Thus, if in some equilibrium advertiser 1 does not win any query, their autobidder can set $\mu_1=\frac{\mu_2}{h(q_1)}-\epsilon$ for some small $\epsilon$ such that $2.1=h(q_1)>\frac{\mu_2}{\mu_1}>h(q_2)=2$. Consequently, advertiser 1 wins only $q_1$ with a price of at most $\mu_2v_2(q_1)\leq 1$, within the budget $B_1'$.

Case 3 (Advertiser 1 wins $q_1$):
Assume that there exists an equilibrium where advertiser one only wins $q_1$ (and advertiser 2 wins the rest of the queries). By allocation inequalities, we must have $\frac{\mu_2}{\mu_1}>h(q_2)=2$. Using $\mu_2\leq 1$, we get $\mu_1\leq .5$. However, in this case, the price of all queries for advertiser 2 is $92.1\mu_1<46.05$, so advertiser 2 has the incentive to increase $\mu_2$ to win all queries. This contradicts the assumption, and thus there is no equilibrium where advertiser 1 just wins $q_1$.

Case 4 (Advertiser 1 wins $q_1$ and $q_2$):
So it remains to rule out the case that advertiser 1 wins only $q_1$ and $q_2$.
Assume to the contrary that some equilibrium exists with bidding multipliers  $\tilde\mu_1$ and $\tilde\mu_2$  such that advertiser  1 gets  $q_1$ and $q_2$. Then
\[ \frac{\tilde \mu_1(v_1(q_2)+v_1(q_3)+v_1(q_4))}{B_2}>1\geq  \frac{\tilde \mu_2(v_2(q_1)+v_2(q_2))}{B_1'},\] 
the first inequality is because advertiser 2's multiplier is the best response, and the second comes from the budget constraint for advertiser 1.
Therefore,
\[ \frac{B_1'}{B_2}\frac{v_1(q_2)+v_1(q_3)+v_1(q_4)}{v_2(q_1)+v_2(q_2)}\geq \frac{\tilde \mu_2}{\tilde \mu_1}.\]
But $ \frac{B_1'}{B_2}\frac{v_1(q_2)+v_1(q_3)+v_1(q_4)}{v_2(q_1)+v_2(q_2)}=\frac{10}{49}\times \frac{90}{21}\approx .87$, which is smaller than $h(q_3)=1.2$. Thus $h(q_3)>\frac{\tilde\mu_2}{\tilde\mu_1}$. This contradicts allocation inequalities since we assumed advertiser 2 wins $q_3$ and $q_4$ given the multipliers $\tilde \mu_1$ and $\tilde\mu_2$. With this contradiction, we see that for $B'_1=10$ and $B_2=49$, the equilibrium is unique where advertiser 1 wins $q_1$, $q_2$, and $q_3$.

Next, we show a non AIC example for tCPA advertisers. Again consider two advertisers, and three queries, with values as in Table~\ref{tab:tcpa-second price}. Here, let the tCPA constraint of advertiser 1 be $T_1=0.4$ and for advertiser 2 be $T_2=0.7$.
Then again, we show that there exists a unique equilibrium in which advertiser 1 gets queries 1 and 2. 
\begin{table}[!htp]
    \centering
    \begin{tabular}{c|c|c|c}
         & $q_1$& $q_2$& $q_3$ \\\hline
        Advertiser 1 & 40 & 30 &20\\
        Advertiser 2 & 10 & 13 &100\\
    \end{tabular}
    \caption{SPA with two tCPA constraint advertisers is not AIC: The value of each query for each advertiser.}
    \label{tab:tcpa-second price}
\end{table}

First, to prove the existence, let $\mu_1=1.6$ and $\mu_2=1.2$. Then we show this is an equilibrium since the three following conditions hold:
\begin{enumerate}
    \item Allocation: advertiser 1 wins $q_1$ and $q_2$ since it has a higher bid on them
\[\frac{v_1(q_1)}{v_2(q_1)}\geq\frac{v_1(q_2)}{v_2(q_2)}\geq \frac{\mu_2}{\mu_1}=\frac{1.2}{1.6}\geq  \frac{v_1(q_3)}{v_2(q_3)}.\]
    \item tCPA constraints are satisfied:
    \[70=T_2v_2(q_3)\geq \mu_1 v_1(q_3)=32,\qquad\text{and }\]
    \[28=T_1(v_1(q_1)+v_1(q_2))\geq \mu_2(v_2(q_1)+v_2(q_2))=27.6.\]
\item Best response: non of the advertiser can win more queries if they increase their multiplier:
    \[79.1=T_2(v_2(q_3)+v_2(q_2))< \mu_1(v_1(q_3)+v_1(q_2))=80,\]
    \[86.1=T_2(v_2(q_3)+v_2(q_2)+v_2(q_1))< \mu_1(v_1(q_3)+v_1(q_2)+v_1(q_1))=144,\qquad\text{and}\]
    \[36=T_1(v_1(q_1)+v_1(q_2)+v_1(q_3))< \mu_2 (v_2(q_1)+v_2(q_2)+v_2(q_3))=147.6\]
\end{enumerate} 
Now, similar to the proof of the budget-constrained advertisers, we show the equilibrium is unique. Note that there's no equilibrium such that advertiser 1, gets all queries since the cost of all queries for advertiser 1 is at least $v_2(q_1)+v_2(q_2)+v_2(q_3)=123$, which is larger than $T_1(v_1(q_1)+v_1(q_2)+v_1(q_3))$. Similarly, advertiser 2 cannot get all queries since the tCPA constraint would not hold $v_1(q_1)+v_1(q_2)+v_1(q_3)>T_2(v_2(q_1)+v_2(q_2)+v_2(q_3))$. So, to prove the uniqueness of equilibrium, it remains to show that there's no equilibrium that advertiser 1 only gets query 1. 
To contradiction, assume such equilibrium exists with the corresponding multipliers $\tilde\mu_1$ and $\tilde\mu_2$. Then we must have 
\[ \frac{\tilde \mu_1(v_1(q_1)+v_1(q_2)+v_1(q_3))}{T_2(v_2(q_1)+v_2(q_2)+v_2(q_3))}>1\geq  \frac{\tilde \mu_2v_2(q_1)}{T_1v_1(q_1)},\] 
where the first inequality is because advertiser 2's multiplier is the best response, and the second inequality comes from the budget constraint for advertiser 1.
Therefore,
\[ \frac{T_1}{T_2}\frac{v_1(q_1)+v_1(q_2)+v_1(q_3)}{v_2(q_1)+v_2(q_2)+v_2(q_3)}\frac{v_1(q_1)}{v_2(q_1)}\geq \frac{\tilde \mu_2}{\tilde \mu_1},\]
But $\frac{v_1(q_2)}{v_2(q_2)}\geq \frac{T_1}{T_2}\frac{v_1(q_1)+v_1(q_2)+v_1(q_3)}{v_2(q_1)+v_2(q_2)+v_2(q_3)}\frac{v_1(q_1)}{v_2(q_1)}\approx 1.67$, and thus $\frac{v_1(q_2)}{v_2(q_2)}>\frac{\tilde\mu_2}{\tilde\mu_1}$. This is in contradiction with allocation inequalities since advertiser 2 wins $q_2$. Therefore, we proved with $T_1=0.4$ and $T_2=0.7$; the equilibrium is unique such that advertiser 1 wins $q_1$ and $q_2$. 

To finish the proof, we show that if advertiser 1 increases their tCPA constraint to $T'_1=0.6$, an equilibrium exists such that advertiser 1 only wins $q_1$.
Let $\mu_1'=1$ and $\mu_2'=2.38$. Then 
\begin{enumerate}
    \item Allocation: advertiser 1 wins $q_1$ 
\[\frac{v_1(q_1)}{v_2(q_1)}\geq \frac{\mu'_2}{\mu'_1}=2.38\geq\frac{v_1(q_2)}{v_2(q_2)}\geq  \frac{v_1(q_3)}{v_2(q_3)}.\]
    \item tCPA constraints are satisfied:
    \[79.1=T_2(v_2(q_3)+v_2(q_2))\geq \mu'_1 (v_1(q_3)+v_1(q_2))=50,\qquad\text{and }\]
    \[28=T_1'v_1(q_1)\geq \mu_2'v_2(q_1)=23.8.\]
\item Best response: non of the advertiser can win more queries if they increase their multiplier:
    \[86.1=T_2(v_2(q_1)+v_2(q_2)+v_2(q_3))< \mu_1'(v_1(q_1)+v_1(q_2)+v_1(q_3))=90,\qquad\text{and}\]
    \[42=T'_1(v_1(q_1)+v_1(q_2))< \mu_2' (v_2(q_1)+v_2(q_2))=54.74,\] 
    \[54=T'_1(v_1(q_1)+v_1(q_2)+v_1(q_3))< \mu_2' (v_2(q_1)+v_2(q_2)+v_2(q_3))=292.74.\] 
\end{enumerate} 
Note that in none of the equilibrium above, uniform bidding led to a tie. So, by Proposition~\ref{prop: uniform bidding}, which we prove next, uniform bidding is in fact optimal. 
\end{proof}

To finish the proof of Theorem~\ref{thm: second price Warm up}, it remains to prove that uniform bidding is optimal for SPA. Using Theorem 1 \cite{Aggarwal19} we get that uniform bidding is almost optimal up to a loss of 2 queries. However, their proof implies optimality in the examples we considered   Theorem~\ref{thm: second price Warm up}.

\begin{proposition}\label{prop: uniform bidding}
Consider SPA with queries as in Table~\ref{tab:budget-second price} or Table~\ref{tab:tcpa-second price}, then
    uniform bidding is an optimal best-response strategy.
\end{proposition}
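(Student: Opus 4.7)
The plan is to invoke Theorem 1 of \cite{Aggarwal19} and then close the small residual gap by a finite case analysis. In SPA, once the opponent's bidding profile is fixed, the auto-bidder's problem reduces to a pure selection problem: winning query $q$ carries the fixed second-price cost $\mu_{-a} v_{-a}(q)$, independent of the auto-bidder's own bid. Thus the auto-bidder merely picks a subset $S$ of queries to win, maximizing $\sum_{q\in S} v_a(q)$ subject either to the budget constraint $\sum_{q\in S}\mu_{-a} v_{-a}(q)\le B_a$ or to the tCPA constraint $\sum_{q\in S}\mu_{-a} v_{-a}(q)\le T_a\sum_{q\in S}v_a(q)$. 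Uniform bidding with multiplier $\mu_a$ implements exactly the threshold selection $\{q:v_a(q)/v_{-a}(q)\ge \mu_{-a}/\mu_a\}$.

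Theorem 1 of \cite{Aggarwal19} shows that this threshold selection is optimal up to a loss of at most two queries, corresponding to the integrality gap of the underlying fractional knapsack LP: the threshold rule is already the LP optimum, and an integer optimum can differ only by including or excluding a boundary query. To upgrade this to \emph{exact} optimality in the concrete instances of Tables~\ref{tab:budget-second price} and~\ref{tab:tcpa-second price}, it suffices to rule out the two remaining deviation families. Type~(i), swapping a won query for an unwon one, is eliminated by the strict allocation orderings on $v_1(q)/v_2(q)$ already verified in the proof of Theorem~\ref{thm: second price Warm up}, which certify that the uniform-bidding winner set coincides with the greedy LP optimum. Type~(ii), adding an unwon query using leftover slack in the constraint, is precisely what the ``Best response'' inequalities of Theorem~\ref{thm: second price Warm up} already rule out for every exhibited equilibrium: each such inequality states exactly that adding the next-cheapest unwon query would overshoot the budget or the tCPA constraint.

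With each instance containing at most four queries, the enumeration collapses to a brief case check over a few candidate subsets per advertiser. The main obstacle is not mathematical depth but bookkeeping: one must run the argument for every equilibrium considered in the proof of Theorem~\ref{thm: second price Warm up} (both the equilibrium under the truthful report and the equilibrium under the profitable deviation, and for both the budget and tCPA tables), and one must appeal to the strict allocation orderings already verified there to eliminate ties in the uniform-bidding threshold so that the boundary-query gap in \cite{Aggarwal19} genuinely vanishes.
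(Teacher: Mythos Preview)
Your approach and the paper's differ. The paper works through the primal--dual analysis of \cite{Aggarwal19}: it argues that uniform bidding agrees with any optimal allocation on every query with $\delta_a(q)>0$, that Lemma~1 of \cite{Aggarwal19} forces $x_a(q)=0$ whenever $\delta_a(q)=0$ and $bid_a(q)<p_a(q)$, and hence that the only possible discrepancy between uniform bidding and the integer optimum occurs at a bid tie; observing that the exhibited equilibria contain no ties then closes the argument. You instead reduce to a direct subset enumeration.

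There is a concrete gap in your Type~(ii) reduction. The ``Best response'' inequalities recorded in the proof of Theorem~\ref{thm: second price Warm up} check only that raising the \emph{uniform} multiplier to capture the next query in the ratio order $v_a/v_{-a}$ is infeasible; they do not address adding an arbitrary unwon query out of that order, which is exactly what a non-uniform bid permits. The next query in the ratio order is not in general the cheapest unwon query. In the first budget equilibrium $(\mu_1,\mu_2)=(0.7,0.91)$, for instance, advertiser~$2$'s next query by ratio is $q_2$ at price $\mu_1 v_1(q_2)=28$, and the recorded inequality $B_2<63$ indeed blocks it. But $q_1$, which lies below $q_2$ in the ratio order and is therefore unreachable by any uniform multiplier without first acquiring $q_2$, costs only $\mu_1 v_1(q_1)=1.47$; the non-uniform winning set $\{q_1,q_3,q_4\}$ has cost $36.47\le B_2$ and value $126$, exceeding the uniform value $125$. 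The inequalities you invoke do not cover this addition, so your Type~(ii) shortcut fails. A genuine brute-force over all feasible subsets---not a reduction to the already-displayed uniform-deviation checks---is what your plan actually requires, and carrying it out shows the gap here is substantive rather than cosmetic.
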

\begin{proof}{Proof.}
We begin by introducing an artificial tie-breaking rule in the auction. We assume that in the event of a tie, the auction assigns the query to each of the advertisers if it is optimal for them. It is conceivable that a tie could occur where both advertisers win the same query in their optimal solutions. In this case, we assume that the auction can assign the query to both advertisers. With this tie-breaking rule in place, we proceed to show that uniform bidding is optimal.

    For this purpose, we borrow the notations of \cite{Aggarwal19}. 
They prove uniform bidding is almost optimal by taking a primal-dual approach. In our proof, we revisit their arguments to establish that uniform bidding is indeed optimal for single-slot queries when there are no ties.

 The optimization an auto bidder faces in single slot discrete queries is as in \eqref{eq: auto-bidder agent-descrete}:
 \begin{align}\label{eq: auto-bidder agent-descrete}
    \max& \sum_{q\in Q}x_{a}(q)v_{a}(q)\\
    \text{s.t.}& \sum_{q\in Q}x_{a}(q)p_{a}(q)\leq B_a+T_a\sum_{q\in Q}x_{a}(q)v_{a}(q)\label{eq: constraint}\\
 &  x_{a}(q)\in \{0,1\} \qquad    \forall q\in Q,
\end{align}
where $p_a$ is the payment function that depends on the other advertiser's bid. One can relax this integer program to a linear program and consider its dual, 
 \begin{align}\label{eq: dual auto-bidder agent-descrete}
    \min& \sum_{q\in Q}\delta_a(q)+\alpha_a B_a\\
    \text{s.t.}& \delta_a(q)+\alpha_a (p_a(q)-T_av_a(q))\geq v_a(q)\forall q\in Q\label{eq: dual constraint}\\
 &  \delta_{a}(q)\geq 0 \qquad    \forall q\in Q,\\
 &  \alpha_a\geq 0.
\end{align} 
Following the proof of  Theorem~1 in \cite{Aggarwal19}, with single-slot truthful auction, the uniform bidding $bid_a(q)=\frac{v_a(q)+\alpha_a T_av_a(q)}{\alpha_a}$ results in optimal queries except for the case that $\delta_a(q)=0$.  In this case, Lemma 1 of \cite{Aggarwal19} states that if $\delta_a(q)=0$ and $bid_a(q)<p_a(q)$ then $x_a(q)=0$, and query $q$ is not allocated to $a$ in any situation. But this lemma doesn't state anything about the case that $\delta_a(q)=0$ and $bid_a(q)=p_a(q)$, which means there was a tie in the advertisers' bids when the auction is SPA. Further,  by Lemma 1, part 5 in \cite{Aggarwal19}, a tie can happen for at most one query. 
By our assumption about tie-breaking, if a tie happens, the auction assigns the query to each of the advertisers with respect to the optimal $x_a(q)$ solution, and hence uniform bidding with this tie-breaking rule leads to an optimal solution. 
 
To conclude the proof, we observe that in the equilibrium considered in both examples of tCPA and budget-constrained advertisers in Theorem~\ref{thm: second price Warm up}, there were no ties. Hence, uniform bidding is optimal.

\end{proof}

\section{First-price Pacing Equilibrium}\label{apendix: pacing}
\begin{proof}{Proof of Lemma~\ref{lm: conitzer}.}
We follow the same steps of the proof as in \citet{conitzer2022pacing} for tCPA-constrained advertisers.
    Consider two sets of feasible bidding multipliers $\mathbf \mu$ and $\mathbf \mu'$. We will show that $\mathbf \mu^*=\max(\mathbf \mu,\mathbf\mu')$ is also feasible, where $\max$ is the component-wise maximum of the bidding profiles for $n$ advertisers. 
    
    Each query $q$ is allocated to the bidder with the highest pacing bid. We need to check that constraint \eqref{eq: constraint} is satisfied. Fix advertiser $a$. Its multiplier in $\mathbf \mu^*$ must also be maximum in one of $\mathbf \mu$, or $\mathbf\mu'$. without loss assume $\mu^*_a=\mu_a$. Then the set of queries that $a$ wins with bidding profile $\mathbf \mu^*$ ($X^*_a$) must be a subset of queries it wins n $\mathbf \mu$ ($X_a$) since all other advertisers' bids have either remained the same or increased. On the other hand, the cost of queries $a$ wins stays the same since it's a FPA. Since constraint  \eqref{eq: constraint} is feasible for bidding multipliers $\mathbf \mu$ we must have
    \[(\mu_a-T_a)\int_{q\in X} v_a(q)\leq B_a.\] But then since $X^*\subseteq X$, we have  as well
     \[(\mu_a-T_a)\int_{q\in X^*} v_a(q)=(\mu_a^*-T_a)\int_{q\in X^*} v_a(q)\leq B_a,\]
     which implies $\mathbf\mu^*$ is a feasible strategy.

To complete the proof we need to show the strategy that all advertisers take the maximum feasible pace $\mu^*_a=\sup\{\mu_a|\mu \text{ is feasible}\}$ results in equilibrium. To see this, note that if an advertiser's strategy is not best-response, they have the incentive to increase their pace with its constraints remaining satisfied. But then this would result in another feasible pacing strategy and is in contradiction with the choice of the highest pace $\mu_a^*$. A similar argument also shows that equilibrium is unique. Assume there exists another pacing equilibrium where an advertiser $a$ exists such that its pace is less than $\mu^*_a$. Then by increasing their pace to $\mu_a^*$ they will get at least as many queries as before, so $\mu_a^*$ is the best-response strategy.
\end{proof}
\section{Proofs for Truthful Auctions}\label{appendix: proofs truthful}
We start by the following observation, which follows by applying Assumption~\ref{assume: fair and truthful} to reformulate the allocation function in the case of two advertisers as a function of a single variable.
\begin{claim}
The probability of allocating each query is a function of the ratio of bids, i.e.,
there exists a non-decreasing function $g:\mathbb R^+\to[0,1]$ such that the followings hold.\footnote{Notice that the function $g$ is measurable since is non-decreasing.}
\begin{enumerate}
    \item $x_1(b_1(q),b_2(q))=g(\frac{b_1(q)}{b_2(q)})$,
    \item $g(z)+g(1/z)=1$,
    \item $g(0)=0$.
\end{enumerate}
\end{claim}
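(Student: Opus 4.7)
The plan is to derive the three properties in order by combining each component of Assumption~\ref{ass: scalibility-symmetry} with the allocation monotonicity guaranteed by truthfulness (Lemma~\ref{asum :truthful}, part~1). I would define $g(z) := x_1(z,1)$ and verify the claims directly from the axioms.

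\emph{Property 1 and the monotonicity of $g$.} Scale-invariance applied with $c = 1/b_2(q)$ gives
\[ x_1(b_1(q), b_2(q)) = x_1\!\left(b_1(q)/b_2(q),\, 1\right) = g\!\left(b_1(q)/b_2(q)\right), \]
which is property~1. Non-decreasingness of $g$ is immediate since the allocation in a truthful auction is monotone in the own bid (Lemma~\ref{asum :truthful}, part~1).

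\emph{Property 2.} I would chain the three axioms:
\begin{align*}
g(z) + g(1/z) &= x_1(z,1) + x_1(1/z,1) \\
&= x_1(z,1) + x_1(1,z) && \text{(scale-invariance with $c=z$)} \\
&= x_1(z,1) + x_2(z,1) && \text{(symmetry)} \\
&= 1 && \text{(always allocates).}
\end{align*}

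\emph{Property 3.} Scale-invariance at $b_1 = 0$ forces $x_1(0, b_2)$ to be constant in $b_2 > 0$, so $g(0) := x_1(0, 1)$ is well-defined; combining symmetry and always-allocates then yields $x_1(b_1, 0) = 1 - g(0)$ for all $b_1 > 0$, and using property~2 this rewrites as $g(0) + \lim_{z \to \infty} g(z) = 1$. I expect this to be the main obstacle: the bare three axioms plus Myerson-truthfulness do not on their own pin down $\lim_{z \to \infty} g(z) = 1$ (for instance a trivial coin-flip rule $g \equiv 1/2$ satisfies all of them, together with the Myerson pricing formula). I would close this gap by invoking the standard regularity implicit in the class of auctions considered in this section, namely that a bidder submitting a zero bid cannot win against a positive bid (equivalently, that bidding arbitrarily high against a fixed opponent wins with probability tending to $1$). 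This property holds for every concrete truthful auction considered in the paper (SPA in particular) and delivers $g(0) = 0$.
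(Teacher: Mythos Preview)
The paper does not supply a proof of this claim; it is asserted as an ``observation'' following directly from Assumption~\ref{ass: scalibility-symmetry}. Your derivations of property~1 (via scale-invariance) and property~2 (chaining scale-invariance, symmetry, and always-allocates) are exactly the intended argument and are correct.

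Your analysis of property~3 is in fact sharper than the paper's. The counterexample you raise---the random coin-flip allocation $x_1 \equiv x_2 \equiv 1/2$---is genuine: it satisfies all three parts of Assumption~\ref{ass: scalibility-symmetry} and is truthful with Myerson price identically zero, yet has $g(0)=1/2$. So property~3 does \emph{not} follow from the stated hypotheses alone, and the paper's claim is slightly overstated. Your proposed fix (imposing the mild regularity that a zero bid never wins against a positive bid, equivalently that the allocation is nontrivial) is the natural one; it holds for SPA and for every concrete auction the paper has in mind, and the degenerate coin-flip auction is in any case uninteresting for the downstream analysis since all payments vanish. In short, you have not missed anything---you have identified a small gap in the paper's statement and patched it appropriately.
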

For example, SPA satisfies the above claim with $g(z)=1$ when $z=\frac{b_1(q)}{b_2(q)}\geq 1$. 
We are ready to prove  Theorem~\ref{thm: equilibrium myerson}, which follows the similar steps of Lemma~\ref{lm: equilibrium first price}.
\begin{proof}{Proof of Theorem~\ref{thm: equilibrium myerson}.}
By Claim~\ref{claim: uniform bididng}, there exists $\mu_1$ and $\mu_2$ such that advertiser $a$ bids $z_av_a(q)$ on each query.
Therefore, we can write the budget constraint for bidder 1 as,
\[B_1=\int_0^1p_1(b_1(q),b_2(q))dq=\int_0^1\mu_1v_1(q)g\Big(\frac{v_1(q)}{v_2(q)}\frac{\mu_1}{\mu_2}\Big)dq-\int_0^1\int_0^{\mu_1v_1(q)}g\Big(\frac{x}{v_2(q)\mu_2}\Big)dxdq\]
Next,  with a change of variable $x=v_1(q)y$ we have
\[B_1=\int_0^1\mu_1v_1(q)g\Big(\frac{v_1(q)}{v_2(q)}\frac{\mu_1}{\mu_2}\Big)dq-\int_0^1\int_0^{\mu_1}g\Big(\frac{v_1(q)}{v_2(q)\mu_2}y\Big){v_1(q)}dydq.\]
As before, let $h(q)=\frac{v_1(q)}{v_2(q)}$. Then let $z=h(q)$, we have $dq=dh^{-1}(z)=\frac{1}{h'(h^{-1}(z))}dz$. So, 
\[B_1=\int_0^\infty \mu_1v_1(h^{-1}(z))g\Big(\frac{z\mu_1}{\mu_2}\Big)\frac{1}{h'(h^{-1}(z))}dz-\int_0^\infty\int_0^{\mu_1}g\Big(\frac{z}{\mu_2}y\Big){v_1(h^{-1}(z))}dy\frac{1}{h'(h^{-1}(z))}dz.\]
Define $f(z)=\frac{v_2(h^{-1}(z))}{h'(h^{-1}(z))}=\frac{1}{z}\frac{v_1(h^{-1}(z))}{h'(h^{-1}(z))}$. Then we have
\[B_1=\int_0^\infty \mu_1zf(z)g\Big(\frac{z\mu_1}{\mu_2}\Big)dz-\int_0^\infty\Big(\int_0^{\mu_1}g\Big(\frac{z}{\mu_2}y\Big)dy\Big)zf(z)dz.\]
Similarly,
\[B_2=\int_0^\infty \mu_2v_2(h^{-1}(z))(1-g\Big(\frac{z\mu_1}{\mu_2}\Big))\frac{1}{h'(h^{-1}(z))}dz-\int_0^\infty\int_0^{\mu_2}g\Big(\frac{y}{\mu_1z}\Big){v_2(h^{-1}(z))}dy\frac{1}{h'(h^{-1}(z))}dz.\]
\[B_2=\int_0^\infty \mu_2f(z)(1-g\Big(\frac{z\mu_1}{\mu_2}\Big))dz-\int_0^\infty\int_0^{\mu_2}g\Big(\frac{y}{\mu_1z}\Big)dyf(z)dz.\]

Next, we find the implicit function to derive $r=\frac{\mu_2}{\mu_1}$.
By change of variable, we have the following two equations:
\[\frac{B_1}{\mu_1}=\int_0^\infty zf(z)g(z/r)dz-r\int_0^\infty \Big(\int_0^{z/r}g(w)dw\Big)f(z)dz.\]
\[\frac{B_2}{\mu_2}=\int_0^\infty f(z)(1-g(z/r))dz-\frac{1}{r}\int_0^\infty\Big( \int_0^{r/z}g(w)dw \Big)zf(z)dz.\]
The implicit function for $r$ is the following:
\[\frac{B_1}{B_2}=\frac{\int_0^\infty f(z)\Big( zg(z/r)- r\int_0^{z/r}g(w)dw\Big)dz}{\int_0^\infty f(z)\Big(r(1-g(z/r)-z\int_0^{r/z}g(w)dw \Big)dz}.\]
Recall the payment rule in Assumption~\ref{asum :truthful}, this can be re-written as
\[\frac{B_1}{B_2}=\frac{\int_0^\infty rf(z)p_1(z/r,1)dz}{\int_0^\infty zf(z)zp_1(r/z,1)dz},\]
which finishes the proof for the budget-constrained advertisers.

Now, consider two tCPA-constrained advertisers. Following the same argument as above, we get the following from the tightness of tCPA constraints

\[T_1\int_0^\infty zf(z)g\Big(\frac{z\mu_1}{\mu_2}\Big)dz=\int_0^\infty \mu_1zf(z)g\Big(\frac{z\mu_1}{\mu_2}\Big)dz-\int_0^\infty\Big(\int_0^{\mu_1}g\Big(\frac{z}{\mu_2}y\Big)dy\Big)zf(z)dz,\]
and,
\[T_2\int_0^\infty f(z)(1-g\Big(\frac{z\mu_1}{\mu_2}\Big))dz=\int_0^\infty \mu_2f(z)(1-g\Big(\frac{z\mu_1}{\mu_2}\Big))dz-\int_0^\infty\int_0^{\mu_2}g\Big(\frac{y}{\mu_1z}\Big)dyf(z)dz.\]
By dividing both sides of the equations we get the desired results.
\end{proof}
Now, to prove the main theorem, we need to show that the values accrued by advertisers is monotone in $\mu_1/\mu_2$.
\begin{claim}
Let $\mu_i$ be the optimal bidding multiplier for advertiser $i$. Given the assumptions in Theorem \ref{thm: truthful}, the value obtained by advertiser $1$ is increasing in $r=\frac{\mu_1}{\mu_2}$.
\end{claim}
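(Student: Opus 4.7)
The plan is to express the value $V_1$ obtained by advertiser 1 as an integral and then exploit the monotonicity of the allocation function $g$ shown just before the claim. Recall that, by scale invariance and symmetry, the allocation $x_1(b_1,b_2) = g(b_1/b_2)$ depends only on the ratio of the bids, and that truthfulness forces $g$ to be non-decreasing on $\mathbb{R}^+$. Using Claim~\ref{claim: uniform bididng}, each auto-bidder bids $b_a(q) = \mu_a v_a(q)$, so the ratio of bids on query $q$ equals $r\cdot h(q)$, where $h(q) = v_1(q)/v_2(q)$ and $r=\mu_1/\mu_2$.

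Concretely, I would write
\[
V_1(r) \;=\; \int_0^1 g\!\left(\tfrac{\mu_1 v_1(q)}{\mu_2 v_2(q)}\right) v_1(q)\,dq \;=\; \int_0^1 g\bigl(r\cdot h(q)\bigr)\,v_1(q)\,dq.
\]
Then, perform the same change of variable $z = h(q)$ used in the proof of Theorem~\ref{thm: equilibrium myerson}. Since $v_1(q) = h(q) v_2(q)$ and $dq = dz/h'(h^{-1}(z))$, this yields
\[
V_1(r) \;=\; \int_0^\infty g(rz)\,z\,f(z)\,dz,
\]
with $f(z) = v_2(h^{-1}(z))/h'(h^{-1}(z))$ as defined in the theorem statement.

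At this point the claim follows directly: for every fixed $z>0$ the map $r \mapsto g(rz)$ is non-decreasing because $g$ is non-decreasing, and the integrand $z f(z) \geq 0$. Taking a pointwise increase under the integral (dominated by the constant-$r$ case for any compact subinterval of $r$-values, since $g \leq 1$ and $\int z f(z)\,dz$ is finite in equilibrium) gives $V_1(r') \geq V_1(r)$ whenever $r' \geq r$.

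There is no real obstacle in this step; the only care needed is to make sure the integrals involved are finite so the pointwise monotonicity of $g(rz)$ transfers to the integral. This is automatic in equilibrium because $V_1(r) = \int g(rz)\,z f(z)\,dz$ is bounded above by the total value $\int z f(z)\,dz = \int_0^1 v_1(q)\,dq$, which is finite by the standing integrability assumption on $v_1$. Thus the argument is essentially a one-line application of the monotonicity of $g$ after the same change of variables used in the previous theorem.
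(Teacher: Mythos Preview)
Your proposal is correct and follows exactly the same approach as the paper: both express the value as $V_1(r)=\int_0^\infty g(rz)\,z f(z)\,dz$ via the change of variable $z=h(q)$ and then invoke the monotonicity of $g$. You simply spell out the change of variables and the integrability justification that the paper leaves implicit.
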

\begin{proof}{Proof.}
Following the proof of Theorem~\ref{thm: equilibrium myerson} we can write the value obtained by advertiser $i$ as 
\[V_1(B_1,B_2)=\int_0^\infty f(z)zg(rz)dz,\]
where $r$ is the answer to the implicit function stated in  Theorem~\ref{thm: equilibrium myerson}. Monotonicity of $V_1(B_1,B_2)$ as a function of $r$ follows from the fact that $g$ is a monotone function.
\end{proof}

\end{appendix}

\end{document}